\newcommand\arXiv[1]{\href{http://arxiv.org/abs/#1}{\nolinkurl{arXiv:#1}}}
\newcommand\MRnumber[1]{\href{http://www.ams.org/mathscinet-getitem?mr=#1}{\nolinkurl{MR#1}}}
\newcommand\DOI[1]{\href{http://dx.doi.org/#1}{\nolinkurl{DOI:#1}}}
\newcommand\MAILTO[1]{\href{mailto:#1}{\nolinkurl{#1}}}
\newtheorem{theorem}[subsection]{Theorem}
\newtheorem{conjecture}[subsection]{Conjecture}
\newtheorem{lemma}[subsection]{Lemma}
\newtheorem{proposition}[subsection]{Proposition}
\newtheorem{corollary}[subsection]{Corollary}
\theoremstyle{definition} 
\newtheorem{definitionnodiamond}[subsection]{Definition}
\newtheorem{examplenodiamond}[subsection]{Example}
\newtheorem{remarknodiamond}[subsection]{Remark}
\newtheorem{warningnodiamond}[subsection]{Warning}
\newenvironment{definition}{\begin{definitionnodiamond}}{\hfill\ensuremath\Diamond\end{definitionnodiamond}}
\newenvironment{example}{\begin{examplenodiamond}}{\hfill\ensuremath\Diamond\end{examplenodiamond}}
\newenvironment{remark}{\begin{remarknodiamond}}{\hfill\ensuremath\Diamond\end{remarknodiamond}}
\renewcommand\qedhere{\qed}
\newcommand\diamondhere{\hfill\ensuremath\Diamond}
\numberwithin{equation}{subsection}
\crefname{section}{Section}{Section}
\crefname{conjecture}{Conjecture}{Conjectures}
\crefname{definition}{Definition}{Definitions}
\crefname{definitionnodiamond}{Definition}{Definitions}
\crefname{example}{Example}{Examples}
\crefname{examplenodiamond}{Example}{Examples}
\crefname{remark}{Remark}{Remarks}
\crefname{remarknodiamond}{Remark}{Remarks}
\crefname{warning}{Warning}{Warnings}
\crefname{warningnodiamond}{Warning}{Warnings}
\crefname{lemma}{Lemma}{Lemmas}
\crefname{proposition}{Proposition}{Propositions}
\crefname{corollary}{Corollary}{Corollaries}
\crefname{theorem}{Theorem}{Theorems}
\crefname{equation}{}{}
\tikzset{
    dot/.style={circle,draw,fill,inner sep=1pt},
    arrow/.style={->,thick,shorten <=2pt,shorten >=2pt},
    twoarrow/.style={double,double distance=1.5pt,shorten <=9pt,shorten >=10pt,decoration={markings,mark=at position -8pt with {\arrow[scale=2]{>}}},preaction={decorate}},
    twoarrowlonger/.style={double,double distance=1.5pt,shorten <=5pt,shorten >=6pt,decoration={markings,mark=at position -4pt with {\arrow[scale=2]{>}}},preaction={decorate}},
    twoarrowshorter/.style={double,double distance=1.5pt,shorten <=13pt,shorten >=14pt,decoration={markings,mark=at position -12pt with {\arrow[scale=2]{>}}},preaction={decorate}},
    twoarrowshorthead/.style={double,double distance=1.5pt,shorten <=9pt,shorten >=20pt,decoration={markings,mark=at position -18pt with {\arrow[scale=2]{>}}},preaction={decorate}},
    threearrowpart1/.style={ thick,double,double distance=3pt,shorten <=9pt,shorten >=11pt},
    threearrowpart2/.style={ thick,shorten <=9pt,shorten >=10pt},
    threearrowpart3/.style={ shorten <=9pt,shorten >=10pt,decoration={markings,mark=at position -8pt with {\arrow[scale=3]{>}}},preaction={decorate}},
}
\newcommand\define[1]{\emph{#1}}
\newcommand\cat[1]{\textsc{#1}}
\newcommand\bC{\mathbb C}
\newcommand\bE{\mathbb E}
\newcommand\bK{\mathbb K}
\newcommand\bN{\mathbb N}
\newcommand\bP{\mathbb P}
\newcommand\bR{\mathbb R}
\newcommand\bZ{\mathbb Z}
\newcommand\cA{\mathcal A}
\newcommand\cB{\mathcal B}
\newcommand\cC{\mathcal C}
\newcommand\cD{\mathcal D}
\newcommand\cE{\mathcal E}
\newcommand\cF{\mathcal F}
\newcommand\cG{\mathcal G}
\newcommand\cH{\mathcal H}
\newcommand\cM{\mathcal M}
\newcommand\cO{\mathcal O}
\newcommand\cP{\mathcal P}
\newcommand\cS{\mathcal S}
\newcommand\cX{\mathcal X}
\newcommand\cY{\mathcal Y}
\newcommand\cZ{\mathcal Z}
\newcommand\unit{\mathds{1}}
\DeclareFontFamily{U}{min}{}
\DeclareFontShape{U}{min}{m}{n}{<-> udmj30}{}
\renewcommand\qedhere{\hfill\ensuremath\Box}
\DeclareMathOperator\End{End}
\DeclareMathOperator\Ann{Ann}
\newcommand\id{\mathrm{id}}
\DeclareMathOperator\EW{EW}
\newcommand\op{\mathrm{op}}
\newcommand\st{\text{ s.t.\ }}
\newcommand\sminus\smallsetminus
\newcommand\from\leftarrow
\newcommand\isom{\overset\sim\to}
\newcommand\mono{\hookrightarrow}
\newcommand\pt{\{\mathrm{pt}\}}
\title{Heisenberg-picture quantum field theory}
\author{Theo Johnson-Freyd}
\email{\MAILTO{theojf@pitp.ca}}
\address{Perimeter Institute for Theoretical Physics, Waterloo, ON, Canada}
\thanks{I would like to thank Alexandru Chirvasitu, Owen Gwilliam, and Claudia Scheimbauer for many ongoing discussions about these and related ideas, to the referee for their helpful comments, and to Stephan Stolz and Peter Teichner for sharing  the construction developed in \cref{factorization algebra example}. Most importantly, my thanks go to Kolya Reshetikhin, whose work on quantum field theory and ribbon categories (e.g.\ \cite{MR1036112,MR1091619,KolyaLong,MR2648332}) inspired this paper.
This work is supported by the grant DMS-1304054.}
\begin{document}

\maketitle

\begin{center}
\emph{For Kolya Reshetikhin on the occasion of his 60th birthday.}
\end{center}

\begin{abstract}
What we should mean by ``Heisenberg-picture quantum field theory''?  Atiyah--Segal-type axioms do a good job of capturing the ``Schr\"odinger picture'': these axioms define a ``$d$-dimensional quantum field theory'' to be a symmetric monoidal functor from an $(\infty,d)$-category of ``spacetimes'' to an $(\infty,d)$-category which at the second-from-top level consists of vector spaces, so at the top level consists of numbers.  This paper argues that the appropriate parallel notion ``Heisenberg picture'' should also be defined in terms of symmetric monoidal functors from the category of spacetimes, but the target should be an $(\infty,d)$-category that in top dimension consists of pointed vector spaces instead of numbers; the second-from-top level can be taken to consist of associative algebras or of pointed categories.
The paper ends by outlining two sources of such Heisenberg-picture field theories: factorization algebras and skein theory.
\end{abstract}

\section{Introduction and motivation}

Open the nearest book called \emph{Introduction to Quantum Mechanics} --- you probably have one lying around somewhere.
Almost certainly somewhere in it is a discussion of the two so-called ``pictures'' of quantum mechanics, named after Erwin Schr\"odinger and Werner Heisenberg.  The ``Schr\"odinger picture'' has a natural generalization to quantum field theory via Atiyah--Segal-type axioms.  My goal in  this paper is to motivate, propose, and expound upon axioms for a ``Heisenberg picture'' of quantum field theory.  
Let us, then, begin by learning what we can from that \emph{Quantum Mechanics} textbook.
 The discussion of the Schr\"odinger picture translates well into a mathematical definition:

\begin{definitionnodiamond}\label{defn.SchrodingerQM}
  A \define{Schr\"odinger-picture quantum mechanical system} consists of the following data:
  \begin{enumerate}
    \item A vector space $V$ (over some ground field $\bK$) called \define{the space of states}.
    \item For each ``time'' $t\in \bR_{>0}$, a linear map $u_{t} : V \to V$ called \define{time evolution}.  These should satisfy a \define{group law} $u_{t_{1} + t_{2}} = u_{t_{1}}\circ u_{t_{2}}$.
    \item Some distinguished \define{states} $v_{i} \in V$ depending on some labeling set $I$ which our laboratory friends know how to prepare, and some distinguished \define{costates} $w_{j} : V \to \bK$ that we know how to ``post-pare'' (or is it ``post-pair''?).  There may also be some distinguished \define{observables} $a_{k} \in \End(V)$ that we know how to test by making some manipulation while the experiment runs. \diamondhere
  \end{enumerate}
\end{definitionnodiamond}

\begin{remark}
The axioms for a group (a set with a binary operation satisfying~\dots) don't communicate what questions one might want to know about a group.  Moreover, the axioms for a group are often too liberal: groups ``in nature'' are usually Lie or algebraic or finite or hyperbolic or otherwise more structured than the axioms would suggest.  Similarly, Schr\"odinger-picture quantum mechanical systems are usually defined over $\bC$ specifically, the spaces of states are usually Hilbert spaces, time evolution operators are usually unitary,  the sets of distinguished states and costates usually agree, and observables are usually self-adjoint.  

But none of these extra conditions are what that textbook of yours claims is the essence of quantum mechanics, which is the linearity illustrated by the two-slit experiment.  So I won't put it in the definition.  Indeed, Hilbert structures and unitarity are closely related to time-reversal symmetry (recent discussions are available in \cite{FreedHopkins,MR3623677}); our axiomatics should accommodate non-symmetric examples as well.
General questions one might want to ask about a Schr\"odinger-picture quantum mechanical system include the spectrum of  time evolution and the (perhaps asymptotic) values of various compositions of the data.
\end{remark}

\begin{remark}\label{remark.projective}
The discussion of the Schr\"odinger picture in that textbook of yours probably also includes an important comment emphasizing that the space $V$ itself isn't ``physical'': only its projectivization $\bP V = (V \sminus \{0\}) / \bK^{\times}$ is.  It is often tempting to ignore this comment.
\end{remark}

The discussion of the Heisenberg picture begins by emphasizing that in physics, what's most physical are the \define{observables} in a given system, and that in quantum mechanics these form an associative but generally noncommutative algebra.  (Just like how physically-interesting Schr\"odinger-picture quantum mechanics deals with Hilbert spaces rather than general vector spaces, physically-interesting Heisenberg-picture quantum mechanics deals with algebras equipped with extra structure --- C-star or von Neumann, for example --- but we should not build such structure into the basic axiomatics.)  Let us recall the main examples and then try to extract a definition:

\begin{example}\label{eg.heis-quantum}
  Let $(V, u_{t},\dots)$ be a Schr\"odinger-picture quantum mechanical system such that the time evolution operators $u_{t}$ are all isomorphisms.  The corresponding \define{Heisenberg-picture quantum mechanical system} consists of the following data:
  \begin{enumerate}
    \item The associative algebra $A = \End(V)$ of all endomorphism of $V$ is the \define{algebra of observables} of the system.
    \item For each $t\in \bR_{>0}$,  \define{evolution for time $t$} is encoded by the ``conjugate by $u_{t}$'' algebra automorphism $a \mapsto u_{t}au_{t}^{-1}$.
    \item The distinguished observables $a_{k} \in A$ already do not reference $V$.  
    To remove $V$ from the data of the distinguished states and costates, we can encode them  as ideals: given $v_{i} \in V$, remember the left ideal $\Ann(v_{i}) = \{a\in A \st av_{i} = 0\}$; for $w_{j}: V\to\bK$, use the right ideal $\Ann(w_{j}) = \{a\in A \st w_{j}\circ a = 0\}$. 
  \end{enumerate}
  This construction is discussed in~\cite{MR2530165}, which does not answer the question of what to do when the operators $u_t$ are not isomorphisms.  I will propose an answer in \cref{prop.end}.
\end{example}

\begin{remark} \label{remark.direction conventions}
  Some contravariance in the constructions later in this paper is unavoidable.  Just to fix a convention, let's declare (as is most standard) that $\End(V)$ acts on $V$ from the left.  For a general category $\cC$ and object $C\in \cC$ therein, the algebra $\End_\cC(C)$ of endomorphisms of $C$ has multiplication $fg = f\circ g = (C \overset g \to C \overset f \to C)$.  My hope is that the reader will largely ignore left/right questions.
\end{remark}

This example illustrates already one feature of the Heisenberg picture: it solves the problem from \cref{remark.projective} that only $\bP V$ is physical, not $V$.  Indeed, multiplying $u_{t}$ by a non-zero constant does not change the conjugation map $a \mapsto u_{t}au_{t}^{-1}$, and similarly the ideals $\Ann(v_{i})$ and $\Ann(w_{j})$  depend only on the lines spanned by $v_{i}$ and $w_{j}$.  Moreover, in many situations a vector or Hilbert space $V$ can be recovered up to non-unique isomorphism from the algebra $\End(V)$; the ambiguity in recovering $V$ is exactly the group of invertible scalars, so that $\End(V)$ and $\bP V$ encode exactly the same information.

The following example illustrates the other main feature of the Heisenberg picture: it is flexible enough to accommodate classical as well as quantum mechanical systems (and systems intermediate between these extremes, corresponding to non-commutative algebras that are not as fully-noncommutative as $\End(V)$):

\begin{examplenodiamond}\label{eg.heis-classical}
  A \define{classical mechanical system} consists of a symplectic manifold $X$ and a family of symplectomorphisms $\{\varphi_{t} : X \to X\}_{t\in \bR_{>0}}$ satisfying a group law, and perhaps some other distinguished data.  Such a system can be encoded in the Heisenberg picture as follows:
  \begin{enumerate}
    \item The \define{algebra of observables} is the commutative algebra $\cO(X)$ of smooth functions on $\cM$.
    \item \define{Time evolution} is encoded by the algebra isomorphisms $f_{t}=\varphi_{t}^{*} : \cO(X) \to \cO(X)$.
    \item One type of distinguished data might be submanifolds of ``boundary conditions.''  These can be encoded by ideals in   $\cO(X)$. \diamondhere
  \end{enumerate}
\end{examplenodiamond}

\Cref{eg.heis-quantum,eg.heis-classical} suggest the following definition of the Heisenberg picture:

\begin{definitionnodiamond}[Tentative] \label{defn.heisenberg-tentative}
  A \define{Heisenberg-picture quantum mechanical system} consists of:
  \begin{enumerate}
    \item An associative \define{algebra of observables} $A$.
    \item For each time $t \in \bR_{>0}$, a \define{time evolution} homomorphism $f_{t} : A \to A$.  These should satisfy a \define{group law} $f_{t_{1}+t_{2}} = f_{t_{1}}\circ f_{t_{2}}$.  
    \item Some distinguished right ideals encoding initial states/boundary conditions, some distinguished left ideals encoding terminal costates/boundary conditions, and some distinguished elements of $A$ encoding available experimental manipulations.\diamondhere
  \end{enumerate}
\end{definitionnodiamond}

\cref{defn.heisenberg-tentative} is only tentative, and will be revised in the next section.  For instance, it is not clear how (or whether?)\ to accommodate Schr\"odinger-picture systems with non-invertible time evolution.

\section{Modulation}

The Schr\"odinger picture of \cref{defn.SchrodingerQM} generalizes naturally to an Atiyah--Segal style axiomatics for quantum field theory.  Indeed, 
\cref{defn.SchrodingerQM}, and in particular the group law, already defines quantum mechanical systems as a type of functor:

\begin{definition}\label{defn.qmspacetimes}
  Let $I,J,K$ be label sets.
  The category $\cat{QMSpacetimes}$ of \define{quantum mechanical spacetimes} with ``point defect'' label sets $I,J,K$ has:
  \begin{description}
    \item[Objects] Two objects, called $\pt$ and $\emptyset$.
    \item[Generating morphisms] For each $t\in \bR_{>0}$, there is a morphism $u_{t} : \pt \to \pt$, which can be thought of as a metrized interval of length $t$.  These are required to satisfy the group law relation $u_{t_{1}}\circ u_{t_{2}} = u_{t_{1}+t_{2}}$.
    
    There are also morphisms $v_{i} : \emptyset \to \pt$ for each $i\in I$, $w_{j} : \pt \to \emptyset$ for each $j\in J$, and $a_{k} : \pt \to \pt$, called \define{point defects}.  We impose no relations on these.
  \end{description}
  The full category then consists of compositions of the generating morphisms (modulo the group law relation).
\end{definition}

Let $\cat{Vect}$ denote the category of $\bK$-vector spaces and linear maps.  \cref{defn.SchrodingerQM} can then be rephrased to state that a Schr\"odinger-picture quantum mechanical system is a functor $V : \cat{QMSpacetimes} \to \cat{Vect}$ along with an isomorphism $V(\emptyset) \cong \bK$.

\begin{remark}\label{remark.strongpointedfunctor}
  $\cat{Vect}$ and $\cat{QMSpacetimes}$ are examples of \define{pointed categories}: categories equipped with distinguished objects (namely $\bK \in \cat{Vect}$ and $\emptyset \in \cat{QMSpacetimes}$).  Then $V$ is precisely a \define{strong pointed functor} from $\cat{Vect}$ to $\cat{QMSpacetimes}$ (compare \cref{lax pointed map}).  
  
  One can alternately freely generate a symmetric monoidal category from $\cat{QMSpacetimes}$ under the condition that $\emptyset$ becomes the monoidal unit.  Calling the monoidal structure $\sqcup$, the objects of this freely-generated category are finite sets --- formal disjoint unions of copies of $\pt$ --- and its morphisms are disjoint unions of directed metrized intervals with ``point defects'' labeled by elements of $I$, $J$, and $K$.
\end{remark}

This functorial point of view can be applied to higher-dimensional spacetimes:

\begin{definition}[\cite{MR1001453,MR2079383}]
  Given a geometry $\cG$, write $\cat{Spacetimes}_{d-1,d}^\cG$ for the symmetric monoidal category whose objects are 
  $(d-1)$-dimensional manifolds equipped with geometry of type $\cG$ and whose morphisms are isomorphism types of $d$-dimensional cobordisms equipped with geometry of type $G$.     
  A \define{Schr\"odinger-picture ($d$-dimensional, non-extended, for geometry $\cG$) quantum field theory} is a symmetric monoidal functor $\cat{Spacetimes}_{d-1,d}^\cG \to \cat{Vect}$.
\end{definition}

\begin{remark}
The details of, and potential difficulty in constructing, the category $\cat{Spacetimes}_{d-1,d}^\cG$ depends on the choice of geometry $\cG$.
The notion of ``geometry'' should be interpreted quite liberally: it could include metrics, background fields, labeled defects, etc.  For some discussion and examples of geometric cobordism categories, see \cite{AyalaThesis,MR2648332,MR2742432}.  When $\cG$ is sufficiently topological, $\cat{Spacetimes}_{d-1,d}^\cG$ can be built from the category $\cat{Bord}_{d-1,d}^{\mathrm{smooth}}$ of smooth but otherwise unstructured spacetimes as in \cite[Section 3.2]{Lur09}, elaborated upon in \cite{MR3623677}.
\end{remark}

Ignoring for the moment the boundary-condition ideals and distinguished manipulations (item~(3) in \cref{defn.heisenberg-tentative}), a Heisenberg-picture quantum mechanical system is also a functor: its source is the category $\cat{QMSpacetimes}$ of metric segments from \cref{defn.qmspacetimes}, and its target is the category $\cat{Alg}^{\mathrm{homo}}$ whose objects are associative algebras and whose morphisms are algebra homomorphisms (we will soon introduce a different category called $\cat{Alg}$, hence the name for this one).  Let $\cat{Vect}^{\mathrm{inv}}$ denote the maximal subgroupoid of $\cat{Vect}$: it has the same objects, but morphisms must be isomorphisms.  There is a functor $\End : \cat{Vect}^{\mathrm{inv}} \to \cat{Alg}^{\mathrm{homo}}$ which sends a vector space $V$ to its endomorphism algebra $\End(V)$ and an isomorphism $f : V \to W$ to  the algebra homomorphism $a \mapsto faf^{-1} : \End(V) \to \End(W)$.  
\cref{eg.heis-quantum} then consists of taking in a Schr\"odinger-picture quantum mechanical system $V : \cat{QMSpacetimes} \to \cat{Vect}^{\mathrm{inv}}$ and producing the composition $\cA = {\End} \circ V : \cat{QMSpacetimes} \to \cat{Alg}^{\mathrm{homo}}$.
This functor is used in \cite{MR2530165} to turn Schr\"odinger-picture quantum field theories valued in $\cat{Vect}^{\mathrm{inv}}$ into nets of algebras in the sense of ``axiomatic'' or ``algebraic'' quantum field theory.

The question then arises: can we similarly compose arbitrary Schr\"odinger-picture quantum field theories with this $\End$ functor to produce Heisenberg-picture quantum field theories?  
My goal, of course, is to explain that the answer is ``yes.''  But the main obstruction, hinted at already in \cref{eg.heis-quantum} and after \cref{defn.heisenberg-tentative}, must be confronted: in most Schr\"odinger-picture quantum field theories, the linear maps associated to spacetimes are not invertible.
To resolve this obstruction, let us seek guidance from another example:

\begin{example}\label{eg.classicalfieldtheory}
  Typical classical field theories correspond to partial differential equations.  Let $D$ be some partial differential equation for fields on $d$-dimensional $\cG$-geometric manifolds.  Then the classical field theory $\cF$ assigns to a $(d-1)$-dimensional manifold $N$ the space $\cF(N) = \{$germs of solutions to the PDE $D$ on a small $d$-dimensional neighborhood of $N\}$ and to a $d$-dimensional cobordism $M$ the space $\cF(M) = \{$solutions to $D$ on $M\}$.  These data package into a \define{span of spaces}: to any cobordism $(M : N_{1} \to N_{2}) \in \cat{Spacetimes}^\cG_{d-1,d}$, we get the span $\cF(N_{1}) \from \cF(M) \to \cF(N_{2})$.
  
  The corresponding Heisenberg-picture TQFT should assign to $N$ the algebra $\cO(\cF(N))$ of functions on $\cF(N)$.  If the span $\cF(N_{1}) \from \cF(M) \to \cF(N_{2})$ were the graph of a function $f_{M} : \cF(N_{1}) \from \cF(N_{2})$, then we could associative in the Heisenberg picture the homomorphism $f_{M}^{*} : \cO(\cF(N_{1})) \to \cO(\cF(N_{2}))$.  But typically it is not the graph of a function.  Applying $\cO$ to all pieces produces a cospan of algebras $\cO(\cF(N_{1})) \to \cO(\cF(M)) \from \cO(\cF(N_{2}))$.  
  This is about as far as abstract nonsense can take us.
\end{example}

Perhaps we should follow \cref{eg.classicalfieldtheory} and expect that Heisenberg-picture TQFTs should assign to cobordisms cospans of associative algebras?  This is a tempting answer, but has a few problems.  It doesn't fully accommodate the ``boundary ideals'' from item~(3) of \cref{defn.heisenberg-tentative} --- if these were two-sided ideals, we could instead use the quotient rings, but generically they are only one-sided.  Moreover, the canonical way to compose cospans is via a push-out square.  These exist in the category of associative algebras, but are too large (about as large as a free associative algebra).  Considering further the classical field theory of \cref{eg.classicalfieldtheory}, one notes that the composition of spans of spaces, given by a pullback of spaces, corresponds to the pushout of commutative algebras, which is also the tensor product.  Given cospans of associative algebras $A_{0} \to B_{1} \from A_{1} \to B_{2} \from A_{2}$, one can give $B_{1}$ a right $A_{1}$-module structure and $B_{2}$ a left $A_{1}$-module structure, and consider the composition $B_{1} \otimes_{A_{1}} B_{2}$.  But this generically is not an associative algebra.

On the other hand, it is a module, as are ideals.  And cospans of commutative algebras are examples of bimodules.  Perhaps (bi)modules are the appropriate target of Heisenberg-picture field theory?  This will not quite be our final definition, but it is sufficiently important as to merit its own name:

\begin{definition}
  The \define{Morita} bicategory $\cat{Mor} = \cat{Mor}_{1}(\cat{Vect}_{\bK})$ over the field $\bK$ has:
  \begin{description}
    \item[Objects] Associative algebras over $\bK$.
    \item[1-morphisms] A 1-morphism between associative algebras $A$ and $B$ is an $A$-$B$-bimodule $_{A} M _{B}$.  These compose by tensor product.
    \item[2-morphisms] A 2-morphism is a homomorphism of bimodules.
  \end{description}
  This bicategory is symmetric monoidal with the usual tensor product over $\bK$ \cite{MR2534210,Shulman2010}.
  
  A \define{(non-extended) Morita-picture quantum field theory} for the spacetime category $\cat{Spacetimes}$ is a symmetric monoidal functor $\cat{Spacetimes} \to \cat{Mor}$.
\end{definition}

How can we translate a field theory in the sense of \cref{defn.heisenberg-tentative} into a Morita-picture field theory?
It's not yet clear how to accommodate the distinguished elements of $\cA$ from item~(3), but the distinguished ideals are already modules, hence easy to handle.  As for time evolution, one can always turn homomorphisms into modules:

\begin{definitionnodiamond}\label{defn.modulation}
  Let $f : A \to B$ be a homomorphism of associative algebras.  The \define{modulation} of $f$ is the $A$-$B$-bimodule $\cM(f) = {_{f} B}$, which as a left $B$-module is just $B$, and has a right $A$-action via~$f$: \\
  \mbox{} \hfill $a \triangleright b' \triangleleft b = f(a)\,b'\,b$. \diamondhere
\end{definitionnodiamond}

One may easily check that modulation defines a functor $\cM : \cat{Alg}^{\mathrm{homo}} \to \cat{Mor}$.  The name is from~\cite{MR2304628}.

\section{The point of pointings}\label{section.pointings}

Unfortunately, the modulation functor of \cref{defn.modulation}  loses too much information:

\begin{lemma}\label{lemma.morisn'tbetter}
  Let $f,g : A\to B$ be homomorphisms of associative algebras.   Then $\cM(f) \cong \cM(g) \in \cat{Mor}$ if and only if there exists  an invertible element $b\in B$ such that for each $a\in A$, $f(a) = b^{-1}\,g(a)\,b$.
\end{lemma}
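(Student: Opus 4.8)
The plan is to unwind what an isomorphism in $\cat{Mor}$ between the 1-morphisms $\cM(f) = {_f B}$ and $\cM(g) = {_g B}$ actually is: a 2-isomorphism in $\cat{Mor}$ is precisely an isomorphism of $A$-$B$-bimodules, so we must analyse $A$-$B$-bimodule isomorphisms $\phi \colon {_f B} \to {_g B}$. The essential structural point is that, in both bimodules, the underlying space is $B$, the right $B$-action is right multiplication, and only the left $A$-action differs (it is $a \cdot x = f(a)x$ in $\cM(f)$ and $a \cdot x = g(a) x$ in $\cM(g)$), following the conventions of \cref{remark.direction conventions}.

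For the ``if'' direction I would start from an invertible $b \in B$ with $f(a) = b^{-1} g(a) b$ for all $a \in A$, and simply define $\phi \colon \cM(f) \to \cM(g)$ by $\phi(x) = bx$. Right $B$-linearity is immediate from associativity of $B$; left $A$-linearity is exactly the rearranged hypothesis $b\, f(a) = g(a)\, b$; and $x \mapsto b^{-1} x$ is a two-sided inverse, so $\phi$ is an isomorphism of $A$-$B$-bimodules, giving $\cM(f) \cong \cM(g)$ in $\cat{Mor}$.

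For the ``only if'' direction, the key observation is that any right $B$-linear map out of the free rank-one right module $B$ is left multiplication by the image of the unit: setting $b := \phi(1_B)$, right $B$-linearity forces $\phi(x) = \phi(1_B \cdot x) = bx$ for every $x \in B$. Applying the same remark to a chosen inverse $\psi$ of $\phi$ and putting $b' := \psi(1_B)$, the equations $\psi\phi = \id$ and $\phi\psi = \id$ evaluated at $1_B$ yield $b' b = 1 = b b'$, so $b$ is invertible in $B$. Finally, left $A$-linearity of $\phi$ applied to the element $1_B$ gives $b\, f(a) = g(a)\, b$ for all $a$, which rearranges to $f(a) = b^{-1} g(a) b$, as desired.

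I do not expect a real obstacle here; the proof is a couple of lines in each direction. The one place that deserves mild care is the invertibility of $b$: one genuinely needs to run the ``multiplication by the unit's image'' argument through both $\phi$ and $\phi^{-1}$, rather than just observing that $\phi$ is bijective. I would also take a moment to confirm the left/right bimodule bookkeeping against \cref{defn.modulation} and \cref{remark.direction conventions}, since the conjugation element naturally lives in the algebra carrying the regular action; modulo that the computation is routine.
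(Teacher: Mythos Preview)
Your proposal is correct and follows exactly the paper's approach: the paper proves the ``if'' direction via left multiplication by $b$ and then declares the converse ``an easy exercise for the reader.'' You have simply carried out that exercise, and your argument (determining $\phi$ by $b=\phi(1_B)$ via right $B$-linearity, extracting invertibility from $\phi^{-1}$, and reading off the conjugation relation from left $A$-linearity) is the standard and expected one.
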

\begin{proof}
  Given such a $b \in B$, the map $\cM(f) \to \cM(g)$ given by multiplication by $b$ on the left is an isomorphism of $A$-$B$-bimodules.  The converse is an easy exercise for the reader.
\end{proof}

In particular, one can fully recover  from its modulation the Heisenberg-picture encoding of a classical mechanical system in the sense of \cref{eg.heis-classical}.  However, if one starts with a Schr\"odinger picture quantum mechanical system, applies \cref{eg.heis-quantum}, and then modulates the output, all information is lost: the functor $\cat{QMSpacetimes} \to \cat{Mor}$ produced in this way does not depend (up to isomorphism) on the time evolution operators $U_{t}$.

To fix this requires breaking the multiplication by $b$ in the proof of \cref{lemma.morisn'tbetter}.  A minimal way to do this is to remember one extra bit of data: which element of $\cM(f)$ corresponds to $1_{B}\in B$.  With this extra information, homomorphisms $f:A \to B$ can be recovered.  Indeed, given the $A$-$B$-bimodule $\cM(f) = {_{f}B}$ and the vector $1_{B} \in {_{f}B}$, the element $f(a) \in B$ for a given $a\in A$ is the unique solution to the equation $a\triangleright 1_{B} = 1_{B}\triangleleft f(a)$.  This suggests that we revise \cref{defn.modulation}: the output of modulation is not just a bimodule, but a \define{pointed} bimodule.  

\begin{definition}\label{defn.alg}
  The bicategory $\cat{Alg} = \cat{Alg}_{1}(\cat{Vect}_{\bK})$ of \define{algebras and pointed bimodules} has:
  \begin{description}
    \item[Objects] An object of $\cat{Alg}$ is an associative algebra over $\bK$.
    \item[1-morphisms] A 1-morphism from $A$ to $B$ is an $A$-$B$-bimodule $_{A} M _{B}$ along with a \define{pointing} $1_M \in M$.  The composition of $(_{A} M_{B},1_M\in M)$ with $(_{B} N _{C},1_N\in N)$ is the $A$-$C$-bimodule $M \otimes_{B} N$ pointed by the class of $1_M\otimes 1_N$.
    \item[2-morphisms] A 2-morphism $(_{A}M_{B},1_M\in M) \to (_{A}N_{B},1_N\in N)$ is a bimodule homomorphism $f : M \to N$ such that $f(1_M) = 1_N$.
  \end{description}
  Like $\cat{Mor}$, this bicategory is symmetric monoidal with the usual tensor product over $\bK$.
  
  There is an obvious forgetful functor $\cat{Alg} \to \cat{Mor}$ which forgets all pointings.  The modulation functor factors through it: abusing notation, we let $\cM : \cat{Alg}^{\mathrm{homo}} \to \cat{Alg}$ denote the {modulation} functor that sends a homomorphism $f: A \to B$ to the pointed bimodule $(_{f}B,1_{B})$.
\end{definition}

Consider modulating the Heisenberg-picture quantum mechanical system from \cref{eg.heis-quantum} corresponding to a Schr\"odinger-picture system in which time evolution $u_{t}$ is invertible.  By \cref{lemma.morisn'tbetter}, the modulation of $a \mapsto u_{t} a u_{t}^{-1}$ is isomorphic in $\cat{Mor}$ to the modulation of the identity $\cM(\id_{A}) = {_{A}A_{A}}$, where $A = \End(V)$.  They are not isomorphic in $\cat{Alg}$ when $u_{t}\neq 1$.  The isomorphism in $\cat{Mor}$ consists of multiplication by $u_{t}$.  It follows that:
\begin{lemma}\label{lemma.modconj}
  Given $V \in \cat{Vect}$ and $u : V \isom V$ an isomorphism, let $A = \End(V)$.  The modulation $\cM(a \mapsto uau^{{-1}})$ of conjugation by $u$ is isomorphic in $\cat{Alg}$ to the trivial bimodule $_{A}A_{A}$ pointed not by $1_{A}$ but by the element $u \in A$.\qedhere
\end{lemma}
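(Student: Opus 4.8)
The plan is to produce an explicit isomorphism of \emph{pointed} $A$-$A$-bimodules; this is really just \cref{lemma.morisn'tbetter} together with the observation that a pointing is now being carried along. Write $c_u\colon A\to A$ for the conjugation homomorphism $a\mapsto uau^{-1}$, and note at the outset that $u$ is an invertible \emph{element} of $A=\End(V)$, precisely because it is an automorphism of $V$. Unwinding \cref{defn.modulation} and \cref{defn.alg}, the modulation $\cM(c_u)$ is the bimodule whose underlying $\bK$-space is $A$, with one side carrying the tautological multiplication and the other the action twisted through $c_u$, and with pointing $1_A$; the target $_{A}A_{A}$ is $A$ with both actions tautological, and we must arrange the pointing $u$.

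First I would write down the candidate map as multiplication by $u$ on the side carrying the untwisted action (following the left/right conventions of \cref{remark.direction conventions}). This is nothing but the underlying bimodule map of the $\cat{Mor}$-isomorphism $\cM(c_u)\isom\cM(\id_A)={}_{A}A_{A}$ exhibited in the proof of \cref{lemma.morisn'tbetter}, specialized to $f=c_u$, $g=\id_A$. That it is a homomorphism of $A$-$A$-bimodules is the short computation indicated there: the identity $c_u(a)=uau^{-1}$ is exactly what is required for multiplication by $u$ to intertwine the $c_u$-twisted action with the tautological one. It is invertible with inverse given by multiplication by $u^{-1}$, since $u$ is a unit of $A$.

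Finally I would check the pointing: the map sends the distinguished vector $1_A\in\cM(c_u)$ to $u\cdot 1_A=1_A\cdot u=u$, so the same underlying isomorphism is a $2$-isomorphism in $\cat{Alg}$ from $(\cM(c_u),1_A)$ onto $({}_{A}A_{A},u)$, which is the assertion. I do not expect any real obstacle here; the only step that calls for care is the left/right bookkeeping — which side of $A$ carries the $c_u$-twist, and hence on which side one multiplies by $u$ — and the conceptual content is simply that, in contrast with \cref{lemma.morisn'tbetter}, the pointing is \emph{not} respected once $u\neq 1$ (up to scalar), which is exactly the distinction between $\cat{Alg}$ and $\cat{Mor}$ that the section is built around.
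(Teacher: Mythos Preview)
Your proposal is correct and matches the paper's own argument, which is the short paragraph immediately preceding the lemma: the isomorphism in $\cat{Mor}$ from \cref{lemma.morisn'tbetter} is multiplication by $u$, and tracking the distinguished element $1_A$ through this map yields $u$. You have also correctly identified the only point requiring care --- the left/right bookkeeping determining on which side one multiplies by $u$ (or $u^{-1}$) --- which is exactly the sort of convention-dependent detail the paper leaves to the reader.
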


This suggests that even when time evolution is not an isomorphism, we can nevertheless encode it as a pointed bimodule: the identity bimodule, pointed by time evolution.  Indeed:

\begin{proposition}\label{prop.end}
  There is a contravariant functor $\End: \cat{Vect} \to \cat{Alg}$ taking a vector space $V$ to its endomorphism algebra $\End(V)$ and taking a linear map $f \in \hom(V,W)$ to the pointed bimodule $\bigl(\hom(V,W),f\bigr)$, where $\End(V)$ and $\End(W)$ act on $\hom(V,W)$ by pre- and post-composition.
  
  This functor is symmetric monoidal when restricted to finite-dimensional vector spaces or when $\cat{Vect}$ (and, correspondingly, $\cat{Alg}$) is replaced by an appropriate category of topological vector spaces.  For example, $\End$ is symmetric monoidal if $\cat{Vect}$ is replaced by the category of Hilbert spaces and bounded operators, $\End(V)$ is topologized as a von Neumann algebra in the usual way, and $\cat{Alg}$ consists of von Neumann algebras and pointed Hilbert-space bimodules with the von Neumann tensor product. \qedhere
\end{proposition}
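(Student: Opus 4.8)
\emph{The plan} is to construct the pseudofunctor by hand, verify the two pieces of coherence data it carries (a trivial unitor and a compositor $2$-isomorphism), and then promote it to a symmetric monoidal pseudofunctor; the finiteness hypothesis (or its topological replacement) will be needed at exactly two places, each time to recognise an algebraic tensor product as ``already complete.'' On objects, put $\End(V)$ as stated. Send $f\in\hom(V,W)$ to the $\End(W)$-$\End(V)$-bimodule $\hom(V,W)$, with left and right actions by post- and pre-composition, pointed by $f$ itself; this is a $1$-morphism $\End(W)\to\End(V)$ of $\cat{Alg}$, so $\End$ is contravariant. The map $\id_V$ goes to $(\hom(V,V),\id_V)=({}_{A}A_{A},1_{A})$ with $A=\End(V)$, which is precisely the identity $1$-morphism of $A$, so the unitor is the identity. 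For the compositor, given $V\xrightarrow{f}W\xrightarrow{g}X$, I would use the composition pairing
\[
  \mu\colon\ \hom(W,X)\otimes_{\End(W)}\hom(V,W)\ \longrightarrow\ \hom(V,X),\qquad \beta\otimes\alpha\ \mapsto\ \beta\circ\alpha,
\]
which is at once a homomorphism of $\End(X)$-$\End(V)$-bimodules and carries the pointing $g\otimes f$ of $\End(g)\circ\End(f)$ to the pointing $g\circ f$ of $\End(g\circ f)$; the coherence axiom relating these $\mu$'s for triple composites is then a mechanical check.

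\emph{Invertibility of $\mu$.} This is the one point that is not formal. For finite-dimensional $V,W,X$ it follows from a dimension count: writing $\End(W)\cong M_n(\bK)$, one has $\hom(W,X)\cong(W^{*})^{\oplus\dim X}$ as a right $\End(W)$-module and $\hom(V,W)\cong W^{\oplus\dim V}$ as a left one, and $W^{*}\otimes_{\End(W)}W\cong\bK$, so the source of $\mu$ has dimension $\dim V\cdot\dim X=\dim\hom(V,X)$; since $\mu$ is visibly surjective (composites of linear maps span), it is an isomorphism. I would stress that this step genuinely uses the hypothesis: for general vector spaces $\mu$ is injective but not surjective — take $W=\bK$ with $V,X$ infinite-dimensional, where the image of $\mu$ consists only of the finite-rank maps $V\to X$ — so on all of $\cat{Vect}$ the construction is merely lax. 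Repairing this (and the parallel failure in the monoidal structure below) is exactly what the restriction to finite dimensions, or to a suitable category of topological vector spaces, accomplishes.

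\emph{Symmetric monoidal structure.} The structure maps are the evident algebra homomorphism $\End(V)\otimes_{\bK}\End(W)\to\End(V\otimes_{\bK}W)$, $a\otimes b\mapsto(v\otimes w\mapsto av\otimes bw)$, and, on $1$-morphisms, the bimodule map $\hom(V,V')\otimes_{\bK}\hom(W,W')\to\hom(V\otimes W,\,V'\otimes W')$ of the same shape, which sends $f\otimes g$ to $f\otimes g$ and hence respects pointings; with $\End(\bK)=\bK$ serving as the image of the monoidal unit, these are the components of the claimed equivalence. Both maps are always injective and, by a dimension count, bijective on finite-dimensional spaces. It then remains to check that this data obeys the coherence axioms of a symmetric monoidal pseudofunctor — compatibility with the compositors of the first paragraph, and the associativity, unit, and symmetry constraints — which is a direct verification, every map in sight being canonical. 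For the topological variants one keeps the same formulas with $\otimes$ and $\hom$ read in the appropriate completed sense; the content is then the operator-algebra input that makes the analogues of $\mu$ and of $\End(V)\bar\otimes\End(W)\to\End(V\otimes W)$ into isomorphisms — for von Neumann algebras, the commutation theorem for the von Neumann tensor product together with the fact that $B(V,W)$ is a form of the standard bimodule, so that fusion of such bimodules computes composition of operators.

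\emph{Main obstacle.} The bookkeeping of unitors, compositors and coherence cells is routine; the real difficulty is reconciling the infinite-dimensional case, where the algebraic objects $\End(V)\otimes\End(W)$ and $\hom(W,X)\otimes_{\End(W)}\hom(V,W)$ are strictly too small. One must therefore either remain in finite dimensions or produce a category of topological vector spaces in which the relevant completed tensor products are computed by the von Neumann algebra theorems above and which itself closes up into a symmetric monoidal target. Exhibiting such a category and verifying it has the needed closure is the step I expect to require genuine care, and it is where the appeal to operator algebra theory is unavoidable.
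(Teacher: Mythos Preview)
Your proposal is correct and in fact goes well beyond the paper's own treatment: the paper simply writes ``We leave checking details to the reader'' and offers no argument, so your explicit construction of the unitor, the compositor $\mu$, and the monoidal structure maps, together with the dimension counts establishing their invertibility in finite dimensions, \emph{is} the proof the paper omits.

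One point worth flagging: your counterexample (take $W=\bK$ with $V,X$ infinite-dimensional, so that $\hom(W,X)\otimes_{\End(W)}\hom(V,W)\cong X\otimes_{\bK} V^{*}$ maps only onto the finite-rank operators in $\hom(V,X)$) shows that the compositor $\mu$ fails to be invertible already on general $\cat{Vect}$, so $\End$ is only a \emph{lax} functor there rather than a pseudofunctor. The paper's statement asserts a ``contravariant functor $\End:\cat{Vect}\to\cat{Alg}$'' on all of $\cat{Vect}$ and reserves the finiteness hypothesis only for symmetric monoidality; your analysis shows that the same hypothesis is already needed for the compositor to be a $2$-isomorphism. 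This is a sharpening of the paper's claim rather than a defect in your argument --- you have correctly located where the hypothesis enters, and it enters one step earlier than the paper's phrasing suggests.
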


\begin{remark}
  The functor $\End$ is contravariant because, following \cref{remark.direction conventions}, $\hom(V,W)$ carries a \emph{left} action by $\End(W)$ and a \emph{right} action by $\End(V)$.
\end{remark}

We leave checking details to the reader.  \cref{lemma.modconj} assures that the functor $\End$ extends to all of $\cat{Vect}$ the composition ``conjugate, then modulate'' from $\cat{Vect}^{\mathrm{inv}}$ via $\cat{Alg}^{\mathrm{homo}}$ to $\cat{Alg}$.  With \cref{prop.end} in place, we define:

\begin{definition}
A \define{(non-extended, affine) Heisenberg-picture quantum field theory} is a symmetric monoidal functor $\cat{Spacetimes} \to \cat{Alg}$.
\end{definition}

\begin{example}
  Any Schr\"odinger-picture quantum field theory $\cZ : \cat{Spacetimes} \to \cat{Vect}$ determines a Heisenberg-picture quantum field theory $\End\circ \cZ : \cat{Spacetimes} \to \cat{Alg}$ via \cref{prop.end}, provided the values of the functor $\cZ$ are in some subcategory (e.g.\ finite-dimensional vector spaces or Hilbert spaces) for which the functor $\End : \cat{Vect} \to \cat{Alg}$ is symmetric monoidal.
\end{example}

\begin{example}
  Let $\cat{Spans}$ denote the category of spans of spaces.  Then a \define{classical field theory} as in \cref{eg.classicalfieldtheory} is a symmetric monoidal functor $\cF : \cat{Spacetimes} \to \cat{Spans}$.  Any such classical field theory determines a Heisenberg-picture  quantum field theory $\cO\circ \cF : \cat{Spacetimes} \to \cat{Alg}$, where $\cO(X)$ is the algebra of functions on the space $X$.  When $N$ is an object of $\cat{Spacetimes}$ (i.e.\ a $(d-1)$-dimensional manifold with appropriate geometry) we regard $\cO(\cF(N))$ as an associative algebra.  When $M$ is a morphism in $\cat{Spacetimes}$ (i.e.\ a $d$-dimensional manifold with appropriate geometry) we regard $\cO(\cF(M))$ just as a pointed vector space (pointed by its unit element $1_{\cO(\cF(M))}$).
\end{example}

Finally we can update 
 \cref{defn.heisenberg-tentative}:
 \begin{definition}
  A \define{Heisenberg-picture quantum mechanical system} is a Heisenberg-picture quantum field theory for the spacetime category $\cat{QMSpacetimes}$ of one-dimensional spacetimes with point defects from \cref{defn.qmspacetimes} (or, rather, the symmetric monoidal envelope thereof).  
  
  Given a system as in \cref{defn.heisenberg-tentative}, we define the contravariant symmetric monoidal functor $\cH : \cat{QMSpacetimes} \to \cat{Alg}$ by declaring:
  \begin{enumerate}
    \item $\cH(\pt) = A$.  $\cH(\emptyset) = \bK$ by symmetric monoidality.
    \item $\cH(u_{t}) = \cM(f_{t})$ is the modulation of time evolution.
    \item Given a distinguished experimental manipulation $a \in A$, the corresponding point defect is sent to $(_{A}A_{A},a)$, the identity bimodule pointed by $a$.  Given a distinguished left ideal ${_{A}I} \subseteq {_{A}A}$, the corresponding point defect is sent to the left module $A/I$, pointed by the class of $1_{A}$.  Given a distinguished right ideal $J_{A} \subseteq A_{A}$, the corresponding point defect is sent to the right module $J\backslash A$, pointed by the class of $1_{A}$.
  \end{enumerate}
  The main thing to note is the treatment of the distinguished ideals and elements in item~(3): all determine pointed (bi)modules.
  \end{definition}

\begin{remark}
  I have discussed pointed modules as a way of assigning algebras of quantum observables to codimension-one spaces without losing too much information about the algebra.  But pointed modules themselves have a direct interpretation in quantum field theory as the ``purely algebraic part'' of path integrals.  
  
  Indeed, suppose we are given an $n$-dimensional affine variety $X$ of ``field configurations over $M$'' along with a polynomial ``action'' functional $s \in \cO(X)$ and a volume form on $X$.  The Feynman path integral invites us to consider the values of ``oscillating'' integrals $\langle f\rangle = \int_X f\,e^s\,\mathrm{dVol}$ for polynomial ``observables'' $f$.  This integral is insufficiently defined: to define it requires choosing a contour in $X$ along which integrals against the measure $e^s\,\mathrm{dVol}$ converge; up to homotopies of contours that leave all integrals unchanged, the space of contours is parameterized by the relative cohomology group $\mathrm H_n(X;\{\Re(s)\ll 0\})$.
  
  The purely algebraic part of integration is the calculation of the class of $f$ in the quotient $\cO(X) / (\text{total derivatives})$, as the integrals of total derivative vanish on any contour.  This vector space is  naturally pointed by the class of $1$, and is isomorphic (via multiplication by $\mathrm{dVol}$) to the $n$th cohomology group of the twisted de Rham complex for $s$.   In good situations,  the class of $f$ in $\cO(X) / (\text{total derivatives})$ (or of $f\,\mathrm{dVol}$ in the twisted de Rham complex) can be calculated using homotopy algebra \cite{Johnson-Freyd2012}.  In a general Heisenberg-picture quantum field theory $\cZ$, the pointed vector space $\cZ(M)$ assigned to a closed top-dimensional manifold $M$ can be interpreted as ``$\cO(X) /  (\text{total derivatives})$'' for an ill-defined path integral.
\end{remark}

\section{Non-affine field theory}

Do you still have that \emph{Introduction to Quantum Mechanics} textbook?  In one of its more philosophical sections, it is likely to discuss the following basic premise of experimental science: the universe consists only of things that are in principle measurable; if no experiment can distinguish two states, then those states are equal.  The tautologous version of this philosophy is the mathematicians' Yoneda lemma.  But there is a non-tautologous version, which asserts that by ``measurement'' and ``experiment'' we should mean ``element of the algebra of observables.''  For classical phase spaces, for example, the non-tautologous version asserts that points can be separated by real-valued functions --- that all spaces are \define{affine} in the sense of algebraic geometry.  This assertion is true for many types of spaces (smooth manifolds; locally compact Hausdorff spaces) but by no means all spaces: there are many roles in physics for non-affine
schemes and stacks.

A piece of philosophy that probably is not in your textbook is that ``most of 0-algebraic geometry is 1-affine.''  Said another way, although schemes and stacks usually are not determined by their algebras (0-categories) of global functions, they are often determined by their symmetric monoidal (1-)categories of quasicoherent sheaves of modules --- such a category should be understood as ``the algebra of $\cat{Vect}$-valued global functions.''  
(In general, one could call a space  \define{$k$-affine} if it is determined by its symmetric monoidal $k$-category of global maps to the $k$-categorical analogue of $\cat{Vect}$.)
Most algebrogeometric objects one comes across are known to be 1-affine~\cite{Lur09,MR2669705,Brandenburg2011,MR3097055,MR3144607,MR4009673,MR3361309}; \cite{MR4009673} also records a few objects which are known to be non-1-affine.  

It's not my intention to develop here the theory of 1-affine algebraic geometry, but it's worth making a few remarks.  The Gabriel--Rosenberg theorem~\cite{MR0232821,MR1615928} reconstructs a scheme up to isomorphism from its category of quasicoherent modules with no extra structure: only the category itself is used.  Here is a baby case of this result:
let $A$ be a commutative ring and $\cat{Mod}_{A}$ the category of right $A$-modules.  Then $A$ can be reconstructed as the algebra of natural endomorphisms of the identity functor $\id : \cat{Mod}_{A} \to \cat{Mod}_{A}$.

  This has suggested to many workers in ``noncommutative algebraic geometry'' that ``abelian category'' is a good definition of ``noncommutative scheme'' (e.g.~\cite{MR1622759}).  I would argue, however, that this misunderstands the variability of categories.  Consider, for example, the stacks $\pt \sqcup \pt$ and $\pt / (\bZ/2)$, the latter being the classifying stack of the group $\bZ/2$.  Provided we work over a ring in which $2$ is invertible, these have equivalent categories of modules.  But they are honestly different as stacks.  To fully recover $\pt \sqcup \pt$ and $\pt / (\bZ/2)$ from their categories of modules it suffices to remember additionally the symmetric monoidal structures on those categories.  Moreover, the natural homomorphisms of abelian categories are the exact functors, but these do not have direct geometric meaning as morphisms of schemes.  Thus the papers~\cite{MR0232821,MR1615928} do not reconstruct a scheme \emph{functorially} from its category of modules.  For comparison, the papers~\cite{Lur09,Brandenburg2011,MR3097055,MR3144607,MR4009673,MR3361309} do provide functorial reconstruction of various algebrogeometric objects by remembering their module categories' symmetric monoidal structures and demanding that functors be symmetric monoidal.
  
By a similar token, an associative algebra $A$ is not determined by the equivalence type of the category $\cat{Mod}_{A}$, which encodes only the class of $A$ in the Morita bicategory $\cat{Mor}$.  But $A$ can be recovered if $\cat{Mod}_{A}$ is equipped with a \define{pointing} --- a distinguished object --- in $\cat{Mod}_{A}$, namely the rank-one free module $A_{A}$.  I therefore propose:

\begin{definition}\label{defn.noncomstack}
  A \define{noncommutative 1-affine stack $X$ over $\bK$} is a $\bK$-linear cocomplete category $\cat{Qcoh}({X})$ equipped with a distinguished object $\unit_X = \cO_{X} \in \cat{Qcoh}(X)$.  
  
  Let $\cat{Cocomp}_{\bK}$ denote the bicategory whose objects are $\bK$-linear cocomplete categories, whose 1-morphisms are cocontinuous $\bK$-linear functors, and whose 2-morphisms are natural transformations.  It is symmetric monoidal for a version of Deligne's tensor product $\boxtimes$; the monoidal unit is $\cat{Vect}$ \cite[Section 6.5]{MR2177301}.  Noncommutative 1-affine stacks are the objects of a bicategory which we will call $\cat{Alg}_{0}^{\mathrm{lax}}(\cat{Cocomp}_{\bK})$.  It has:
  \begin{description}
    \item[Objects] An object of  $\cat{Alg}_{0}^{\mathrm{lax}}(\cat{Cocomp}_{\bK})$ is a noncommutative stack, i.e.\ a pair $(\cC,\unit_\cC\in \cC)$ where $\cC \in \cat{Cocomp}_{\bK}$ is a cocomplete $\bK$-linear category and $\unit_\cC$ is a pointing thereof.
    \item[1-morphisms] A 1-morphism $(\cA,\unit_\cA) \to (\cB,\unit_\cB)$ is a pair $(F,\unit_F)$ where $F : \cA \to \cB$ is a $\bK$-linear cocontinuous functor and $1_F : \unit_\cB \to F(\unit_\cA)$ is a homomorphism in $\cB$.
    \item[2-morphisms] A 2-morphism $(F,1_F) \to (G,1_G)$ is a natural transformation $\eta : F \to G$ such that $\eta_{\unit_\cA} \circ 1_F = 1_G : \unit_\cB \to G(\unit_\cA)$.
  \end{description}
  The bicategory $\cat{Alg}_{0}^{\mathrm{lax}}(\cat{Cocomp}_{\bK})$ is symmetric monoidal for $\boxtimes$.
\end{definition}

\begin{remark}\label{set theoretic difficulties}
Recall that a category is \define{cocomplete} if it is closed under colimits, and a functor is \define{cocontinuous} if it preserves colimits.  For set theoretic reasons it is often preferable to work just with locally presentable categories rather than all cocomplete categories.  Actually, \cite[Section 6.5]{MR2177301} works with small categories closed under some small set of colimit shapes, and so to extract the tensor product $\boxtimes$ on $\cat{Cocomp}_\bK$ requires the type of judicious Grothendieck-universe jumping standard in category theory.
\end{remark}

\begin{remark}\label{lax pointed map}
  The 1-morphisms $(F,1_F) : (\cA,\unit_\cA) \to (\cB,\unit_\cB)$ in $\cat{Alg}_{0}^{\mathrm{lax}}(\cat{Cocomp}_{\bK})$ are \define{lax homomorphisms of pointed categories}, a.k.a.\ \define{lax pointed functors}, as opposed to the \define{strong} pointed functors of \cref{remark.strongpointedfunctor}. 
   The use of ``lax'' here is consistent with the general notion of ``lax homomorphism'' in~\cite{JFS}.
\end{remark}

\begin{example}\label{eg.EW}
  The Eilenberg--Watts theorem~\cite{MR0125148,MR0118757} asserts that the functor $\cat{Mor} \to \cat{Cocomp}_{\bK}$ sending an algebra $A$ to the category $\cat{Mod}_{A}$ of right $A$-modules and a bimodule $_{A}M_{B}$ to the functor $(-) \otimes_{A} M : \cat{Mod}_{A} \to \cat{Mod}_{B}$ is a fully faithful inclusion of bicategories in the sense that it induces an equivalence of categories
  $$ \hom_{\cat{Cocomp}_{\bK}}(\cat{Mod}_{A},\cat{Mod}_{B}) \cong \hom_{\cat{Mor}}(A,B).$$
  
  Given an algebra $A$, consider the pointed category $(\cat{Mod}_{A},A_{A}) \in \cat{Alg}_{0}^{\mathrm{lax}}(\cat{Cocomp}_{\bK})$.  What is the category of homomorphisms $(F,f) : (\cat{Mod}_{A},A_{A}) \to (\cat{Mod}_{B},B_{B})$?  By the Eilenberg--Watts theorem, the data of a cocontinuous linear functor $F : \cat{Mod}_{A} \to \cat{Mod}_{B}$ consists (up to canonical isomorphism) of an $A$-$B$-bimodule $_{A}F_{B}$.  What about the homomorphism $f : B_{B} \to F(A) \cong {A_{A}} \otimes_{A} {_{A}F_{B}}$?  It is nothing but an element of the underlying vector space of $F$.
  
  Thus the Eilenberg--Watts inclusion $\cat{Mor} \mono \cat{Cocomp}_{\bK}$ lifts to an inclusion
  $$ \EW: \cat{Alg}_{1}(\cat{Vect}_{\bK}) \mono \cat{Alg}_{0}^{\mathrm{lax}}(\cat{Cocomp}_{\bK})$$ sending $ A \mapsto (\cat{Mod}_{A},A_{A}). $   It is reasonable to declare therefore that a non-commutative stack is \define{(0-)affine} if it is in the essential image of $\EW$.  
  
  It is worth noting that $\EW$ is symmetric monoidal.
\end{example}

\begin{remark}
  The composition $\EW \circ \cM : \cat{Alg}^\mathrm{homo} \to \cat{Alg}_{0}^{\mathrm{lax}}(\cat{Cocomp}_{\bK})$ of the Eilenberg--Watts and modulation functors picks out precisely those 1-morphisms $(F,1_F) : (\cA,\unit_\cA) \to (\cB,\unit_\cB)$ which are \define{strong} pointed functors in the sense that $1_F : \unit_\cB \to F(\unit_\cA)$ is an isomorphism.
\end{remark}

With \cref{defn.noncomstack} in place, we may study quantum field theories valued in ``noncommutative stacks'':

\begin{definition}\label{defn.nonaffineqft}
  A \define{(non-extended) 1-affine Heisenberg-picture quantum field theory} over~$\bK$ is a contravariant symmetric monoidal functor $\cat{Spacetimes} \to \cat{Alg}_{0}^{\mathrm{lax}}(\cat{Cocomp}_{\bK})$.
\end{definition}

Any 0-affine quantum field theory $\cat{Spacetimes} \to \cat{Alg}_{1}(\cat{Vect}_{\bK})$ provides an example of a 1-affine quantum field theory,
 by composing with the Eilenberg--Watts inclusion from \cref{eg.EW}.

\begin{remark}
  One-affine Heisenberg-picture quantum field theory is one possible formalization of \define{twisted}~\cite{MR2742432} or \define{relative}~\cite{MR3165462} quantum field theory: Schr\"odinger-picture quantum field theory valued in a categorified Schr\"odinger-picture quantum field theory.  Indeed, given a Heisenberg-picture quantum field theory $Z : \cat{Spacetimes} \to \cat{Alg}_{0}^{\mathrm{lax}}(\cat{Cocomp}_{\bK})$, one can forget  to a ``categorified'' quantum field theory $Z_{0} : \cat{Spacetimes} \to \cat{Cocomp}_{\bK}$.  The extra data of the field theory $Z$ is a \define{symmetric monoidal oplax natural transformation} from the trivial field theory to $Z_{0}$ in the sense of~\cite{JFS}.
\end{remark}

\begin{example}
  Given a vector space $V \in \cat{Vect}_{\bK}$, there are two natural ways to produce an object of $\cat{Alg}_{0}^{\mathrm{lax}}(\cat{Cocomp}_{\bK})$:
  \begin{enumerate}
  \item Following \cref{prop.end} and \cref{eg.EW}, apply $\End : \cat{Vect} \to \cat{Alg}_{1}(\cat{Vect}_{\bK})$ to produce the algebra $A = \End(V)$ and then apply $\EW : \cat{Alg}_{1}(\cat{Vect}_{\bK}) \mono \cat{Alg}_{0}^{\mathrm{lax}}(\cat{Cocomp}_{\bK})$ to produce the pointed category $\bigl(\cat{Mod}_{\End(V)},\End(V)_{\End(V)}\bigr)$.
  \item  Recognize that $(\cat{Vect}_{\bK},V)$ is already a pointed category, pointed by $V$ rather than $\bK$.
  \end{enumerate}
  Both constructions are contravariantly functorial.  Functoriality of the first we have already addressed.  For the second, given a linear map $f: V \to W$, the pair $(\id_{\cat{Vect}},f)$ is a pointed functor $(\cat{Vect}_{\bK},W) \to (\cat{Vect}_{\bK},V)$.  (The Eilenberg--Watts theorem identifies cocontinuous functors $\cat{Vect} \to \cat{Vect}$ with vector spaces.  So a general pointed functor $(\cat{Vect}_{\bK},V) \to (\cat{Vect}_{\bK},W)$ consists of a pair $(X,f)$ where $X \in \cat{Vect}$ is a vector space and $f \in \hom(W,V\otimes X)$ is a linear map.)
  
  But these constructions are not honestly different for the most important $V$.  Suppose that $V \neq 0$ is finite-dimensional.  Then $\End(V)$ is Morita-equivalent to $\bK$: there is an equivalence of categories $\cat{Mod}_{\End(V)}\simeq \cat{Vect}_{\bK}$.  The choice of $V$ provides a canonical such equivalence $\otimes_{\End(V)}V$.  Under this equivalence, the object $V^{*}_{\End(V)} \in \cat{Mod}_{\End(V)}$ is identified with $\bK \in \cat{Vect}_{\bK}$ and the rank-one free module $\End(V)_{\End(V)} \in \cat{Mod}_{\End(V)}$ is identified with $V \in \cat{Vect}_{\bK}$.  So, at least for non-zero  finite-dimensional $V$, the pointed categories $\bigl(\cat{Mod}_{\End(V)},\End(V)_{\End(V)}\bigr)$ and $(\cat{Vect}_{\bK},V)$ are equivalent in $\cat{Alg}_{0}^{\mathrm{lax}}(\cat{Cocomp}_{\bK})$.  Similar remarks apply when $V$ is a separable Hilbert space and one works in an analytic context in which the algebra of bounded operators on $V$ is Morita-equivalent to $\bK$.
  
  The construction $V \mapsto (\cat{Vect}_{\bK},V)$ is fully symmetric monoidal whereas, as mentioned in \cref{prop.end}, $V \mapsto \End(V)$ is only symmetric monoidal when $\dim V < \infty$.  So in some sense construction~(2) above is the more natural one: it turns \emph{any} Schr\"odinger-picture quantum field theory, independent of dimension, into a Heisenberg-picture one.  But it also explains construction~(1): most Schr\"odinger-picture quantum field theories (including all topological ones) are valued in vector spaces~$V$ for which $(\cat{Vect}_{\bK},V)$ is {(0-)affine} in the sense of \cref{eg.EW}.
\end{example}

\section{Extended affine field theory}

Atiyah~\cite{MR1001453} and Segal~\cite{MR2079383} introduced their functorial axioms for quantum field theory in an attempt to capture the \emph{locality} of physics.  In the decades since, it has become clear that locality is stronger than a functor that takes values on $(d-1)$-dimensional ``spaces'' and $d$-dimensional ``spacetimes'': a quantum field theory should also assign algebraic data to $k$-dimensional manifolds for lower $k$, and to spacetimes with such corners, since manifolds can be cut and glued along such manifolds \cite{MR1273575,MR1256993,BaeDol95}.  

The modern consensus (see e.g.~\cite{Lur09,ClaudiaThesis}) is that to fully capture locality, the source $\cat{Spacetimes}_d$ of a $d$-dimensional quantum field theory should not be just a (symmetric monoidal) category, but a (symmetric monoidal) $(\infty,d)$-category: $k$-dimensional morphisms for $k\leq d$ should be $k$-dimensional manifolds with corners, equipped with the appropriate geometric structure; ``higher'' morphisms for $k > d$ should be isomorphisms of spacetimes and isotopies (of isotopies of \dots)\ thereof.
The target of a quantum field theory must also be an $(\infty,d)$-category.  

\begin{definition}\label{defn.fullyextendedSch}
  A \define{delooping} of a symmetric monoidal $(\infty,n)$-category $\cC$ is a choice of $(\infty,n+1)$-category $\cD$ along with an equivalence $\cC \cong \hom(\unit_{\cD},\unit_{\cD})$, where $\unit_{\cD}$ is the unit object in $\cD$.
  
  A \define{$d$-dimensional fully-extended Schr\"odinger-picture quantum field theory over $\bK$} is a symmetric monoidal functor $\cat{Spacetimes}_d^\cG \to d\cat{Vect}$, where:
  \begin{itemize}
  \item $\cat{Spacetimes}_d^\cG$ is some symmetric monoidal $(\infty,d)$-category whose $d$-morphisms are $d$-dimensional cobordisms with corners equipped with geometric structure of type $\cG$;
  \item $d\cat{Vect}_{\bK}$ is some $(d-1)$-fold delooping of $\cat{Vect}_{\bK}$.
  \end{itemize}
  For example, the bicategories $\cat{Mor}_{\bK}$ and $\cat{Cocomp}_{\bK}$ are reasonable choices for $\cat{2Vect}_{\bK}$.
\end{definition}

\begin{remark} \label{cutting}
\Cref{defn.fullyextendedSch} is incomplete because, depending on the geometry $\cG$, constructing an $(\infty,d)$-category deserving the name $\cat{Spacetimes}_d^\cG$ may be quite difficult. When $\cG$ is sufficiently topological, $\cat{Spacetimes}_d^\cG$ can be built from the ``topological'' bordism category $\cat{Bord}^{\mathrm{smooth}}_d$ following \cite[Section 3.2]{Lur09} or \cite[Section 3.3]{Schommer-Pries:thesis} (elaborated upon in \cite{MR3623677}).  A detailed outline of the construction of $\cat{Bord}^{\mathrm{smooth}}_d$ itself is given in \cite{Lur09} and clarified in \cite{MR3924174}.

  Although the modern consensus is that $\cat{Spacetimes}_d^\cG$ should be a symmetric monoidal higher category, there is some evidence that the usual notions of ``higher category'' (e.g.\ those in \cite{BSP2011}) may not provide the correct framework in which to organize cobordisms, and that other related versions are needed~\cite{MR2806651}.  
\end{remark}

For a similar generalization of Heisenberg-picture quantum field theories, we should look for interesting deloopings of $\cat{Alg}_{1}(\cat{Vect}_{\bK})$ or $\cat{Alg}_{0}^{\mathrm{lax}}(\cat{Cocomp}_{\bK})$.  Various options are available, but we will use one which has a natural interpretation in terms of the pictures drawn in quantum field theory of insertion of local observables.  To wit, consider choosing a vector space~$V$ and placing on~$\bR$ (considered just as an oriented manifold) some ``beads'' labeled by vectors in~$V$.  These ``beads'' can slide back and forth but cannot pass through each other.  Further assume that the beads can ``fuse'' in a nuclear reaction.  Imposing linearity in the labels, this ``fusion'' is an operation $V \otimes V \to V$.  At microscopic scales, fusion isn't really a discrete process: instead, when beads become very close to each other, they can become bonded and behave like a single particle.  But suppose  that all physics of beads and fusion is ``topological'' in the sense that it is independent of distances on~$\bR$.  Then the fusion operation $V \otimes V \to V$ is associative.  The ``invisible bead'' ---  no bead at all --- is a unit for this fusion.  So such a system is the same as an associative algebra structure on~$V$.  The beads are nothing but ``local observables'' drawn from the ``algebra of observables''~$V$.  The theory of local observables for a general quantum field theory has been formulated in terms of \define{factorization algebras} in~\cite{CG}, and restricts to ``beads on $\bR$'' in the case of one-dimensional topological theories.

More generally, consider dividing~$\bR$ into intervals separated by ``point defects.''  One can imagine a situation in which the different regions follow different physical laws: one might be filled with water, for example, and another air.  Each defect might also have its own physics: perhaps there are waves that live only where water and air meet.  Each region and each defect has its own vector space of observables.  As observables move around, they can fuse, and we will require that the universe be topological.  An observable in a region can move onto a defect, but not otherwise.  There should be ``invisible observables'' that can be inserted at any point.  These rules together comprise (1) an associative algebra assigned to each region, and (2) a pointed bimodule assigned to each defect.  These are nothing but the objects and morphisms of $\cat{Alg}_{1}(\cat{Vect}_{\bK})$.

The natural generalization is to consider systems of regions and defects on higher-dimensional spaces --- vector spaces of observables assigned to each stratum, with operations describing the ways points can collide.  Let's impose a topological condition.  The Eckmann--Hilton argument~\cite{MR0136642} then says that the algebra assigned to any two-dimensional region is commutative:
\begin{multline*}
\begin{tikzpicture}[baseline=(center)]
 \draw[fill=black!15] (0,-3pt) coordinate (center) (-1,-1) rectangle (1,1);
 \path (0,-3pt) node[draw,circle,inner sep=1pt,fill=white,anchor=base] {$vw$};
\end{tikzpicture}
\quad \overset{\text{fusion}}= \quad
\begin{tikzpicture}[baseline=(center)]
 \draw[fill=black!15] (0,-3pt) coordinate (center) (-1,-1) rectangle (1,1);
 \path (0:-.5) ++(0,-3pt) node[draw,circle,inner sep=1pt,fill=white,anchor=base] {$v$};
 \path (0:.5) ++(0,-3pt) node[draw,circle,inner sep=1pt,fill=white,anchor=base] {$w$};
\end{tikzpicture}
\quad = \quad
\begin{tikzpicture}[baseline=(center)]
 \draw[fill=black!15] (0,-3pt) coordinate (center) (-1,-1) rectangle (1,1);
 \path (60:-.5) ++(0,-3pt) node[draw,circle,inner sep=1pt,fill=white,anchor=base] {$v$};
 \path (60:.5) ++(0,-3pt) node[draw,circle,inner sep=1pt,fill=white,anchor=base] {$w$};
\end{tikzpicture}
\\ = \quad
\begin{tikzpicture}[baseline=(center)]
 \draw[fill=black!15] (0,-3pt) coordinate (center) (-1,-1) rectangle (1,1);
 \path (120:-.5) ++(0,-3pt) node[draw,circle,inner sep=1pt,fill=white,anchor=base] {$v$};
 \path (120:.5) ++(0,-3pt) node[draw,circle,inner sep=1pt,fill=white,anchor=base] {$w$};
\end{tikzpicture}
\quad = \quad
\begin{tikzpicture}[baseline=(center)]
 \draw[fill=black!15] (0,-3pt) coordinate (center) (-1,-1) rectangle (1,1);
 \path (180:-.5) ++(0,-3pt) node[draw,circle,inner sep=1pt,fill=white,anchor=base] {$v$};
 \path (180:.5) ++(0,-3pt) node[draw,circle,inner sep=1pt,fill=white,anchor=base] {$w$};
\end{tikzpicture}
\quad \overset{\text{fusion}}= \quad
\begin{tikzpicture}[baseline=(center)]
 \draw[fill=black!15] (0,-3pt) coordinate (center) (-1,-1) rectangle (1,1);
 \path (0,-3pt) node[draw,circle,inner sep=1pt,fill=white,anchor=base] {$wv$};
\end{tikzpicture}
\end{multline*}
What about a one-dimensional defect separating regions whose commutative algebras of functions are  $A$ and $B$?  Its local observables form an associative $(A \otimes B)$-algebra.

Commutative algebras are mildly disappointing because they are not particularly ``quantum.''  Fortunately, there is more to the Eckmann--Hilton argument than the answer ``the algebra is commutative.''  Indeed, the Eckmann--Hilton argument actually gives \emph{two} proofs of the commutativity of multiplication, the above one and:
\begin{multline*}
\begin{tikzpicture}[baseline=(center)]
 \draw[fill=black!15] (0,-3pt) coordinate (center) (-1,-1) rectangle (1,1);
 \path (0,-3pt) node[draw,circle,inner sep=1pt,fill=white,anchor=base] {$vw$};
\end{tikzpicture}
\quad \overset{\text{fusion}}= \quad
\begin{tikzpicture}[baseline=(center)]
 \draw[fill=black!15] (0,-3pt) coordinate (center) (-1,-1) rectangle (1,1);
 \path (0:-.5) ++(0,-3pt) node[draw,circle,inner sep=1pt,fill=white,anchor=base] {$v$};
 \path (0:.5) ++(0,-3pt) node[draw,circle,inner sep=1pt,fill=white,anchor=base] {$w$};
\end{tikzpicture}
\quad = \quad
\begin{tikzpicture}[baseline=(center)]
 \draw[fill=black!15] (0,-3pt) coordinate (center) (-1,-1) rectangle (1,1);
 \path (-60:-.5) ++(0,-3pt) node[draw,circle,inner sep=1pt,fill=white,anchor=base] {$v$};
 \path (-60:.5) ++(0,-3pt) node[draw,circle,inner sep=1pt,fill=white,anchor=base] {$w$};
\end{tikzpicture}
\\ = \quad
\begin{tikzpicture}[baseline=(center)]
 \draw[fill=black!15] (0,-3pt) coordinate (center) (-1,-1) rectangle (1,1);
 \path (-120:-.5) ++(0,-3pt) node[draw,circle,inner sep=1pt,fill=white,anchor=base] {$v$};
 \path (-120:.5) ++(0,-3pt) node[draw,circle,inner sep=1pt,fill=white,anchor=base] {$w$};
\end{tikzpicture}
\quad = \quad
\begin{tikzpicture}[baseline=(center)]
 \draw[fill=black!15] (0,-3pt) coordinate (center) (-1,-1) rectangle (1,1);
 \path (180:-.5) ++(0,-3pt) node[draw,circle,inner sep=1pt,fill=white,anchor=base] {$v$};
 \path (180:.5) ++(0,-3pt) node[draw,circle,inner sep=1pt,fill=white,anchor=base] {$w$};
\end{tikzpicture}
\quad \overset{\text{fusion}}= \quad
\begin{tikzpicture}[baseline=(center)]
 \draw[fill=black!15] (0,-3pt) coordinate (center) (-1,-1) rectangle (1,1);
 \path (0,-3pt) node[draw,circle,inner sep=1pt,fill=white,anchor=base] {$wv$};
\end{tikzpicture}
\end{multline*}
Linear maps are either equal or unequal, but in a homotopical or higher categorical world, \emph{how} two operations are ``the same'' is a type of data.  One can therefore define a nontrivial notion of \define{$n$-algebra} in a symmetric monoidal category $\cS$ to be an object of $\cS$ with operations parameterized by labeled configurations of points in $\bR^{n}$ such that homotopies between configurations correspond to homotopies between operations.  (A precise definition, along with much discussion, is in \cite[Chapter 5]{HA}, where $n$-algebras are called \define{$\bE_{n}$-algebras}.)  For example, a $1$-algebra is a homotopy-coherent version of an associative algebra.  A $2$-algebra in the $(\infty,1)$-category of (usual) categories is a braided monoidal category.  A $0$-algebra is a \define{pointed object}, i.e.\ an object $X\in \cS$ along with a map $\unit \to X$, where $\unit\in \cS$ is the monoidal unit.

The bicategory $\cat{Alg}_{1}(\cat{Vect}_{\bK})$ of associative algebras and pointed bimodules can then be generalized to higher algebra by working, as discussed above, with systems of algebras of observables that are topological but vary at prescribed lower-dimensional strata in $\bR^{n}$.  The following summarizes the main results of~\cite{ClaudiaThesis}:

\begin{theorem}[\cite{ClaudiaThesis}] \label{thm.Scheimbauer}
  Let $\cS$ be a symmetric monoidal $(\infty,1)$-category admitting filtered colimits and such that the symmetric monoidal structure distributes over filtered colimits.  
  \begin{enumerate}
  \item For each $n \in \bN$, there is a symmetric monoidal $(\infty,n)$-category $\cat{Alg}_{n}(\cS)$ with:
  \begin{description}
    \item[Objects] $n$-algebras in $\cS$.
    \item[1-morphisms] $(n-1)$-algebra bimodules between $n$-algebras.
    \item[2-morphisms] $(n-2)$-algebra bimodules between $(n-1)$-algebras, for which the various actions of the ambient $n$-algebras are compatible.
    \item[\dots] \dots
    \item[$n$-morphisms] pointed bimodules between $1$-algebras, for which the various actions of lower-dimensional morphisms are compatible.
    \item[$(n+1)$-morphisms] equivalences of $n$-morphisms.
    \item[$(n+2)$-morphisms] equivalences of equivalences.
    \item[\dots] \dots
  \end{description}
  \item Two $n$-algebras in $\cS$ are equivalent as objects in $\cat{Alg}_{n}(\cS)$ if and only if they are equivalent as $n$-algebras (i.e.\ via homomorphisms, rather than via bimodules).
  \item Let $\cat{Bord}^{\mathrm{fr}}_{d}$ denote the ``spacetime'' $(\infty,d)$-category of framed smooth cobordisms.  For each $n$-algebra $X$ in $\cS$, there is a unique (up to contractible choices) symmetric monoidal functor $\cat{Bord}^{\mathrm{fr}}_{d} \to \cat{Alg}_{n}(\cS)$ assigning $X$ to the standard-framed point $\pt \in \cat{Bord}^{\mathrm{fr}}_{d}$.  This functor is called \define{factorization homology} or \define{topological chiral homology} with coefficients in $X$, written $M \mapsto \int_{M}X$. \qedhere
  \end{enumerate}
\end{theorem}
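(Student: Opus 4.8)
The statement collects the main results of \cite{ClaudiaDamien2}, so the plan is to recall the shape of that argument. First I would construct $\cat{Alg}_n(\cS)$ by iterating a single ``Morita'' construction. The base case $\cat{Alg}_0(\cS)$ is the category of \emph{pointed objects} of $\cS$ --- objects $X$ with a chosen map $\unit\to X$ --- made symmetric monoidal via $X\otimes Y$ pointed by $\unit=\unit\otimes\unit\to X\otimes Y$. The inductive step requires, for a symmetric monoidal $\infty$-category $\cC$ satisfying the stated colimit hypotheses, a new symmetric monoidal $\infty$-category whose objects are associative ($\bE_1$-)algebras in $\cC$, whose $1$-morphisms are bimodules composed by the two-sided bar construction $M\otimes_B N$, and whose higher morphisms are inherited from $\cC$. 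Assembling this coherently --- concretely, as a complete $n$-fold Segal space --- is the technical core, and it rests on the bimodule calculus of \cite[Ch.~4]{HA}. Applying the construction $n$ times to $\cat{Alg}_0(\cS)$ and invoking Dunn additivity $\bE_n\simeq\bE_1^{\otimes n}$ \cite[\S 5.1]{HA} to see that a $j$-morphism acquires an $(n{-}j)$-algebra structure yields $\cat{Alg}_n(\cS)$ with the advertised cells; the symmetric monoidal structure is inherited from $\cS$ levelwise.

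For part~(2) the ``if'' direction is the generalized modulation of \cref{defn.modulation}: a homomorphism of $n$-algebras produces an invertible bimodule (the target, pointed by its unit), hence an equivalence in $\cat{Alg}_n(\cS)$. For the converse I would show that an invertible $1$-morphism, together with its tower of invertible higher coherences, rigidifies to a homomorphism. As in the discussion preceding \cref{defn.alg}, the pointing is exactly what makes this work: from $a\triangleright 1 = 1\triangleleft f(a)$ one extracts a candidate map $f$, and invertibility of the bimodule forces $f$ to be an equivalence; the inductive construction then propagates this upward, the base case being the tautology that an equivalence of pointed objects of $\cS$ is an equivalence.

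For part~(3) the plan is to invoke the framed cobordism hypothesis of Baez--Dolan \cite{BaeDol95}, as proved by Lurie \cite{Lur09}: symmetric monoidal functors out of $\cat{Bord}^{\mathrm{fr}}_{d}$ are classified, up to a contractible space of choices, by fully dualizable objects. Thus it suffices to prove that every $n$-algebra $X$ in $\cS$ is fully dualizable in $\cat{Alg}_n(\cS)$ --- in fact $n$-dualizability suffices, since $\cat{Alg}_n(\cS)$ has invertible morphisms above level $n$, so the extra $d$-dualizability data is automatic for every $d$ (trivially so for $d<n$). The dual of $X$ is $X$ with its outermost $\bE_1$-multiplication reversed; the evaluation and coevaluation are the ``diagonal'' bimodules, and the zig-zag identities reduce to bar-construction equivalences such as $X\otimes_{X\otimes X^{\op}}X\simeq X$. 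One then checks that these evaluation and coevaluation $1$-morphisms are themselves left and right adjointible and iterates, the adjunction data needed at each level $j\le n$ being built from identity bimodules again; the pointings participate only at the top non-invertible level and are carried along formally. This produces the functor, and uniqueness up to contractible choices is immediate from the cobordism hypothesis once the image of $\pt$ is pinned to $X$. To identify the functor with factorization homology I would verify that it is symmetric monoidal, sends $\pt\mapsto X$, and satisfies $\otimes$-excision, which characterizes $M\mapsto\int_M X$ via the pushforward properties of topological chiral homology in \cite[\S 5.5]{HA}; alternatively, modelling the $k$-morphism spaces of $\cat{Alg}_n(\cS)$ by locally constant factorization algebras on stratified Euclidean spaces makes the identification manifest.

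The hard part is squarely the first step: turning the iterated bimodule/bar construction into a bona fide symmetric monoidal $(\infty,n)$-category with all its coherence data --- a well-behaved complete $n$-fold Segal space --- together with the verification of full dualizability of every object with its entire tower of higher adjunctions. These are precisely the contributions of \cite{ClaudiaDamien2}; everything else is bookkeeping layered on top of the cobordism hypothesis \cite{Lur09} and the characterization of factorization homology in \cite[\S 5.5]{HA}.
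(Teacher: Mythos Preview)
Your proposal is logically sound but takes a different route from the cited source on the central point. The paper does not prove this theorem itself --- it is attributed to \cite{ClaudiaDamien2} --- but the Remark immediately following the statement is explicit: part~(3) ``is proved by explicitly constructing the functor $\int_{\Box} X$, rather than by appealing to Lurie's celebrated classification theorem from~\cite{Lur09}.'' Your plan for part~(3) is precisely to invoke the cobordism hypothesis: verify full dualizability of every object and then apply \cite{Lur09}. That works if one grants \cite{Lur09} the status of a complete proof, but the paper stresses that avoiding reliance on an ``outline'' is the whole point; so your opening claim to be ``recall[ing] the shape of that argument'' does not match what you then propose. There is a parallel mismatch in part~(1): you build $\cat{Alg}_n(\cS)$ by iterating a one-step Morita construction and invoking Dunn additivity, whereas \cite{ClaudiaDamien2} (as the paper later uses it in the proof of \cref{theorem about full dualizability}) models $k$-morphisms directly as locally constant factorization algebras on $\bR^n$ stratified by the coordinate flags $\{x_1=0\}\supset\{x_1=x_2=0\}\supset\cdots$, and assembles the complete $n$-fold Segal space from that geometric picture. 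You mention this model only in your final sentence as an alternative, but in \cite{ClaudiaDamien2} it \emph{is} the construction, and it is exactly what makes the direct definition of $\int_M X$ in part~(3) possible without the cobordism hypothesis. Your iterative approach buys a cleaner inductive story for dualizability; the stratified-$\bR^n$ model buys direct access to factorization homology and independence from \cite{Lur09}.
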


\begin{remark}
 Part (3) of \cref{thm.Scheimbauer} is proved by explicitly constructing the functor $\int_{\Box} X$, rather than by appealing to Lurie's celebrated classification theorem from~\cite{Lur09}.  Lurie called~\cite{Lur09} an ``outline,'' and while it seems the consensus among experts is that it is in every important way correct and complete, it also seems best to prove results without appealing to an ``outline.''
\end{remark}

As I said earlier, it is mildly disappointing that $2$-algebras in $\cat{Vect}$ are automatically commutative.  But there is a close cousin to $\cat{Vect}$ that admits honestly non-commutative $n$-algebras for all $n$.  The category $\cat{DGVect}$ of chain complexes of vector spaces (``derived'' vector spaces) has a model category structure making it into an $(\infty,1)$-category in an interesting way: objects are chain complexes and 1-morphisms are chain maps as in the usual category, but 2-morphisms are chain homotopies between 1-morphisms, 3-morphisms are homotopies between homotopies, and so on.  It satisfies the conditions of \cref{thm.Scheimbauer}.  We can therefore define:

\begin{definition}
  A \define{$d$-dimensional fully-extended derived affine Heisenberg-picture quantum field theory} based on the spacetime category $\cat{Spacetimes}_d^\cG$ is a contravariant symmetric monoidal functor of $(\infty,d)$-categories $\cat{Spacetimes}_d^\cG \to \cat{Alg}_{d}(\cat{DGVect})$.
\end{definition}

Then part (3) of \cref{thm.Scheimbauer} asserts that each $d$-algebra (among chain complexes) defines a $d$-dimensional fully-extended framed topological derived affine Heisenberg-picture quantum field theory.

\section{Extended non-affine field theory}

Of course,  category theory  accommodates affine \emph{non}-derived quantum field theories --- functors to $\cat{Alg}_{d}(\cat{Vect})$ --- but these will not exhibit truly ``quantum'' behavior except in codimensions~$0$ and~$1$.  If the derived world is to be avoided, another option for capturing honestly ``quantum'' examples is to give up on affineness.  (Indeed, as we will see in \cref{section.skein theory}, important examples are not affine.)  Let $\cS$ be a symmetric monoidal $(\infty,k)$-category, say the bicategory $\cat{Cocomp}_{\bK}$.  The notion of ``$d$-algebra in $\cS$'' never uses non-invertible $2$- or higher morphisms, and so depends only on the maximal $(\infty,1)$-category inside of $\cS$.  Thus one can throw away that data and define $\cat{Alg}_{d}(\cS)$ as in \cref{thm.Scheimbauer}.  But the higher morphisms in $\cS$ can be used to enriched $\cat{Alg}_{d}(\cS)$:

\begin{theorem}[\cite{JFS}] \label{higher alg n}
  Let $\cS$ be a symmetric monoidal $(\infty,k+1)$-category satisfying conditions analogous to those of \cref{thm.Scheimbauer} (for the details, see \cite{JFS}).  Then the $(\infty,n)$-category $\cat{Alg}_{n}(\cS)$ from \cref{thm.Scheimbauer} is the $n$-dimensional truncation of an $(\infty,n+k+1)$-category $\cat{Alg}_{n}^{\mathrm{lax}}(\cS)$.  This extended version has the same $0$- through $n$-morphisms as its non-extended cousin.  
  In particular, the $n$-morphisms are pointed objects of $\cS$ which are acted upon by their sources and targets.  The $(n+1)$-morphisms are lax homomorphisms of pointed bimodules; the $(n+2)$-morphisms are lax homomorphisms thereof; etc.
   \qedhere
\end{theorem}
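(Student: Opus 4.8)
The plan is to feed the machine that builds $\cat{Alg}_n(\cS)$ in \cref{thm.Scheimbauer} one extra ingredient, a \emph{lax coslice}, and to let the $(\infty,k+1)$-categorical structure of $\cS$ flow through it. Recall that $\cat{Alg}_n(\cS)$ is assembled (following \cite{ClaudiaDamien2}) as an $n$-fold complete Segal space whose multisimplicial levels record grids of algebras glued by iterated bar and relative-tensor-product data; the standing hypotheses on $\cS$ guarantee the needed sifted colimits exist and commute with $\otimes$, and only the underlying $(\infty,1)$-category of $\cS$ is used, the pointed-object ($\bE_0$-algebra) layer entering merely as the \emph{space} $\unit/\cS$ of objects under the monoidal unit. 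When $\cS$ is an $(\infty,k+1)$-category I would instead feed in the \emph{lax coslice} $\unit/\!/\cS$: the $(\infty,k+1)$-category with objects the pointed objects $(X,\ p\colon\unit\to X)$, with a $1$-morphism $(X,p)\to(Y,q)$ a pair $(f\colon X\to Y,\ \varphi\colon q\Rightarrow f\circ p)$ in which $\varphi$ is an \emph{arbitrary}, not necessarily invertible, $2$-cell of $\cS$, and with higher cells assembled from the higher cells of $\cS$ together with the evident coherence data; the strong coslice (the space $\unit/\cS$, or its $(\infty,1)$-categorical thickening) is the part where $\varphi$ is invertible. The first job is to construct $\unit/\!/\cS$ precisely, prove it functorial in $\cS$, and give it a symmetric monoidal structure refining the pointwise tensor, using $\unit\otimes\unit\simeq\unit$ and interchange.

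Next I would splice $\unit/\!/\cS$ into the Segal construction in place of $\unit/\cS$. Since the lax coslice carries $k+1$ layers of cells below its objects where the space $\unit/\cS$ carried none, the output is now an $n$-fold Segal object valued in $(\infty,k+1)$-categories, equivalently an $(n+k+1)$-fold Segal space; one defines $\cat{Alg}_n^{\mathrm{lax}}(\cS)$ to be the $(\infty,n+k+1)$-category it presents. Here one must re-establish, along the $k+1$ new Segal directions, the Segal and completeness conditions: these are precisely the directions in which the laxness lives, and the verification is a combinatorial matter of (op)lax limits and the Gray tensor product rather than anything about colimits in $\cS$. One also carries the symmetric monoidal structure through — it is untouched in the bar directions and supplied by $\unit/\!/\cS$ in the new ones.

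It remains to verify the two compatibility assertions. That the $0$- through $n$-morphisms of $\cat{Alg}_n^{\mathrm{lax}}(\cS)$ coincide with those of $\cat{Alg}_n(\cS)$ is immediate from the construction: the only change was at the bottom cell-dimension, replacing the invertible coslice by the full one, which alters no cell of dimension $\le n$; the same observation reads off the description of the $n$-morphisms as pointed objects of $\cS$ acted on by their source and target $\bE_1$-data, and of the $(n+1)$-, $(n+2)$-, \dots, $(n+k+1)$-morphisms as the successive ``lax homomorphisms'' coming from the $1$-, $2$-, \dots, $(k+1)$-cells of $\cS$ inside $\unit/\!/\cS$. As a consistency check, the case $n=1$, $\cS=\cat{Vect}_\bK$ (so $k=0$) returns the bicategory $\cat{Alg}_{1}(\cat{Vect}_\bK)$ of \cref{defn.alg}; and the whole construction can be recast in the (op)lax-transformation language of \cite{JFS}, which is the form in which it is used for twisted field theories.

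The main obstacle is not any isolated computation but the overall coherence: producing $\unit/\!/\cS$ and its insertion into the Segal machine as honest $\infty$-functorial data, not merely on homotopy bicategories; checking completeness for the new lax Segal directions; and transporting a genuine symmetric monoidal structure through all $n$ iterations. This is exactly the technical heart of \cite{JFS}; once $\unit/\!/\cS$ is available the remaining steps are formal manipulations of Segal objects.
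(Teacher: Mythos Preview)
The paper itself contains no proof of this theorem: the \qedhere at the end of the statement signals that the result is quoted from \cite{JFS} without argument. So there is no in-paper proof to compare against; the only ``approach'' the paper takes is to cite the reference.

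Your sketch is a plausible high-level outline of the construction in \cite{JFS}, and it is consistent with the hints the paper gives elsewhere (the explicit description of $\cat{Alg}_0^{\mathrm{lax}}(\cS)$ in \cref{alg0S}, the remark that the pointings define a ``symmetric monoidal oplax natural transformation'' in the sense of \cite{JFS}, and the title of \cite{JFS} itself). The idea of replacing the strong coslice $\unit/\cS$ by a lax coslice $\unit/\!/\cS$ and letting the extra $k{+}1$ categorical layers of $\cS$ propagate through the Segal machine is the right picture. You also correctly identify where the actual work lies: building $\unit/\!/\cS$ functorially and symmetric-monoidally at the level of genuine $(\infty,k{+}1)$-categories, and then verifying Segal and completeness conditions in the new lax directions. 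That is indeed the technical content of \cite{JFS}, and your proposal is honest about deferring to it rather than pretending those coherences are trivial. In short: your outline is sound and matches what the paper expects the reader to find in \cite{JFS}; there is nothing further in the paper to compare it to.
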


\begin{examplenodiamond}\label{alg0S}
  Let $\cS$ be a symmetric monoidal $(\infty,k+1)$-category with unit object $\unit \in \cS$.  Generalizing \cref{defn.noncomstack}, the $(\infty,k+1)$-category $\cat{Alg}_0^{\mathrm{lax}}(\cS)$ has:
  \begin{description}
  \item[Objects] An object of $\cat{Alg}_0^{\mathrm{lax}}(\cS)$ consists of a pair $(X,1_X)$ where object $X \in \cS$ and $1_X : \unit \to X$ is a \define{pointing} of $X$.
  $$ \begin{tikzpicture}[anchor=base,auto,baseline=(K.base)]
  \path (0,0) node (K) {$\unit$} (3,0) node (X) {$X$} ;
  \draw[->] (K) -- node (x) {$1_X$} (X);
\end{tikzpicture} $$
  \item[1-morphism] A 1-morphism $(X,1_X) \to (Y,1_Y)$ is a \define{lax homomorphism} of pointed objects.  It consists of a pair $(F,1_F)$ where $F : X \to Y$ is a 1-morphism in $\cS$ and $1_F : 1_Y \to F\circ 1_X$ is a 2-morphism.
  $$ \begin{tikzpicture}[anchor=base,auto,baseline=(K.base)]
  \path node (K) {$\unit$} +(3,1) node (X) {$X$} +(3,-1) node (Y) {$Y$} ;
  \draw[->] (K) -- node (x) {$1_X$} (X);
  \draw[->] (K) -- node[swap] (y) {$1_Y$} (Y);
  \draw[->] (X) -- node (F) {$F$} (Y);
  \draw[twoarrowshorter] (K) -- node {$\scriptstyle 1_F$} (F);
\end{tikzpicture} $$
  \item[2-morphism] A 2-morphism $(F,1_F) \to (G,1_G)$ is pair $(T,1_T)$ where $T : F \to G$ is a 2-morphism in $\cS$ and $1_T : 1_G \to T\circ 1_F$ is a 3-morphism.
  $$ \begin{tikzpicture}[anchor=base,auto,baseline=(K.base)]
  \path node (K) {$\unit$} +(4,2) node (X) {$X$} +(4,-2) node (Y) {$Y$} +(2.75,0) node {$\overset {1_T} \Rrightarrow$};
  \draw[->,thick] (K) -- node (x) {$1_X$} (X);
  \draw[->,thick] (K) -- node[swap] (y) {$1_Y$} (Y);
  \draw[->,thick] (X) .. controls +(1,-1) and +(1,1).. node[pos=.35] (G) {$G$} (Y);
  \draw[twoarrow,thin] (K) -- node {$\scriptstyle 1_G$} (G);
  \draw[->,thick] (X) .. controls +(-1,-1) and +(-1,1).. node[pos=.55,swap] (F) {$F$} (Y);
  \draw[twoarrowlonger,thick] (K) -- node[pos=.75,swap] {$\scriptstyle 1_F$} (F);
  \draw[twoarrow,thick] (F) -- node[swap] {$ T$} (G);
\end{tikzpicture} $$
  \item[\dots] \dots \diamondhere
  \end{description}
\end{examplenodiamond}

\begin{example}\label{alg1cocomp}
Up to a contractible space of choices, one can identify objects of $\cat{Alg}_1^{\mathrm{lax}}(\cat{Cocomp}_\bK)$ with \define{monoidal} $\bK$-linear cocomplete  categories $\cC$, by which I mean that the monoidal functor is $\bK$-linear and cocontinuous in each variable, so that it extends to a functor $\otimes : \cC \boxtimes_\bK \cC \to \cC$.  Recall that to be monoidal, in addition to the monoidal functor, we should have distinguished ``associator'' and ``unitor'' natural isomorphisms satisfying standard ``pentagon'' and ``triangle'' axioms.  I will generally suppress these auxiliary data.

Let $\cC$ be a {monoidal} $\bK$-linear cocomplete  category and $\cX$ a $\bK$-linear cocomplete  category.  A \define{left action} of $\cC$ on $\cX$ consists of a 1-morphism (i.e.\ $\bK$-linear cocontinuous functor) $\triangleright : \cC \boxtimes_\bK \cX \to \cX$ and an associator $(C_1\otimes C_2)\triangleright X \isom C_1 \triangleright (C_2 \triangleright X)$ and a unitor $\unit_\cC \triangleright X \isom X$ (natural in $X\in \cX$ and $C_1,C_2 \in \cC$, of course) satisfying the appropriate pentagon and triangle equations.  A \define{right action}  is similar: there should be a functor $\triangleleft : \cX \boxtimes_\bK \cC \to \cX$ and an associator and a unitor.  Suppose that $(\cB,\otimes_\cB)$ and $(\cC,\otimes_\cC)$ are monoidal $\bK$-linear cocomplete category and $\cX$ is a $\bK$-linear cocomplete category equipped with a left action $\triangleright : \cB \boxtimes_\bK \cX \to \cX$ and a right action $\cX \boxtimes_\bK \cC \to \cX$.  A \define{compatibility} between these actions consists of an associator $\alpha_{B,X,C} : (B \triangleright X) \triangleleft C \isom B\triangleright (X \triangleleft C)$ that satisfies the appropriate pentagon and triangle equations.  Such an $\cX$ is called a \define{$\cB$-$\cC$-bimodule}.

A {1-morphism} in $\cat{Alg}_1^{\mathrm{lax}}(\cat{Cocomp}_\bK)$ between {monoidal} $\bK$-linear cocomplete  categories $\cB$ and $\cC$ is a \define{pointed} $\cB$-$\cC$-bimodule, i.e.\ a $\cB$-$\cC$-bimodule $\cX$ with a distinguished object $\unit_\cX \in \cX$.  Let $(\cX,\unit_\cX)$ be a pointed $\cB$-$\cC$-bimodule and $(\cY,\unit_\cY)$ a pointed $\cC$-$\cD$-bimodule.  The \define{balanced tensor product} $\cX \boxtimes_\cC \cY$ is the $\bK$-linear cocomplete category which is universal for the following:
  \begin{itemize}
    \item There is a 1-morphism $P : \cX \boxtimes_\bK \cY \to \cX \boxtimes_\cC \cY$.
    \item There is an ``associator''
    $$ \alpha_{X,C,Y} : P((X\triangleleft C)\boxtimes Y) \isom P(X \boxtimes (C\triangleright Y)) $$
    depending naturally on $X\in \cX$, $C\in \cC$, and $Y \in \cY$.
    \item The associator $\alpha$ satisfies triangle and pentagon equations.  Suppressing the associators and unitors from $\cX,\cC,\cY$, these say that $\alpha_{X,\unit,Y} = \id$ and $\alpha_{X,A\otimes B,Y} = \alpha_{X,A,B\triangleright Y} \circ \alpha_{X\triangleleft A,B,Y}$.
  \end{itemize}
  That the balanced tensor product exists follows from \cite[Theorem 6.23]{MR2177301}, which implies (modulo  \cref{set theoretic difficulties}) that $\cat{Cocomp}_\bK$ contains all colimits.  Indeed, $\cX \boxtimes_\cC \cY$ is the colimit of the following diagram:
    $$
  \begin{tikzpicture}
    \path coordinate (base) +(-.25,0) coordinate (basel) +(.25,0) coordinate (baser);
    \path (base) node (XY) {$\cX \boxtimes \cY$}
      ++(0,2) node (XAY) {$\cX \boxtimes \cC \boxtimes \cY$}
      ++(0,2) node (XAAY) {$\cX \boxtimes \cC \boxtimes \cC \boxtimes \cY$};
    \draw[->,very thick] (XAY) .. controls +(-1,-1) and +(-1,1) .. node[auto,swap] {$\scriptstyle \triangleleft \boxtimes \id$} (XY);
    \draw[->,very thick] (XAY) .. controls +(1,-1) and +(1,1) .. node[auto] {$\scriptstyle \id\boxtimes \triangleright$} (XY);
    \draw[->,very thick] (XAAY) .. controls +(-2,-1) and +(-2,1) .. node[auto,swap] {$\scriptstyle\triangleleft \boxtimes \id \boxtimes \id$} (XAY);
    \draw[->,very thick] (XAAY) .. controls +(2,-1) and +(2,1) .. node[auto] {$\scriptstyle \id\boxtimes\id\boxtimes \triangleright$} (XAY);
    \draw[->,very thick] (XAAY) -- node[fill=white] {$\scriptstyle\id \boxtimes \otimes \boxtimes \id$} (XAY);
    \path[draw,fill=gray,opacity=.25] (basel) .. controls +(-1,1) and +(-1,-1) .. ++(0,2) -- ++(0,2) .. controls +(-2,-1) and +(-2,1) .. ++(-.25,-2) .. controls +(-1,-1) and +(-1,1) .. (basel);
    \path[draw,fill=gray,opacity=.25] (baser) .. controls +(1,1) and +(1,-1) .. ++(0,2) -- ++(0,2) .. controls +(2,-1) and +(2,1) .. ++(.25,-2) .. controls +(1,-1) and +(1,1) .. (baser);
    \path[draw,fill=gray,opacity=.25] (base) .. controls +(-.75,1) and +(-.75,-1) .. ++(0,2) .. controls +(-1.75,1) and +(-1.75,-1) .. ++(0,2) .. controls +(1.75,-1) and +(1.75,1) .. ++(0,-2) .. controls +(.75,-1) and +(.75,1) .. (base);
  \end{tikzpicture}
  $$
  The three 2-cells are:
    \begin{gather*}
  \begin{tikzpicture}[scale=.5,baseline=(XAY)]
    \path coordinate (base) +(-.25,0) coordinate (basel) +(.25,0) coordinate (baser);
    \path (base) node (XY) {}
      ++(0,2) node (XAY) {}
      ++(0,2) node (XAAY) {};
    \draw[->,very thick] (XAY) .. controls +(-1,-1) and +(-1,1) .. node[auto,swap] {} (XY);
    \draw[->,very thick] (XAY) .. controls +(1,-1) and +(1,1) .. node[auto] {} (XY);
    \draw[->,very thick] (XAAY) .. controls +(-2,-1) and +(-2,1) .. node[auto,swap] {} (XAY);
    \draw[->,very thick] (XAAY) .. controls +(2,-1) and +(2,1) .. node[auto] {} (XAY);
    \draw[->,very thick] (XAAY) -- node {} (XAY);
    \path[draw,fill=gray,opacity=.25] (basel) .. controls +(-1,1) and +(-1,-1) .. ++(0,2) -- ++(0,2) .. controls +(-2,-1) and +(-2,1) .. ++(-.25,-2) .. controls +(-1,-1) and +(-1,1) .. (basel);
%    \path[draw,fill=gray,opacity=.25] (baser) .. controls +(1,1) and +(1,-1) .. ++(0,2) -- ++(0,2) .. controls +(2,-1) and +(2,1) .. ++(.25,-2) .. controls +(1,-1) and +(1,1) .. (baser);
%    \path[draw,fill=gray,opacity=.25] (base) .. controls +(-.75,1) and +(-.75,-1) .. ++(0,2) .. controls +(-1.75,1) and +(-1.75,-1) .. ++(0,2) .. controls +(1.75,-1) and +(1.75,1) .. ++(0,-2) .. controls +(.75,-1) and +(.75,1) .. (base);
  \end{tikzpicture}
  \!\!= \text{associator for }\cX,\quad
  \begin{tikzpicture}[scale=.5,baseline=(XAY)]
    \path coordinate (base) +(-.25,0) coordinate (basel) +(.25,0) coordinate (baser);
    \path (base) node (XY) {}
      ++(0,2) node (XAY) {}
      ++(0,2) node (XAAY) {};
    \draw[->,very thick] (XAY) .. controls +(-1,-1) and +(-1,1) .. node[auto,swap] {} (XY);
    \draw[->,very thick] (XAY) .. controls +(1,-1) and +(1,1) .. node[auto] {} (XY);
    \draw[->,very thick] (XAAY) .. controls +(-2,-1) and +(-2,1) .. node[auto,swap] {} (XAY);
    \draw[->,very thick] (XAAY) .. controls +(2,-1) and +(2,1) .. node[auto] {} (XAY);
    \draw[->,very thick] (XAAY) -- node {} (XAY);
%    \path[draw,fill=gray,opacity=.25] (basel) .. controls +(-1,1) and +(-1,-1) .. ++(0,2) -- ++(0,2) .. controls +(-2,-1) and +(-2,1) .. ++(-.25,-2) .. controls +(-1,-1) and +(-1,1) .. (basel);
    \path[draw,fill=gray,opacity=.25] (baser) .. controls +(1,1) and +(1,-1) .. ++(0,2) -- ++(0,2) .. controls +(2,-1) and +(2,1) .. ++(.25,-2) .. controls +(1,-1) and +(1,1) .. (baser);
%    \path[draw,fill=gray,opacity=.25] (base) .. controls +(-.75,1) and +(-.75,-1) .. ++(0,2) .. controls +(-1.75,1) and +(-1.75,-1) .. ++(0,2) .. controls +(1.75,-1) and +(1.75,1) .. ++(0,-2) .. controls +(.75,-1) and +(.75,1) .. (base);
  \end{tikzpicture}
  \!\!= \text{associator for }\cY,\\
  \begin{tikzpicture}[scale=.5,baseline=(XAY)]
    \path coordinate (base) +(-.25,0) coordinate (basel) +(.25,0) coordinate (baser);
    \path (base) node (XY) {}
      ++(0,2) node (XAY) {}
      ++(0,2) node (XAAY) {};
    \draw[->,very thick] (XAY) .. controls +(-1,-1) and +(-1,1) .. node[auto,swap] {} (XY);
    \draw[->,very thick] (XAY) .. controls +(1,-1) and +(1,1) .. node[auto] {} (XY);
    \draw[->,very thick] (XAAY) .. controls +(-2,-1) and +(-2,1) .. node[auto,swap] {} (XAY);
    \draw[->,very thick] (XAAY) .. controls +(2,-1) and +(2,1) .. node[auto] {} (XAY);
    \draw[->,very thick] (XAAY) -- node {} (XAY);
%    \path[draw,fill=gray,opacity=.25] (basel) .. controls +(-1,1) and +(-1,-1) .. ++(0,2) -- ++(0,2) .. controls +(-2,-1) and +(-2,1) .. ++(-.25,-2) .. controls +(-1,-1) and +(-1,1) .. (basel);
%    \path[draw,fill=gray,opacity=.25] (baser) .. controls +(1,1) and +(1,-1) .. ++(0,2) -- ++(0,2) .. controls +(2,-1) and +(2,1) .. ++(.25,-2) .. controls +(1,-1) and +(1,1) .. (baser);
    \path[draw,fill=gray,opacity=.25] (base) .. controls +(-.75,1) and +(-.75,-1) .. ++(0,2) .. controls +(-1.75,1) and +(-1.75,-1) .. ++(0,2) .. controls +(1.75,-1) and +(1.75,1) .. ++(0,-2) .. controls +(.75,-1) and +(.75,1) .. (base);
  \end{tikzpicture}
  \!\!=\text{the commutativity }(\id\boxtimes \triangleright) \circ (\triangleleft \boxtimes \id\boxtimes\id) = (\triangleleft\boxtimes\id) \circ (\id \boxtimes\id \boxtimes \triangleright).
  \end{gather*}
  
  The category $\cX \boxtimes_\cC \cY$ is naturally a $\cB$-$\cD$-bimodule, and the balanced tensor product is coherently associative and unital.
  If $\cX$ is pointed by $\unit_\cX$ and $\cY$ by $\unit_\cY$, then $\cX \boxtimes_\cC \cY$ is pointed by $P(\unit_\cX \boxtimes \unit_\cY)$.
This categorifies the 0- and 1-morphisms of the bicategory $\cat{Alg}_1(\cat{Vect}_\bK)$ from \cref{defn.alg}.

Let $(\cX,\unit_\cX)$ and $(\cY,\unit_\cY)$ now be two $\cB$-$\cC$-bimodules.  A {2-morphism} $(\cX,\unit_\cX) \to (\cY,\unit_\cY)$ in $\cat{Alg}_1^{\mathrm{lax}}(\cat{Cocomp}_\bK)$ is a \define{lax pointed bimodule homomorphism}, which is to say a $\bK$-linear cocontinuous functor $F : \cX \to \cY$, natural transformations $B \triangleright F(X) \to F(B \triangleright X)$ and $F(X) \triangleleft C \to F(X \triangleleft C)$ intertwining the various associators and unitors, and a homomorphism $1_F : \unit_\cY \to F(\unit_\cX)$ in $\cY$.

A 3-morphism $(F,1_F) \to (G,1_G)$ is a natural transformation $\eta : F \to G$ intertwining the various natural transformations and homomorphisms.  All together this makes $\cat{Alg}_1^{\mathrm{lax}}(\cat{Cocomp}_\bK)$ into a weak 3-category.
\end{example}

\begin{remark}
I know of no examples of monoidal categories appearing in quantum field theory that are not generated under colimits by their (1-)dualizable objects (see \cref{defn.dualizable}).  Suppose that $\cB$ and $\cC$ are monoidal $\bK$-linear cocomplete categories generated under colimits by their (1-)dualizable objects, and that $\cX$ and $\cY$ are $\cB$-$\cC$-bimodules.  Then in fact all lax bimodule homomorphisms $F : \cX \to \cY$ are \define{strong}, which is when the natural transformations $B \triangleright F(X) \to F(B \triangleright X)$ and $F(X) \triangleleft C \to F(X \triangleleft C)$ are isomorphisms; the proof is the same as that of \cite[Lemma 2.10]{DSPS2}.  The pointing $1_F : \unit_\cY \to F(\unit_\cX)$ is not necessarily an isomorphism. 
\end{remark}

Suppose that $\cS$ is a $k$-fold delooping of $\cat{Vect}$ --- ``the $(\infty,k+1)$-category of cocomplete $\bK$-linear  $(\infty,k)$-categories,'' for example.  Then $\cat{Alg}_n^{\mathrm{lax}}(\cS)$ is an $(n+k)$-fold delooping of $\cat{Alg}_0(\cat{Vect})$; i.e.\ in dimension $(n+k)$, $\cat{Alg}_n^{\mathrm{lax}}(\cS)$ consists of pointed vector spaces.  This is what a fully-extended $(n+k)$-dimensional Heisenberg-picture quantum field theory would assign to top-dimensional spacetimes.  Therefore a \define{$d$-dimensional fully-extended $k$-affine Heisenberg-picture quantum field theory}, for $d\geq k$, should be a symmetric monoidal functor $\cat{Spacetimes}_d \to \cat{Alg}_{d-k}^{\mathrm{lax}}(\cS)$, where $\cS$ is a $k$-fold delooping of $\cat{Vect}$.  For the ``derived'' version, replace $\cat{Vect}$ by $\cat{DGVect}$.  My favorite delooping of $\cat{Vect}_\bK$ is $\cat{Cocomp}_\bK$, and so I generally choose:
\begin{definition}\label{defn.1-affine extended qft}
  A \define{$\cG$-geometric $d$-dimensional fully-extended $1$-affine Heisenberg-picture quantum field theory} is a symmetric monoidal functor $\cat{Spacetimes}_d^\cG \to \cat{Alg}_{d-1}^{\mathrm{lax}}(\cat{Cocomp}_\bK)$, where $\cat{Spacetimes}_d^\cG$ is an $(\infty,d)$-category of $0$- through $d$-dimensional manifolds equipped with  geometric structures of type $\cG$.
\end{definition}

\section{A (non-)dualizability result}

Let us focus on the case of $d$-dimensional fully-extended framed topological quantum field theories, which are based on the spacetime category $\cat{Spacetimes} = \cat{Bord}^{\mathrm{fr}}_{d}$.  The famous \define{cobordism hypothesis}, whose proof is outlined in detail in~\cite{Lur09}, asserts that for any symmetric monoidal $(\infty,N)$-category $\cC$, symmetric monoidal functors $\cat{Bord}^{\mathrm{fr}}_{d} \to \cC$ are the same as $d$-dualizable objects in $\cC$.  (It is not too hard to convince oneself that every such functor picks out a $d$-dualizable object.  The deep part of~\cite{Lur09} is that each $d$-dualizable object determines uniquely such a functor, which is in turn a statement about the topology of manifolds.)

\begin{definition}[\cite{Lur09}]\label{defn.dualizable}
  A 1-morphism $f$ in an $(\infty,N)$-category $\cC$ is \define{1-dualizable} if it has left and right adjoints in the homotopy bicategory $\mathrm h_2\cC$, each of which have left and right adjoints, ad infinitum.  This homotopy bicategory has the same $0$- and $1$-morphisms as has $\cC$, but its $2$-morphisms are the equivalence classes of $2$-morphisms in $\cC$.  (Any data from non-invertible $3$- and higher morphisms in $\cC$ is discarded.)  Thus 1-dualizability asserts the existence of various ``evaluation'' and ``coevaluation'' 2-morphisms in $\cC$, which must satisfy certain relations up to non-unique equivalence.
  
  For $k > 1$, let $f$ be a $k$-morphism in $\cC$ with source and target $(k-1)$-morphisms $x$ and $y$.  Then $x$ and $y$ are \define{parallel}: they have the same source and target $(k-2)$-morphisms.  The morphism $f$ is \define{1-dualizable} if it is $1$-dualizable in the $(\infty,N-k)$-category whose objects are all $(k-1)$-morphisms parallel to $x$ and $y$. Thus $1$-dualizability of a $k$-morphism asserts the existence of various ``evaluation'' and ``coevaluation'' $(k+1)$-morphisms in $\cC$, which must satisfy certain relations up to non-unique equivalence.
  
  For $d > 1$, a $k$-morphism $f$ is \define{$d$-dualizable} if it is $1$-dualizable and the evaluation and coevaluation $(k+1)$-morphisms witnessing such 1-dualizability are themselves $(d-1)$-dualizable.
  
  If $\cC$ is a symmetric monoidal $(\infty,N)$-category, an object $X\in \cC$ is \define{$d$-dualizable} if the functor $\otimes X : \cC \to \cC$ is $d$-dualizable as a $1$-morphism in the $(\infty,N+1)$-category with a single object $\star$ and $\hom(\star,\star) = \cC$ and composition given by $\otimes$.
\end{definition}

In general, given a symmetric monoidal $(\infty,N)$-category $\cC$, it is interesting to ask what are the $d$-dualizable objects for various values of $d$.  First, a trivial observation: 1-dualizable 1-morphisms in an $(\infty,1)$-category are equivalences, and so $(N+1)$-dualizable objects in a symmetric monoidal $(\infty,N)$-category are invertible.  Thus to have non-invertible examples, we should let $d \leq N$.  Well-known examples include:

\begin{example}
  The 1-dualizable objects in $\cat{Vect}$ are the finite-dimensional vector spaces.  The 2-dualizable objects in $\cat{Mor}$ are the finite-dimensional semisimple algebras.
\end{example}

These examples are typical of deloopings of $\cat{Vect}$, in which such ``full'' dualizability is a strong ``finiteness'' condition which nevertheless admits interesting examples.  In terms of field theories, this suggests that $d$-dimensional fully-extended Schr\"odinger field theories are fairly rigid but can be quite nontrivial.  Similar results about $3$-dimensional Schr\"odinger-picture topological field theories are in~\cite{DSPS}.

What about Heisenberg-picture field theories?  \Cref{thm.Scheimbauer} implies that, for any $(\infty,k+1)$-category $\cS$ and any $n$, \emph{every} object of the $(\infty,n+k+1)$-category $\cat{Alg}_n^{\mathrm{lax}}(\cS)$ is $n$-dualizable.  Thus $0$-affine Heisenberg-picture topological field theories (when $n = d$ and $k = 0$) are quite flexible.  But they are not ``fully dualizable'': even when $\cS$ is an $(\infty,1)$-category, $\cat{Alg}_n^{\mathrm{lax}}(\cS)$ is an $(\infty,n+1)$-category.  In fact, asking for any more dualizablity collapses the whole enterprise:

\begin{theorem}[\cite{GwSch18}]\label{theorem about full dualizability}
  Let $\cS$ be a symmetric monoidal $(\infty,k+1)$-category.  Then the groupoid of $(n+k+1)$-dualizable objects in $\cat{Alg}_n^{\mathrm{lax}}(\cS)$ is contractible --- every $(n+k+1)$-dualizable object is canonically equivalent to the unit object $\unit$.
\end{theorem}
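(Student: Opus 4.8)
The plan is to establish two things separately --- that $\unit$ is $(n+k+1)$-dualizable, so the groupoid is nonempty, and that every $(n+k+1)$-dualizable object is canonically equivalent to $\unit$ --- and then to upgrade the second to contractibility of the whole groupoid by running it also on the higher self-morphisms of $\unit$. That $\unit$ is $(n+k+1)$-dualizable is immediate: $\otimes\,\unit$ is canonically isomorphic to the identity functor, hence an equivalence, and an equivalence in any higher category is fully dualizable --- it is $1$-dualizable with invertible evaluation and coevaluation, which are again equivalences and hence again fully dualizable, and so on.

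The converse is already transparent when $n=k=0$, where $\cat{Alg}_0^{\mathrm{lax}}(\cS)$ is the category of pointed objects $(X_0,1_{X_0}\colon\unit_\cS\to X_0)$ of an $(\infty,1)$-category $\cS$. Let $(X_0,1_{X_0})$ be $1$-dualizable, with dual $(X_0^\vee,1_{X_0^\vee})$. Because the coevaluation and evaluation are maps of pointed objects --- and because the structural $2$-morphisms of $\cS$ are invertible, $\cS$ being an $(\infty,1)$-category --- the underlying coevaluation $\unit_\cS\to X_0\otimes X_0^\vee$ is forced to equal $1_{X_0}\otimes 1_{X_0^\vee}$, and the pointing datum of the evaluation gives $\mathrm{ev}_\cS\circ(1_{X_0^\vee}\otimes 1_{X_0})\simeq\mathrm{id}_{\unit_\cS}$. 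Substituting $\mathrm{coev}_\cS=1_{X_0}\otimes 1_{X_0^\vee}$ into a zig-zag identity exhibits $1_{X_0}$ as a split epimorphism with a retraction $r$, while the evaluation datum says $r\circ 1_{X_0}\simeq\mathrm{id}_{\unit_\cS}$; hence $1_{X_0}$ is an equivalence and $(X_0,1_{X_0})\simeq\unit$. (Over $\cat{Vect}$ the obstruction is visible to the naked eye: $1_{X_0}\otimes 1_{X_0^\vee}$ is a tensor of rank $\leq 1$, whereas a coevaluation witnessing a duality must be nondegenerate, forcing $\dim X_0\leq 1$.)

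For general $n$ and $k$ I would organize the argument around the tower of iterated evaluations/coevaluations forced into existence by $(n+k+1)$-dualizability of an object $X$: the depth-$j$ iterated (co)evaluation is a $j$-morphism required to be $(n+k+1-j)$-dualizable, and at depth $j=n+k$ one reaches $(n+k)$-morphisms required only to be $1$-dualizable, which --- being $1$-dualizable morphisms of the $(\infty,1)$-category of parallel $(n+k-1)$-morphisms --- must be \emph{equivalences} (the elementary observation recorded just after \cref{defn.dualizable}). Propagating this invertibility back down the tower should, as in the base case, force every layer of lax pointing data, and ultimately the pointing on $X$ itself (the level-$n$ datum of \cref{higher alg n}) together with the $n$-algebra structure, to be trivial --- so $X\simeq\unit$ --- using along the way that an invertible algebra object of $\cS$ is automatically trivial (its unit map is forced to be an equivalence by unitality and invertibility) together with part~(2) of \cref{thm.Scheimbauer}. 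Running the same analysis on the identity $1$-endomorphism of $\unit$ and its higher self-morphisms shows the automorphism $\infty$-group of $\unit$ is trivial, so the groupoid of $(n+k+1)$-dualizable objects is contractible; equivalently, by the cobordism hypothesis \cite{Lur09}, $\cat{Bord}^{\mathrm{fr}}_{n+k+1}$ carries an essentially unique symmetric monoidal functor to $\cat{Alg}_n^{\mathrm{lax}}(\cS)$, namely the trivial one.

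The step I expect to be the real work is precisely this last bookkeeping in the general case: setting up the iterated-(co)evaluation tower in $\cat{Alg}_n^{\mathrm{lax}}(\cS)$ carefully enough to identify which structure map is pinned at each floor, and verifying that invertibility genuinely cascades from level $n+k$ down through every layer of lax pointing data to the pointing at level $n$ and the algebra structure at level $0$. An iterated-Morita identity $\cat{Alg}_n^{\mathrm{lax}}(\cS)\simeq\cat{Alg}_1^{\mathrm{lax}}\!\bigl(\cat{Alg}_{n-1}^{\mathrm{lax}}(\cS)\bigr)$, provable by the methods of \cite{JFS} exactly as in the non-lax setting, should reduce the essential content to small $n$, after which the tower has few enough floors to handle directly.
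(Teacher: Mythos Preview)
Your base case $n=k=0$ is correct and is essentially a direct unpacking of the paper's argument in that special case, where laxness is automatically strong because $\cS$ is an $(\infty,1)$-category.

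However, your proposed general argument has a genuine gap.  The ``tower'' argument --- cascade invertibility of the depth-$(n+k)$ (co)evaluations down to depth~$1$ --- uses nothing specific about $\cat{Alg}_n^{\mathrm{lax}}(\cS)$: if it worked, it would show that $N$-dualizable objects in \emph{any} symmetric monoidal $(\infty,N)$-category are invertible.  That is false (e.g.\ $2$-dualizable objects of $\cat{Mor}$ are the finite-dimensional semisimple algebras).  The mistake is in the claim that the depth-$(n+k)$ (co)evaluations live in an $(\infty,1)$-category.  They are $(n+k)$-morphisms in the $(\infty,n+k+1)$-category $\cat{Alg}_n^{\mathrm{lax}}(\cS)$, so the relevant hom-category of parallel $(n+k-1)$-morphisms is $(\infty,2)$, not $(\infty,1)$; mere $1$-dualizability there does \emph{not} force invertibility.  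Your cascade needs one more level of dualizability than is hypothesized to get started.  (This is exactly the distinction the paper records just after \cref{defn.dualizable}: it is $(N{+}1)$-dualizable objects in an $(\infty,N)$-category that are forced to be invertible, not $N$-dualizable ones.)

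What the paper actually does uses the lax-pointed structure in an essential way.  For $n=0$ (arbitrary $k$), it shows that the zig-zag identity in $\cat{Alg}_0^{\mathrm{lax}}(\cS)$ carries an extra $2$-cell (the triangle involving $1_F$ and $1_G$) which exhibits the pointing $1_X:\unit\to X$ itself as a $1$-dualizable $1$-morphism in $\cS$; inducting, $d$-dualizability of $(X,1_X)$ forces $d$-dualizability of $1_X$ as a $1$-morphism, and taking $d=k+1$ makes $1_X$ invertible.  This is the ingredient you cannot extract from a generic tower argument: the extra dualizability lands on $1_X$, not on $X$.  For general $n$, the paper does not iterate via $\cat{Alg}_1^{\mathrm{lax}}(\cat{Alg}_{n-1}^{\mathrm{lax}})$ (which would only reduce to $n=1$, a case your base argument does not cover) but instead builds a symmetric monoidal \emph{section} $\cat{Alg}_n^{\mathrm{lax}}(\cS)\to\cat{Alg}_0^{\mathrm{lax}}\bigl(\cat{Alg}_n^{\mathrm{lax}}(\cS)\bigr)$ of the forgetful functor (an Eilenberg--Watts-type map, $M\mapsto(M,M_M)$), thereby reducing directly to the $n=0$ case already handled.
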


\Cref{theorem about full dualizability} arose in conversations joint with Scheimbauer. 
I will give the idea of the proof. Providing all details of the second part of the argument would require working in too much detail for this paper with some particular model of $(\infty,n)$-categories. An alternate complete proof is available in \cite{GwSch18}, which appeared after a preprint of this paper circulated.

\begin{proof}
We begin by proving the claim when $k=0$.
  Let $(X,1_X)$ be a 1-dualizable object in $\cat{Alg}_0^{\mathrm{lax}}(\cS)$ with dual $(X,1_X)^* = (Y,1_Y)$.  Denote the evaluation and coevaluation maps by:
  \begin{gather*}
     (F,1_F) : (\unit,\id_\unit) \to (X,1_X) \otimes (Y,1_Y) \cong (X\otimes Y,1_X\otimes 1_Y), \\
     (G,1_G) : (Y\otimes X,1_Y\otimes 1_X) \cong (Y,1_Y) \otimes (X,1_X) \to (\unit,\id_\unit).
  \end{gather*}
  The compatibility conditions between the evaluation and coevaluation necessary for 1-dualizability assert that two equations, the first of which reads:
  $$ 
  \begin{tikzpicture}[anchor=base,baseline=(K.base)]
  \path 
  node (K) {$\unit$} 
  +(5,3) node (Xtop) {$X$} +(5,-3) node (Xbot) {$X$} +(5,0) node (XYX) {$X\otimes Y\otimes X$} ;
  \draw[->] (K) -- node[auto] (x) {$1_X$} (Xtop);
  \draw[->] (K) -- node[fill=white] (y) {$\scriptstyle 1_X \otimes 1_Y \otimes 1_X$} (XYX);
  \draw[->] (Xtop) -- node[auto] (F) {$F \otimes \id_X   $} (XYX);
  \draw[twoarrowshorter] (K) -- node[fill=white,inner sep=1pt,ellipse,pos=.6] {$ 1_F \otimes \id_{1_X}   $} (F.west);
  \draw[->] (K) -- node[auto,swap] (x) {$1_X$} (Xbot);  
  \draw[->] (XYX) -- node[auto] (F) {$ \id_X \otimes G $} (Xbot);
  \draw[twoarrowshorter] (K) -- node[fill=white,inner sep=1pt,ellipse,pos=.6] {$  \id_{1_X} \otimes 1_G $} (F.west);
\end{tikzpicture}
\quad = \quad
  \begin{tikzpicture}[anchor=base,baseline=(K.base)]
  \path 
  node (K) {$\unit$} 
  +(3,1) node (Xtop) {$X$} +(3,-1) node (Xbot) {$X$} ;
  \draw[->] (K) -- node[auto] (x) {$1_X$} (Xtop);
  \draw[->] (K) -- node[auto,swap] (x) {$1_X$} (Xbot);  
  \draw[->] (Xtop) -- node[auto] (idX) {$\id_X$} (Xbot);
  \draw[twoarrowshorter] (K) -- node[fill=white,inner sep=1pt,pos=.6] {$ \id_{1_X}   $} (idX);
\end{tikzpicture}
 $$
  Considering just $X,Y,F,G$, one finds that $X$ and $Y$ are duals in $\cS$.  Unpacking the extra conditions coming from $1_F$ and $1_G$ gives two commuting triangles, the first of which reads:
  $$
  \begin{tikzpicture}[anchor=base,baseline=(A.base),auto]
    \path node (A) {$1_X$}
       ++(6,0) node (B) {$1_X \circ \bigl(G \circ (1_Y \otimes \id_X)\bigr) \circ 1_X$}
       +(0,1) node (B1) {$1_X \otimes \bigl(G\circ (1_Y \otimes 1_X)\bigr)$}
       ++(6,0) node (C) {$1_X$}
       +(0,1) node (C1) {$(\id_X \otimes G) \circ (F\otimes \id_X) \circ 1_X$};
    \path (B) -- node[rotate=90,anchor=mid] {$\cong$} (B1);
    \path (C) -- node[rotate=90,anchor=mid] {$\cong$} (C1);
    \draw[twoarrowlonger] (A) -- node {$\id_{1_X} \otimes 1_G$} (B);
    \draw[twoarrowlonger] (B) -- node {$1_F \otimes \id_{1_X}$} (C);
    \draw[twoarrowlonger] (A) ..controls +(3,-1.5) and +(-3,-1.5) .. node[swap] {$\id_{1_X}$} (C);
  \end{tikzpicture}
  $$
  These triangles precisely assert that $1_X : \unit \to X$ and $G \circ (1_Y \otimes \id_X) : X \to \unit$ are dual 1-morphisms.  In particular, $1_X$ is 1-dualizable.
  
  An induction implies more generally that a $d$-dualizable object $(X,1_X) \in \cat{Alg}_0^{\mathrm{lax}}(\cS)$ consists of a $d$-dualizable object $X \in \cS$ along with a $d$-dualizable 1-morphism $1_X : \unit \to X$.  But, as remarked above, an $(n+1)$-dualizable 1-morphism in an $(\infty,n+1)$-category is necessarily invertible.  Thus the pointing $1_X$ furnishes a canonical equivalence between $(X,1_X)$ and $(\unit,\id_\unit)$.
  
  Now let $k$ be arbitrary.  I will describe a canonical symmetric monoidal functor $\cat{Alg}_n^{\mathrm{lax}}(\cS) \to \cat{Alg}_0^{\mathrm{lax}}(\cat{Alg}_n^{\mathrm{lax}}(\cS))$ which splits the forgetful map $\cat{Alg}_0^{\mathrm{lax}}(\cat{Alg}_n^{\mathrm{lax}}(\cS)) \to \cat{Alg}_n^{\mathrm{lax}}(\cS)$.  (This splitting is a version of the Eilenberg--Watts functor considered in \cref{eg.EW}.)  With such a functor in hand, any $(n+k+1)$-dualizable object $X \in \cat{Alg}_n^{\mathrm{lax}}(\cS)$ determines an $(n+k+1)$-dualizable object in $\cat{Alg}_0^{\mathrm{lax}}(\cat{Alg}_n^{\mathrm{lax}}(\cS))$, which by above is trivial, and so $X$ is trivial.
  
  First, let $M$ be an $m$-morphism in $\cat{Alg}_n^{\mathrm{lax}}(\cS)$ for $m<n$; i.e.\ an $(n-m)$-algebra in $\cS$, with $n-m\geq 1$, with some compatible actions by some $(n-m+1)$-algebras.  We map it to the pointed $m$-morphism $(M,M_M)$, where $M_M$ is the ``right regular $M$-module,'' which is $M$ treated as an $(n-m-1)$-algebra with the canonical right $M$-action.
In terms of the factorization algebra pictures used in   \cite{ClaudiaThesis} to define $\cat{Alg}_n^{\mathrm{lax}}(\cS)$, the $m$-morphism $M$ is described by a factorization algebra on $\bR^n$ which is locally constant with respect to the stratification given by the subspaces $\{x_1=0\}$, $\{x_1 = x_2 = 0\}$, $\dots$, $\{x_1 = \dots = x_m = 0\}$. 
  The pointed $m$-morphism $(M,M_M)$ is given by pushing $M$ forward along the constructible map
$$ \vec x \mapsto \begin{cases} \vec x& \text{ if } x_1 + \dots + x_n \geq 0; \\
(x_1,\dots,x_{n-1},-x_1-\dots-x_{n-1}) & \text{ if }  x_1 + \dots + x_n \leq 0. \end{cases} $$

Let $M$ now by an $n$-morphism with source $S$ and target $T$.  Then $S$ and $T$ are $1$-algebras and $M = (_SM_T,1_M)$ is a pointed $S$-$T$-bimodule (subject to compatibility conditions coming from the common source and target of $S$ and $T$).  The composition $M(S_S) = S_S \otimes_S {_S M _T}$ is the module $M_T$ in given by forgetting the $S$-action.  The pointing $1_M$ then determines a (strong, and in particular lax) pointed $T$-module map $T_T \to M_T$ given by $1_T \mapsto 1_M$.  Thus $(M,1_M)$ ``is'' an $n$-morphism in $\cat{Alg}_0^{\mathrm{lax}}(\cat{Alg}_n^{\mathrm{lax}}(\cS))$.  The construction $M \mapsto (M,1_M)$ extends naturally to lax pointed bimodule morphisms, and completes the description of the splitting $\cat{Alg}_n^{\mathrm{lax}}(\cS) \to \cat{Alg}_0^{\mathrm{lax}}(\cat{Alg}_n^{\mathrm{lax}}(\cS))$.
\end{proof}

In summary, the interesting dualizability questions related to $k$-affine Heisenberg-picture topological quantum field theory are about $d$-dualizability in $\cat{Alg}_n(\cS)$ for $\cS$ a $k$-fold delooping of $\cat{Vect}$ and $n+1 \leq d \leq n+k$.  

\section{From factorization algebra to Heisenberg-picture field theory} \label{factorization algebra example}

I would like now to explain an example which is based on unpublished work of Dwyer, Stolz, and Teichner.  I will outline my interpretation of, and elaboration upon, some parts of their construction, but details will need to wait for future work.  

\begin{definition}
Let $\cG$ be a not-necessarily-topological local geometry for $d$-dimensional manifolds, and $\cat{Emb}_d^\cG$ the symmetric monoidal  category of possibly-open $d$-dimensional $\cG$-geometric manifolds, with symmetric monoidal structure given by disjoint union.  Given a target category $\cS$, a \define{$\cG$-geometric prefactorization algebra valued in $\cS$} is a symmetric monoidal functor $F: \cat{Emb}_d^\cG \to \cS$.  In particular, for every $M \in \cat{Emb}_d^\cG$, $F(M)$ is pointed by $F(\emptyset \hookrightarrow M) : \unit_\cS \to F(M)$, and each embedding $M_1 \sqcup M_2 \hookrightarrow M$ determines a ``multiplication'' map $F(M_1) \otimes F(M_2) \to F(M)$.  A prefactorization algebra is a \define{factorization algebra} if it is local for the Weiss topology (see \cite{Ginot,CG,ClaudiaThesis} for details, but note that this usage of the phrase ``(pre)factorization algebra'' is not quite the same as the one in \cite{CG}).
\end{definition}

We will need two variations of the notion of ``factorization algebra.''  Let $X$ be a topological space.  A \define{factorization algebra on  $X$} is a precosheaf $F$ on $X$, local for the Weiss topology, such that if $U$ and $V$ are disjoint opens, then $F(U\sqcup V) \cong F(U) \otimes F(V)$.  Let $(X,\ast \in X)$ be a pointed topological space.  An \define{unpointed point-module on $(X,\ast\in X)$} is similar, but whereas in a factorization algebra there are maps $F(U) \to F(V)$ for every inclusion $U \subseteq V$ of open sets in $X$, in an unpointed point-module we do not ask for such maps in the special case where $U\ni \ast$ but $V \not\ni\ast$.  These are called ``unpointed'' because in the special case when $X = \{\ast\}$, an unpointed point-module is just an object of $\cS$, whereas a factorization algebra is a pointed object.
The restriction of a $\cG$-geometric factorization algebra to a space with $\cG$-geometry is a factorization algebra on that space; given a factorization algebra on a pointed space, one can forget to an unpointed point-module.

When $\cG$ is a not-necessarily-topological ``rigid supergeometry,'' a construction of the unextended spacetime category $\cat{Spacetimes}_{d-1,d}^\cG$ is described in \cite{MR2742432}.  In that construction, the objects are $(d-1)$-dimensional closed manifolds $N$ equipped with germs of one-sided collars diffeomorphic to $N \times [0,\epsilon)$ with $\cG$-structures on $N \times (0,\epsilon)$.  A morphism from $N_2$ to $N_1$ is a $d$-dimensional manifold $M$ with boundary $\partial M \cong N_1 \sqcup N_2$ and a germ of an extension of $M$ past $N_2$:
$$ \begin{tikzpicture}
  \draw[thick] (-.5,.7) arc (90:270:.35 and .7);
  \draw[thin] (-.5,.7) arc (90:-90:.35 and .7);
  \draw[thick,dotted] (0,.5) arc (90:270:.25 and .5);
  \draw[thin,dotted] (0,.5) arc (90:-90:.25 and .5);
  \draw[thick] (4,.5) arc (90:270:.25 and .5);
  \draw[thin] (4,.5) arc (90:-90:.25 and .5);
  \draw[thick,dotted] (4.5,-.6) arc (-90:270:.3 and .6);
  \draw[thick] (-.5,.7) -- (0,.5) .. controls +(1,-.4) and +(-1,0) .. (2,1) .. controls +(1,0) and +(-1,0) .. (4,.5) .. controls +(.1,0) and +(-.2,-.1) .. (4.5,.6);
  \draw[thick] (-.5,-.7) -- (0,-.5) .. controls +(1,.4) and +(-1,0) .. (2,-1) .. controls +(1,0) and +(-1,0) .. (4,-.5) .. controls +(.1,0) and +(-.2,.1) .. (4.5,-.6);
  \draw[thick] (1.5,0) .. controls +(.3,.2) and +(-.3,.2) .. (2.5,0);
  \draw[thick] (1.35,.1) -- (1.5,0) .. controls +(.3,-.2) and +(-.3,-.2) .. (2.5,0) -- (2.65,.1);
  \draw[thick,decoration={brace,amplitude=3},decorate] (.1,-.9) -- coordinate (N1) (-.6,-.9) (N1) +(0,-.1) node[anchor=north]  {$N_1$};
  \draw[thick,decoration={brace,amplitude=3},decorate] (4.6,-.9) -- coordinate (N2) (3.9,-.9) (N2) +(0,-.1) node[anchor=north]  {$N_2$};
  \node at (2,-.5) {$M$};
\end{tikzpicture} $$

Given a $\cG$-geometric factorization algebra $F$, we will build a Heisenberg-picture field theory $\cZ_F:\cat{Spacetimes}_{d-1,d}^\cG \to \cat{Alg}_0(\cat{Cocomp}_\bK)$ for this version of $\cat{Spacetimes}_{d-1,d}^\cG$.

\begin{definitionnodiamond}
  Fix a $\cG$-geometric factorization algebra $F : \cat{Emb}_d^\cG \to \cat{Vect}_\bK$.
  Let $M$ be a (possibly open) $\cG$-manifold with boundary.  Let $M / \partial M$ denote the quotient (in topological spaces) in which $\partial M$ is contracted to a point; it is naturally pointed by that point.  The category $\cat{BC}_F(M)$ of \define{boundary conditions} for $F$ on $M$ is 
  $$ \cat{BC}_F(M) = \bigl\{\text{unpointed point-modules $\tilde F$ on $M / \partial M$ s.t.\ $\tilde F|_{M \sminus \partial M} = F|_{M \sminus \partial M}$}\bigr\}. $$
  The \define{free boundary condition} is the boundary condition given by pushing forward of $F|_{M \sminus \partial M}$ along $M \sminus \partial M \to M / \partial M$.
\end{definitionnodiamond}

\begin{lemma}\label{lemma.germiness of boundary conditions}
  The category of boundary conditions for $F$ on $M$ depends only on the germ of a $\cG$-manifold around $\partial M$.
\end{lemma}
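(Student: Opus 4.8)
The plan is to show that for every open $W\subseteq M$ with $\partial M\subseteq W$, restriction of unpointed point-modules is an equivalence $\rho_W\colon \cat{BC}_F(M)\to \cat{BC}_F(W)$; the lemma then follows formally. Such a $W$ is itself a $\cG$-manifold (with the restricted geometry) and has $\partial W=\partial M$, so $\cat{BC}_F(W)$ makes sense; moreover the quotient map $q\colon M\to M/\partial M$ restricts to an open embedding $W/\partial M\mono M/\partial M$ onto the saturated open subset $q(W)$, which contains $\ast$, with $q|_W$ carrying the basepoint of $W/\partial M$ to $\ast$. Pulling back an unpointed point-module along this open embedding preserves being unpointed and being equal to $F$ away from the basepoint, which defines $\rho_W$; here I read $\cat{BC}_F(M)$ as the category whose morphisms are maps of unpointed point-modules restricting to the identity on $F|_{M\sminus\partial M}$. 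A germ-isomorphism of $M$ and $M'$ around their boundaries is by definition a $\cG$-isomorphism $\phi\colon(W,\partial M)\to(W',\partial M')$ for some such $W\subseteq M$, $W'\subseteq M'$; since $F$ is a symmetric monoidal functor on $\cat{Emb}_d^\cG$, $\phi$ carries $F$ on $W\sminus\partial M$ isomorphically onto $F$ on $W'\sminus\partial M'$ and hence induces an equivalence $\phi_*\colon\cat{BC}_F(W)\to\cat{BC}_F(W')$. The composite
$$ \cat{BC}_F(M)\;\xrightarrow{\ \rho_W\ }\;\cat{BC}_F(W)\;\xrightarrow{\ \phi_*\ }\;\cat{BC}_F(W')\;\xleftarrow{\ \rho_{W'}\ }\;\cat{BC}_F(M') $$
is then the desired equivalence $\cat{BC}_F(M)\simeq\cat{BC}_F(M')$; since the $\rho_W$ are compatible with shrinking $W$, one in fact obtains an identification natural over the groupoid of germs.

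The substance of the argument is that $\rho_W$ is an equivalence, and here one must actually invoke Weiss-locality rather than abstract nonsense: an unpointed point-module $\tilde F$ on $M/\partial M$ restricting to $F$ on $M\sminus\partial M$ should be reconstructed, freely and uniquely, from its restriction to $W/\partial M$ together with the fixed $F$. The key step is a choice of Weiss cover. For an open $V\ni\ast$ in $M/\partial M$, consider the opens of the form $V_0\sqcup U_1\sqcup\dots\sqcup U_k$ with $V_0\ni\ast$ contained in $W/\partial M$ and with $U_1,\dots,U_k\subseteq M\sminus\partial M$ pairwise disjoint and disjoint from $V_0$. These form a Weiss cover of $V$: given a finite $S\subseteq V$ with $\ast\in S$, shrink $V_0$ to a neighborhood of $\ast$ inside $W/\partial M$ missing $S\sminus\{\ast\}$ and surround each point of $S\sminus\{\ast\}$ by a tiny disjoint $U_i$ (and if $\ast\notin S$, take $V_0$ disjoint from $S$ and a single $U_1\supseteq S$). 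On such an open the disjoint-union axiom forces
$$ \tilde F(V_0\sqcup U_1\sqcup\dots\sqcup U_k)\;\cong\;\tilde F(V_0)\otimes F(U_1)\otimes\dots\otimes F(U_k), $$
the right side depending only on the restriction of $\tilde F$ to $W/\partial M$ and on $F$ on $M\sminus\partial M$; the same holds for the iterated intersections appearing in the associated descent diagram. Weiss-locality then shows that $\rho_W$ is fully faithful --- a point-module, and a morphism of point-modules restricting to the identity on $F|_{M\sminus\partial M}$, are determined by their values on a Weiss basis of this form --- and essentially surjective: given $\tilde G\in\cat{BC}_F(W)$, define $\tilde F$ to be $F$ on opens avoiding $\ast$ and the colimit of the displayed formula (with $\tilde G$ in place of $\tilde F|_{W/\partial M}$) on opens containing $\ast$, and verify that this assembles into an unpointed point-module extending $\tilde G$.

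The main obstacle is precisely that last verification: that the colimit prescription genuinely defines an unpointed point-module --- functorial in open inclusions, compatible with the disjoint-union axiom, and Weiss-local --- and that the two constructions are mutually inverse up to coherent isomorphism. Doing this honestly requires committing to a concrete model for (pre)factorization algebras valued in $\cat{Vect}_\bK$ (or in $\cat{Cocomp}_\bK$), and controlling the relevant homotopy colimits; this is the model-dependent bookkeeping that the present paper defers to future work. Everything else --- the open embedding $W/\partial M\mono M/\partial M$, the transport of factorization algebras along the $\cG$-isomorphism $\phi$, and the assembly of the three equivalences above --- is formal. Compare \cite{CG} for Weiss-local (pre)factorization algebras and \cite{MR2742432} for the construction of $\cat{Spacetimes}_{d-1,d}^\cG$ used here.
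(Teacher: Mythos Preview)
Your proposal is correct and follows essentially the same approach as the paper: restrict unpointed point-modules to a collar neighborhood, and use Weiss-locality to glue an extension back from the restriction together with the ambient $F$. The paper's proof is much terser --- it simply asserts that restriction and ``gluing with $F|_{M\sminus\partial M}$'' are inverse via descent/locality of factorization algebras --- whereas you unpack this by exhibiting the explicit Weiss cover (opens of the form $V_0\sqcup U_1\sqcup\dots\sqcup U_k$ with $V_0$ a small neighborhood of $\ast$) and by spelling out the formal reduction to ``$\rho_W$ is an equivalence for every collar $W$'' before transporting along a germ-isomorphism; your honest acknowledgment that the final verification of the glued object being an unpointed point-module is model-dependent bookkeeping matches the paper's own deferral to \cite{ClaudiaDamien2}.
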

\begin{proof}
  This follows from  descent/locality of factorization algebras \cite{ClaudiaThesis}.  Indeed, suppose that 
  $M' \hookrightarrow M$ is a submanifold with boundary such that the inclusion maps $\partial M' \isom \partial M$.  One can restrict unpointed point-modules on $M/\partial M$ to unpointed point-modules on $M' / \partial M'$.  Conversely, suppose we are given an unpointed point-module $\tilde F$ on $M' / \partial M'$ whose restriction to $M' \sminus \partial M'$ is $F|_{M' \sminus \partial M'}$.  Then it can be canonically extended to an unpointed point-module on $M / \partial M$ by gluing with $F |_{M \sminus \partial M}$.
\end{proof}

\begin{proposition}\label{prop.description of FN}
  Let $M$ be a manifold with boundary.  Its category $\cat{BC}_F(M)$ of boundary conditions is cocomplete and $\bK$-linear.
\end{proposition}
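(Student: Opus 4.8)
The plan is to reorganise $\cat{BC}_F(M)$ into a category of modules over an ``algebra of observables in a collar of $\partial M$,'' for which both asserted properties are standard, and to dispose of the constraint $\tilde F|_{M\sminus\partial M}=F|_{M\sminus\partial M}$ by observing that it becomes vacuous after the reorganisation. The $\bK$-linearity part is immediate and I would dispatch it first: by the definition of $\cat{BC}_F(M)$ a morphism $\tilde F \to \tilde F'$ is a family of maps $\tilde F(U) \to \tilde F'(U)$, one for each open $U \subseteq M/\partial M$, compatible with all restriction and disjoint-union multiplication maps and equal to the identity on every $U$ not containing the contracted point $[\partial M]$ (where $\tilde F = \tilde F' = F$ is prescribed). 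So the only genuine datum is the family indexed by the opens $U\ni[\partial M]$, whence $\hom_{\cat{BC}_F(M)}(\tilde F,\tilde F')$ is a $\bK$-linear subspace of $\prod_{U\ni[\partial M]}\hom_\bK(\tilde F(U),\tilde F'(U))$ and composition is $\bK$-bilinear; thus $\cat{BC}_F(M)$ is enriched in $\cat{Vect}_\bK$, and once cocompleteness is in hand its finite coproducts are automatically biproducts, so it is even additive.

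For cocompleteness I would first invoke \cref{lemma.germiness of boundary conditions} to replace $M$ by a collar $\partial M\times[0,\epsilon)$. An object of $\cat{BC}_F(M)$ is then the fixed factorization algebra $F$ on $\partial M\times(0,\epsilon)$ together with ``extra data at the contracted point,'' and --- using the Weiss-locality of $F$ in the collar direction, exactly as a locally constant factorization algebra on an interval assembles into an associative ($\bE_1$-)algebra --- this extra data organises into a left module over an algebra object $\cA_F$ of ``observables in the collar.'' (For non-topological $\cG$ the object $\cA_F$ is itself a factorization algebra on $\partial M\times(0,\epsilon)$ and ``module'' should be read in the stratified ``module supported at the boundary point'' sense of \cite{Ginot,CG,ClaudiaDamien2}.) The crucial observation is that in this description the side condition $\tilde F|_{M\sminus\partial M}=F|_{M\sminus\partial M}$ is \emph{vacuous}: it is built into the ambient algebra $\cA_F$, not into the module. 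Hence $\cat{BC}_F(M)$ is equivalent, with no remaining constraint, to the category $\cat{Mod}_{\cA_F}$ of such modules, and categories of modules over algebra objects in $\cat{Vect}_\bK$ (respectively, over factorization algebras in the above sense) are $\bK$-linear and cocomplete: the forgetful functor to $\cat{Vect}_\bK$ is monadic, the associated monad preserves all colimits, and hence the forgetful functor creates all small colimits, the module structure extending uniquely along each colimit (alternatively: such module categories are locally presentable, hence cocomplete). This gives the proposition. I would explicitly warn that these colimits are \emph{not} computed naively levelwise on all opens of $M/\partial M$ --- one must apply the reflector onto Weiss-local precosheaves --- as one already sees from the fact that the coproduct of two boundary conditions must still restrict to $F$, not to $F\oplus F$, away from $\partial M$.

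The real work, and the main obstacle, is the first two sentences of the cocompleteness step: constructing the collar algebra $\cA_F$ and the equivalence $\cat{BC}_F(M)\simeq\cat{Mod}_{\cA_F}$ for a general, not-necessarily-topological local geometry $\cG$, and checking that the Weiss-locality reflector is compatible with fixing the data off $\partial M$ --- equivalently, that restriction to $M\sminus\partial M$ is a well-behaved opfibration whose cocartesian lifts are the ``free-extension'' boundary conditions. This is precisely where the factorization-algebra machinery of \cite{Ginot,CG,ClaudiaDamien2} (and the unpublished Dwyer--Stolz--Teichner constructions referred to above) enters, and for the purposes of this paper I would be content to cite it; granting it, the remaining content is the routine fact that a category of modules is cocomplete and $\bK$-linear.
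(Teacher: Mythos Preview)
Your strategy---identify $\cat{BC}_F(M)$ with a module category and then invoke cocompleteness of module categories---is the right instinct, and the paper's proof has the same overall shape. But the execution is different and, crucially, self-contained where yours is not.

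Rather than collapsing the collar data into a single algebra object $\cA_F$ (which, as you correctly flag, is where all the difficulty lies and which for non-topological $\cG$ you can only gesture at), the paper works with a \emph{many-object} version. It builds a small $\bK$-linear category $\cC_F(M)$ whose objects are the open neighborhoods $A$ of $\partial M$, with $\hom(A,B)=F(B\sminus A)$ when $A\subseteq B$ and $0$ otherwise, composition coming from the factorization structure maps. Sending a boundary condition $\tilde F$ to the functor $A\mapsto\tilde F(A\sminus\partial M)$ gives a fully faithful embedding $\cat{BC}_F(M)\hookrightarrow\cat{Fun}_\bK(\cC_F(M),\cat{Vect}_\bK)$, and the Weiss-locality axiom cuts out precisely the models of a colimit sketch on $\cC_F(M)$. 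Hence $\cat{BC}_F(M)$ is locally presentable and in particular cocomplete; $\bK$-linearity is inherited from the ambient functor category.

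What this buys: the argument is uniform in $\cG$ and needs no appeal to an algebra $\cA_F$ whose construction you defer. Your route is correct when $F$ is locally constant---indeed \cref{DST description of ZF} later confirms $\cZ_F(N)\simeq\cat{Mod}_{\int_N F}$ in that case---but for general $\cG$ the equivalence $\cat{BC}_F(M)\simeq\cat{Mod}_{\cA_F}$ is not supplied by \cite{Ginot,CG,ClaudiaDamien2} in the form you need, so your proof has a genuine gap exactly where you say it does. The many-object/sketch formulation closes it without additional input: the category $\cC_F(M)$ is defined directly from the values of $F$, with no local-constancy hypothesis, and ``models of a colimit sketch'' replaces the monadicity argument you had in mind.
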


The proof will show indeed that $\cat{BC}_F(M)$ is locally presentable.

\begin{proof}
  I will give an alternate characterization of $\cat{BC}_F(M)$, closer to the construction of Dwyer, Stolz, and Teichner.  
  Consider the poset $\cP(M)$ whose objects are open neighborhoods of $\partial M$, ordered by inclusion.  We build a linear category $\cC_F(M)$ whose objects are the elements of $\cP(M)$ and whose morphisms are:
  $$ \hom(A,B) = \begin{cases}
    0, & \text{if } A \not \subseteq B \\
    F(B \sminus A), & \text{if } A \subseteq B. \\
  \end{cases} $$
  Note in particular that $\hom(A,A) = F(\emptyset) = \bK$.  The composition is given by the factorization algebra structure maps $F(A \sminus B) \otimes F(B \sminus C) \cong F((A \sminus B) \cup (B \sminus C)) \to F(A \sminus C)$.
  
  Let $\cat{Fun}_\bK(\cA,\cB)$ denote the category of $\bK$-linear functors between $\bK$-linear categories $\cA$ and $\cB$.
  There is a fully faithful embedding $\cat{BC}_F(M) \hookrightarrow \cat{Fun}_\bK(\cC_F(M),\cat{Vect}_\bK)$ that sends each unpointed point-module $\tilde F$ to the functor
  $ A \mapsto \tilde F(A \sminus \partial M). $  Functoriality is given by the factorization algebra structure.
  Since $\cat{Fun}_\bK(\cC_F(M),\cat{Vect}_\bK)$ is $\bK$-linear, so is $\cat{BC}_F(M)$.
  
  The essential image of this embedding consists of those $\bK$-linear functors $\cC(N) \to \cat{Vect}_\bK$ satisfying an appropriate ``locality'' condition coming from the locality condition for factorization algebras.  This locality condition consists of a series of assertions of the following form: $\tilde F(A)$ arises as a weighted colimit of a diagram each term of which is $\tilde F(B)$ for some $B \subsetneq A$, with the weightings being tensor products of $F(A \sminus B)$s.
  
  Thus $\cat{BC}_F(M)$ is the category of models of a colimit sketch structure on $\cC_F(M)$, and is therefore locally presentable and in particular cocomplete.
\end{proof}

With \cref{lemma.germiness of boundary conditions,prop.description of FN} in hand, we may define:

\begin{definition}
  Given an object $N \in \cat{Spacetimes}_{d-1,d}^\cG$ and a factorization algebra $F : \cat{Emb}_d^\cG \to \cat{Vect}_\bK$, we set $\cZ_F(N) = \cat{BC}_F(N \times[0,\epsilon)) \in \cat{Alg}_0^{\mathrm{lax}}(\cat{Cocomp}_\bK)$, where the pointing is via the free boundary condition.
\end{definition}

\begin{lemma}\label{lemma.symmetric monoidality of ZF}
  The assignment $N \mapsto \cZ_F(N)$ is symmetric monoidal in the sense that there are canonical equivalences $\cZ_F(\emptyset) = \cat{Vect}_\bK$ and $\cZ_F(N_1 \sqcup N_2) = \cZ_F(N_1) \boxtimes \cZ_F(N_2)$.
\end{lemma}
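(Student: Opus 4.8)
The plan is to prove both equivalences by passing through the alternate description of $\cat{BC}_F$ from \cref{prop.description of FN}, which turns the statement into bookkeeping about colimit sketches, and then to invoke the symmetric monoidality of $F$ on disjoint unions of $\cG$-manifolds together with the fact that Deligne's tensor product $\boxtimes$ of locally presentable $\bK$-linear categories is computed on colimit sketches by ``tensoring the sketches'' \cite[Section 6.5]{MR2177301}.

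First I would handle the unit. Since $\emptyset\times[0,\epsilon) = \emptyset$ has empty boundary, the pointed space $\emptyset/\partial\emptyset$ is a single point, and, as observed just after the definition of unpointed point-modules, an unpointed point-module on a one-point space is nothing but an object of $\cat{Vect}_\bK$, the agreement condition on $\emptyset\sminus\partial\emptyset = \emptyset$ being vacuous; hence $\cZ_F(\emptyset) = \cat{Vect}_\bK$ as a category. For the pointing, the free boundary condition is the pushforward of $F|_{\emptyset}$ along $\emptyset\to\{\ast\}$, whose value on $\{\ast\}$ is $F(\emptyset) = \unit_{\cat{Vect}_\bK} = \bK$; this is exactly the canonical pointing of $\cat{Vect}_\bK = \unit_{\cat{Cocomp}_\bK}$ regarded as an object of $\cat{Alg}_0^{\mathrm{lax}}(\cat{Cocomp}_\bK)$, so $\cZ_F(\emptyset) = \cat{Vect}_\bK$ on the nose.

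For the disjoint union, set $M_i = N_i\times[0,\epsilon)$, so that $M_1\sqcup M_2 = (N_1\sqcup N_2)\times[0,\epsilon)$ with $\partial(M_1\sqcup M_2) = \partial M_1\sqcup\partial M_2$. Every open neighborhood $A$ of this boundary splits uniquely as $A = A_1\sqcup A_2$ with $A_i\in\cP(M_i)$, so $\cP(M_1\sqcup M_2) = \cP(M_1)\times\cP(M_2)$; and for $A\subseteq B$ one has $B\sminus A = (B_1\sminus A_1)\sqcup(B_2\sminus A_2)$, whence symmetric monoidality of $F$ gives $\hom_{\cC_F(M_1\sqcup M_2)}(A,B)\cong F(B_1\sminus A_1)\otimes F(B_2\sminus A_2)$. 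Thus $\cC_F(M_1\sqcup M_2)$ is canonically the tensor product $\cC_F(M_1)\otimes_\bK\cC_F(M_2)$ of $\bK$-linear categories, and $\cat{Fun}_\bK(\cC_F(M_1\sqcup M_2),\cat{Vect}_\bK)\simeq\cat{Fun}_\bK(\cC_F(M_1),\cat{Fun}_\bK(\cC_F(M_2),\cat{Vect}_\bK))$. Under this identification I would check that the locality (colimit-sketch) condition defining $\cat{BC}_F(M_1\sqcup M_2)$ is precisely the conjunction of the two one-sided conditions: Weiss covers of a disjoint union $A_1\sqcup A_2$ which are themselves disjoint unions of a Weiss cover of $A_1$ and a Weiss cover of $A_2$ are cofinal among all Weiss covers of $A_1\sqcup A_2$, and since $F(U\sqcup V)\cong F(U)\otimes F(V)$ the weighted-colimit relation attached to such a cover is the tensor product of the relations attached to the two factor covers. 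Hence the colimit sketch presenting $\cat{BC}_F(M_1\sqcup M_2)$ on $\cC_F(M_1)\otimes_\bK\cC_F(M_2)$ is the tensor product of the sketches presenting $\cat{BC}_F(M_1)$ and $\cat{BC}_F(M_2)$, and \cite[Section 6.5]{MR2177301} then yields $\cat{BC}_F(M_1\sqcup M_2)\simeq\cat{BC}_F(M_1)\boxtimes\cat{BC}_F(M_2)$.

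It remains to match the pointings. Under the embedding of \cref{prop.description of FN} the free boundary condition on $M$ corresponds to the functor $A\mapsto F(A\sminus\partial M)$ on $\cP(M)$, so on $M_1\sqcup M_2$ it is $(A_1,A_2)\mapsto F(A_1\sminus\partial M_1)\otimes F(A_2\sminus\partial M_2)$, i.e.\ the external tensor product of the two free boundary conditions; and by the definition of the symmetric monoidal structure on $\cat{Alg}_0^{\mathrm{lax}}(\cat{Cocomp}_\bK)$ (\cref{defn.noncomstack}) this external tensor product is exactly the pointing of $\cZ_F(N_1)\boxtimes\cZ_F(N_2)$. Compatibility of the resulting equivalences with the associativity, unit, and symmetry constraints for $\boxtimes$ is then a routine coherence check. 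The step I expect to be the real obstacle is establishing that the locality condition for $M_1\sqcup M_2$ genuinely decouples --- namely the cofinality of ``product'' Weiss covers among all Weiss covers of a disjoint union, and the compatibility of the associated weighted colimits with $\otimes$ --- which is where the details deferred to future work in this section would have to be pinned down; granted that, the rest is formal.
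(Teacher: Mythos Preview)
Your proof is correct and follows essentially the same route as the paper: both arguments identify unpointed point-modules on a one-point space with $\cat{Vect}_\bK$ for the unit, and for the disjoint union both pass through the alternate description in \cref{prop.description of FN}, observe that $\cC_F(M_1\sqcup M_2)\cong\cC_F(M_1)\otimes_\bK\cC_F(M_2)$ via symmetric monoidality of $F$, and then invoke the compatibility of $\boxtimes$ with functor categories (the paper phrases this as $\cat{Fun}_\bK(\cA,\cat{Vect}_\bK)\boxtimes\cat{Fun}_\bK(\cB,\cat{Vect}_\bK)\simeq\cat{Fun}_\bK(\cA\otimes_\bK\cB,\cat{Vect}_\bK)$, you as tensoring of colimit sketches, but these are the same fact). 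Your version is more careful than the paper's about the pointings and about the decoupling of the locality condition across the two factors --- the paper simply asserts that the equivalence ``may be derived'' from the functor-category description and does not address the Weiss-cover cofinality you flag --- so the obstacle you identify is real but is equally present (and equally unresolved) in the paper's own argument.
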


\begin{proof}
  The category of boundary conditions for the empty set is the category of unpointed point-modules on $\pt$, which is easily seen to be $\cat{Vect}_\bK$.  Assume $N_1$ and $N_2$ are nonempty.  Given $\bK$-linear categories $\cA$ and $\cB$, define their \define{naive tensor product} $\cA \otimes_\bK \cB$ to be the category with object set $\operatorname{ob}(\cA) \times \operatorname{ob}(\cB)$ and morphisms $\hom((A_1,B_1),(A_2,B_2)) = \hom_\cA(A_1,A_2) \otimes \hom_\cB(B_1,B_2)$.  Then there is a natural equivalence $\cat{Fun}_\bK(\cA,\cat{Vect}_\bK) \boxtimes \cat{Fun}_\bK(\cB,\cat{Vect}_\bK) = \cat{Fun}_\bK(\cA \otimes_\bK \cB,\cat{Vect}_\bK)$.  From this and the description of boundary conditions given in the proof of \cref{prop.description of FN} one may derive the natural equivalence $\cZ_F(N_1 \sqcup N_2) = \cZ_F(N_1) \boxtimes \cZ_F(N_2)$.
\end{proof}

\begin{definition}\label{defn.ZF on morphisms}
We now extend $\cZ_F$ to morphisms in $\cat{Spacetimes}_{d-1,d}^\cG$.  Let $M : N_2 \to N_1$ be a $1$-morphism, where we have chosen representative collars $N_1 \times[0,\epsilon_1)$ around $N_1$ and $N_2 \times[0,\epsilon_2)$ around $N_2$.  Since Heisenberg-picture field theories are assumed to be contravariantly functorial, we want to build a functor $\cZ_F(M) : \cZ_F(N_1) \to \cZ_F(N_2)$.

  By \cref{lemma.germiness of boundary conditions}, $\cZ_F(N_1) = \cat{BC}_F(M \sqcup_{N_2} N_2 \times [0,\epsilon_2))$, as the inclusion $N_1 \times [0,\epsilon_1) \to M \sqcup_{N_2} N_2 \times [0,\epsilon_2)$ is an isomorphism near the boundary.
Given a boundary condition $\tilde F \in \cat{BC}_F(M \sqcup_{N_2} N_2 \times [0,\epsilon_2))$, we may push it forward along the map $(M \sqcup_{N_2} N_2 \times [0,\epsilon_2)) / N_1 \to (N_2 \times [0,\epsilon_2))/N_2$ that is the identity on $N_2 \times (0,\epsilon)$ and collapses $M$ to the point.  The resulting pushed-forward factorization algebra then restricts over $N_2 \times (0,\epsilon)$ to $F|_{N_2 \times (0,\epsilon)}$ and so defines a boundary condition on $N_2 \times [0,\epsilon_2)$.
We set $\cZ_F(M) : \cZ_F(N_1) \to \cZ_F(N_2)$ to be this push-forward operation.

 Cocontinuity of $\cZ_F(M)$  follows from \cref{DST description of ZF}.  Its  structure as a lax pointed functor comes from the canonical maps $F(A_2 \sminus N_2) \to F((M\sminus N_1) \sqcup_{N_2} A_2)$, where $A_2 \subseteq N_2 \times [0,\epsilon)$ is an open neighborhood of the boundary $N_2$.  This defines the functor $\cZ_F : \cat{Spacetimes}_{d-1,d}^\cG \to \cat{Alg}_0^{\mathrm{lax}}(\cat{Cocomp}_\bK)$; functoriality follows from the functoriality of push-forward of factorization algebras, and symmetric monoidality can be proven by extending \cref{lemma.symmetric monoidality of ZF}.
\end{definition}

\begin{remark}\label{DST description of ZF}
  One can give $\cZ_F(N)$ a much more explicit description, completing the comparison to the construction described by Dwyer, Stolz, and Teichner.  I will continue with the notation from the proof of \cref{prop.description of FN}.
  
  Set $M = N \times [0,\epsilon)$.  For each object $A \in \cC_F(M)$, consider the full subcategory $\cC_F(M)_{\subsetneq A}$ of $\cC_F(M)$ on the objects $B$ with $B \subsetneq A$.  There are restriction functors $\cat{Fun}_\bK(\cC_F(M)_{\subsetneq A},\cat{Vect}_\bK) \to \cat{Fun}_\bK(\cC_F(M)_{\subsetneq B},\cat{Vect}_\bK)$ for $B \subsetneq A$ whose left adjoints define \define{extension} functors $$\cat{Fun}_\bK(\cC_F(M)_{\subsetneq B},\cat{Vect}_\bK) \to \cat{Fun}_\bK(\cC_F(M)_{\subsetneq A},\cat{Vect}_\bK).$$
  Unpacking the locality condition for factorization algebras reveals 
  $$
    \cZ_F(N) = \cat{BC}_F(M) =  \lim_{A \to \partial M} \cat{Fun}_\bK(\cC_F(M)_{\subsetneq A},\cat{Vect}_\bK),
  $$
  where the limit is taken over $\cP(M)$ along the extension functors.  It is worth noting that the colimit along restriction functors vanishes because $\hom(A,B) = 0$ if $A \not \subseteq B$.

  Let $M$ now be a $\cG$-geometric cobordism between $N_1$ and $N_2$.  There is a canonical inclusion $\cC_F(N_2\times [0,\epsilon_2)) \hookrightarrow \cC_F(M)$ given by $B \mapsto M \sqcup_{N_2} B$, and hence a restriction functor 
  $$ \cat{Fun}_\bK(\cC_F(M\sqcup_{N_2} N_2 \times [0,\epsilon_2))_{\subsetneq M \sqcup_{N_2} B},\cat{Vect}_\bK) \to \cat{Fun}_\bK(\cC_F(N_2 \times [0,\epsilon_2))_{\subsetneq B},\cat{Vect}_\bK). $$
  Composing this with the projection $\cZ_F(N_1) = \cat{BC}_F(M\sqcup_{N_2} N_2 \times [0,\epsilon_2)) \to \cat{Fun}_\bK(\cC_F(M\sqcup_{N_2} N_2 \times [0,\epsilon_2))_{\subsetneq M \sqcup_{N_2} B},\cat{Vect}_\bK)$ gives a sequence of cocontinuous functors
  $$ \cZ_F(N_1) \to \cat{Fun}_\bK(\cC_F(N_2 \times [0,\epsilon_2))_{\subsetneq B},\cat{Vect}_\bK).$$
  One may directly check that these commute with the extension functors for $B \hookrightarrow B'$, and so define a cocontinuous functor $\cZ_F(N_2) \to \cZ_F(N_2)$, which is nothing but the functor $\cZ_F(M)$ from \cref{defn.ZF on morphisms}.
  
  As a final remark, note that limits can be computed by passing to cofinal subcategories, and so $\cZ_F(N)$ can be presented as a limit over categories just for those opens that contract onto $N$.  Moreover, consider pushing forward $F|_{N \times (0,\epsilon)}$ along the projection $N \times (0,\epsilon) \to (0,\epsilon)$ to produce a factorization algebra on $(0,\epsilon)$.  One can then consider the category of boundary conditions on $[0,\epsilon)$ for this factorization algebra.  
By comparing descriptions in terms of limits of functor categories along extension functors, one can show that this category is nothing but $\cZ_F(N)$.  With this description and some careful study of limits of locally presentable categories, one can show in particular that when $F$ is locally constant, $\cZ_F(N) = \cat{Mod}_{\int_N F}$, and so the construction above matches the factorization homology functor from \cite{ClaudiaThesis}.
\end{remark}

  The construction of $\cZ_F(N)$ did not require that $N$ was compact.  Suppose that $N$ is an open $(n-1)$-dimensional manifold along with a germ of a $\cG$-structure on $ N \times [0,\epsilon)$; then there is still a category $\cat{BC}_F(N\times [0,\epsilon))$ of boundary conditions for $F|_{N\times (0,\epsilon)}$, and it is still described as in the proof of \cref{prop.description of FN}. 
  
  \begin{theorem}
    The assignment $N \mapsto \cZ_F(N)$ defines a symmetric monoidal precosheaf (i.e.\ a prefactorization algebra) valued in $\cat{Cocomp}_\bK$ on the category $\cat{Emb}_{d-1}^\cG$ of $(d-1)$-dimensional manifolds equipped with germs of $\cG$-structures.
  \end{theorem}

In fact, $\cZ_F$ is a factorization algebra, but the proof will await  future work.

\begin{proof}
   Suppose that $N_1 \hookrightarrow N_2$ is an inclusion compatible with germs of $\cG$-structures.  The corresponding map $N_1 \times [0,\epsilon) \hookrightarrow N_2 \times [0,\epsilon)$ extends to an inclusion
   $$ \bigl(N_1 \times [0,\epsilon)\bigr) \underset{N_1 \times (0,\epsilon)}\sqcup \bigl( N_2 \times (0,\epsilon) \bigr) \hookrightarrow \bigl( N_2 \times [0,\epsilon)\bigr), $$
   which in turn descends to a continuous function
   $$ \bigl(N_1 \times [0,\epsilon) / N_1\bigr) \underset{N_1 \times (0,\epsilon)}\sqcup \bigl( N_2 \times (0,\epsilon) \bigr) \hookrightarrow \bigl( N_2 \times [0,\epsilon) / N_2\bigr) $$
   which is a bijection on points, although not in general a homeomorphism.
   
   Any boundary condition on $N_1 \times [0,\epsilon)$ extends canonically to an unpointed point-module on $\bigl(N_1 \times [0,\epsilon) / N_1\bigr) \sqcup_{N_1 \times (0,\epsilon)} \bigl( N_2 \times (0,\epsilon) \bigr)$ by gluing with $F|_{N_2 \times (0,\epsilon)}$.  Pushing forward along the bijection above defines the functor $\cZ_F(N_1) \to \cZ_F(N_2)$.
   
   Symmetric monoidality of $\cZ_F$ follows from \cref{lemma.symmetric monoidality of ZF}.
\end{proof}

Nowhere in the construction of $\cZ_F$ did we need that the factorization algebra $F$ took values in $\cat{Vect}_\bK$ --- any locally presentable target category would have sufficed.  
Let $\cat{Pres}_\bK$ denote the bicategory of $\bK$-linear locally presentable categories; it is a full subbicategory of $\cat{Cocomp}_\bK$.
 $\cat{Pres}_\bK$  seems in many ways like it is a ``locally presentable bicategory'': it is cocomplete \cite{BirdThesis}; every object has a presentation \cite[Theorem 1.46]{MR1294136}.
 
The theory of locally presentable higher categories is under active development; cf.\ \cite{MR3345192}.  Eventually, one should expect to be able to iterate the above construction $F \leadsto \cZ_F$ to produce an extended field theory $\cZ_{\cZ_F}$ valued in ``locally presentable $\cat{Pres}_\bK$-linear bicategories.''  Iterating further would produce a fully extended Heisenberg-picture field theory from a factorization algebra.

\section{From skein theory to  Heisenberg-picture field theory} \label{section.skein theory}

In this final example I would like to describe a family of topological Heisenberg-picture field theories which one can prove are not affine.  The construction is my interpretation of the Reshetikhin--Turaev invariants of knots and links \cite{MR1036112}.  The extension to a local field theory is inspired by Walker's work \cite{WalkerTQFTs,MR2806651}.  It is the three-dimensional extension of the two-dimensional quantum field theories studied in \cite{MR3847209,MR3874702} and is expected to match the answer given by \cite{MR3818096}.  
  Recall some standard definitions:

\begin{definitionnodiamond}
  Let $\cC = (\cC,\otimes,\dots)$ be a small $\bK$-linear  monoidal category.  (The ``$\dots$'' denote the auxiliary data of an associator, unit, and unitors that I will generally suppress.)     A \define{strong monoidal functor} $(F,f)$ consists of a functor $F$ and a natural isomorphism $f:F(-)\otimes F(-)\isom F(-\otimes -)$ expressing compatibility with the monoidal structure, which must itself be compatible with associators (and also an isomorphism expressing compatibility with the units, but that isomorphism can always be canonically strictified to an identity, and so I will suppress it from the notation).
  Let $\otimes^\op$ denote the opposite monoidal structure on $\cC$.
  
  The monoidal category $\cC$ is \define{braided} if it is equipped with a strong monoidal functor $(\id_\cC,\beta): (\cC,\otimes,\dots) \to (\cC,\otimes^{\op},\dots)$ whose underlying functor is the identity functor $\id_\cC$.  (This is equivalent to the usual hexagon relations for $\beta$.)
    A \define{full twist} for $\cC$ is is an isomorphism $\theta :  (\id_\cC,\beta^{-1}) \isom (\id_\cC,\beta)$ of monoidal functors between $(\cC,\otimes)$ and $(\cC,\otimes^{\op})$, or equivalently an isomorphism of monoidal endofunctors $\theta : (\id_{\cC},\id) \isom (\id_{\cC},\beta^2)$ of $(\cC,\otimes)$.  Spelled out, a full twist consists of a natural automorphism $\theta_X : X \isom X$ such that
  $$ \beta_{X,Y} \circ (\theta_X \otimes \theta_Y) =  \beta_{Y,X}^{-1} \circ \theta_{X\otimes Y}$$
  for all $X,Y \in \cC$.
  
    The name ``full twist'' comes from interpreting $\theta_{X}$ as a $360^{\circ}$ twist of a ribbon labeled by $X$:
$$ 
\begin{tikzpicture}[thick,baseline=(midpoint)]
  \coordinate (midpoint) at (.4,1.4);
  \path[fill=black!25] (.15,.6) .. controls +(.2,.2) and +(.2,-.2) .. (.15,1.1) .. controls +(-.2,-.2) and +(-.2,.2) .. (.15,.6);
  \draw[shorten >=2pt] (0,0) .. controls +(0,.5) and +(-.1,-.1) .. (.15,.6) .. controls +(.2,.2) and +(.2,-.2) .. (.15,1.1);
  \draw[shorten >=2pt] (.3,0) .. controls +(0,.5) and +(.1,-.1) .. (.15,.6) ;
  \draw[shorten <=2pt] (.15,.6) .. controls +(-.2,.2) and +(-.2,-.2) .. (.15,1.1) .. controls +(.1,.1) and +(0,-.5) .. (.3,1.7);
  \draw[shorten <=2pt] (.15,1.1) .. controls +(-.1,.1) and +(0,-.5) ..  (0,1.7);
  \path[fill=black!25] (.65,.6) .. controls +(.2,.2) and +(.2,-.2) .. (.65,1.1) .. controls +(-.2,-.2) and +(-.2,.2) .. (.65,.6);
  \draw[shorten >=2pt] (.5,0) .. controls +(0,.5) and +(-.1,-.1) .. (.65,.6) .. controls +(.2,.2) and +(.2,-.2) .. (.65,1.1);
  \draw[shorten >=2pt] (.8,0)  .. controls +(0,.5) and +(.1,-.1) .. (.65,.6) ;
  \draw[shorten <=2pt] (.65,.6) .. controls +(-.2,.2) and +(-.2,-.2) .. (.65,1.1) .. controls +(.1,.1) and +(0,-.5) .. (.8,1.7);
  \draw[shorten <=2pt] (.65,1.1) .. controls +(-.1,.1) and +(0,-.5) ..  (.5,1.7);
  \draw (.3,1.7) .. controls +(0,.1) and +(-.1,-.2) .. (.4,2) --  (.55,2.3) .. controls +(.1,.2) and +(0,-.1) .. (.8,2.9);
  \draw (0,1.7) .. controls +(0,.1) and +(-.1,-.2) .. (.25,2.3) -- (.4,2.6) .. controls +(.1,.2) and +(0,-.1) .. (.5,2.9);
  \draw[shorten >=2pt] (.5,1.7) .. controls +(0,.1) and +(.1,-.2) .. (.4,2) ;
  \draw[shorten <=2pt] (.25,2.3) .. controls +(-.1,.2) and +(0,-.1) .. (0,2.9);
  \draw[shorten >=2pt] (.8,1.7) .. controls +(0,.1) and +(.1,-.2) .. (.55,2.3);
  \draw[shorten <=2pt] (.4,2.6) .. controls +(-.1,.2) and +(0,-.1) .. (.3,2.9);
\end{tikzpicture}
\quad = \quad
\begin{tikzpicture}[thick,baseline=(midpoint)]
  \coordinate (midpoint) at (.4,1.4);
  \path[fill=black!25] (.4,.6) .. controls +(.4,.2) and +(.4,-.2) .. (.4,1.1) .. controls  +(.1,-.2) and +(.1,.2) .. (.4,.6);
  \path[fill=black!25] (.4,.6) .. controls +(-.4,.2) and +(-.4,-.2) .. (.4,1.1) .. controls  +(-.1,-.2) and +(-.1,.2) .. (.4,.6);
  \draw[shorten >=4pt] (0,0) .. controls +(0,.4) and +(-.2,-.1) .. (.4,.6) .. controls +(.4,.2) and +(.4,-.2) .. (.4,1.1);
  \draw[shorten >=2pt] (.3,0) .. controls +(0,.2) and +(-.1,-.2) .. (.4,.6);
  \draw[shorten <=2pt,shorten >=2pt] (.4,.6) .. controls +(.1,.2) and +(.1,-.2) .. (.4,1.1);
  \draw[shorten >=2pt] (.5,0) .. controls +(0,.2) and +(.1,-.2) .. (.4,.6);
  \draw[shorten >=4pt] (.8,0) .. controls +(0,.4) and +(.2,-.1) .. (.4,.6);
  \draw[shorten <=2pt,shorten >=2pt] (.4,.6) .. controls +(-.1,.2) and +(-.1,-.2) .. (.4,1.1);
  \draw[shorten <=4pt] (.4,.6) .. controls +(-.4,.2) and +(-.4,-.2) .. (.4,1.1) .. controls +(.2,.1) and +(0,-.4) .. (.8,1.7);
  \draw[shorten <=2pt] (.4,1.1) .. controls +(.1,.2) and +(0,-.2) .. (.5,1.7);
  \draw[shorten <=2pt] (.4,1.1) .. controls +(-.1,.2) and +(0,-.2) .. (.3,1.7);
  \draw[shorten <=4pt] (.4,1.1) .. controls +(-.2,.1) and +(0,-.4) .. (0,1.7);
  \draw (.5,1.7) .. controls +(0,.1) and +(.1,-.2) .. (.4,2) --  (.25,2.3) .. controls +(-.1,.2) and +(0,-.1) .. (0,2.9);
  \draw (.8,1.7) .. controls +(0,.1) and +(.1,-.2) .. (.55,2.3) -- (.4,2.6) .. controls +(-.1,.2) and +(0,-.1) .. (.3,2.9);
  \draw[shorten >=2pt] (.3,1.7) .. controls +(0,.1) and +(-.1,-.2) .. (.4,2) ;
  \draw[shorten <=2pt] (.55,2.3) .. controls +(.1,.2) and +(0,-.1) .. (.8,2.9);
  \draw[shorten >=2pt] (0,1.7) .. controls +(0,.1) and +(-.1,-.2) .. (.25,2.3);
  \draw[shorten <=2pt] (.4,2.6) .. controls +(.1,.2) and +(0,-.1) .. (.5,2.9);
\end{tikzpicture} $$
Categories equipped with a full twist are  called \define{balanced} in \cite{MR1113284}, but I will not use this word so as not to conflict with the notion of ``balanced tensor product'' from \cref{alg1cocomp,defn.naive balanced tensor product}.

A small monoidal category $\cC$ \define{has duals} if for every object $X \in \cC$ the functor $X\otimes : \cC \to \cC$ has 
a right adjoint of the form ${^*X} \otimes$ and a left adjoint of the form ${X^*} \otimes$ for some objects ${X^*},{^*X} \in \cC$.  (These objects are unique up to unique isomorphism if they exist, in which case $\otimes {X^*}$ and $\otimes {^*X}$ are the left and right adjoints respectively to $\otimes X$.  Compare \cref{defn.dualizable}.)  If $\cC$ has duals, then there is a canonically defined \define{double dual} functor $X \mapsto X^{**}$ which is a monoidal equivalence $(\cC,\otimes,\dots) \to (\cC,\otimes,\dots)$.
Suppose that $\cC = (\cC,\otimes,\beta,\dots)$ is a small $\bK$-linear braided monoidal category with duals.  The braiding determines  isomorphisms 
$$ \tau_{\beta} : (-)^{**}\isom (\id,\beta^{2}) \quad \text{and} \quad \tau_{\beta^{-1}} : (-)^{**} \isom (\id,\beta^{-2}) $$
of monoidal functors $\cC \to \cC$, via
$$
\begin{tikzpicture}[baseline=(A.base),anchor=base,auto]
  \path node (A) {$X^{**}$} ++(5,0) node (B) {$X^{**} \otimes X \otimes X^*$} ++(5,0) node (C) {$X \otimes X^{**} \otimes X^*$} ++(5,0) node (D) {$X$} (B.east) +(0,2pt) coordinate (raised) +(0,-2pt) coordinate (lowered);
  \draw[->] (A) -- node {\footnotesize unit of adjunction} (B);
  \draw[->] (C) -- node {\footnotesize counit of adjunction} (D);
  \draw[->] (B.east |- raised) -- node {\footnotesize $\beta_{X^{**},X} \otimes \id_{X^*}$} (C.west |- raised);
  \draw[->] (B.east |- lowered) -- node[swap] {\footnotesize $\beta^{-1}_{X,X^{**}} \otimes \id_{X^*}$} (C.west |- lowered);
  \draw[->] (A) .. controls +(1,1) and +(-1,1) .. node {\footnotesize $\tau_\beta(X)$} (D);
  \draw[->] (A) .. controls +(1,-1) and +(-1,-1) .. node[swap] {\footnotesize $\tau_{\beta^{-1}}(X)$} (D);
\end{tikzpicture}
$$
with inverses
$$
\begin{tikzpicture}[baseline=(A.base),anchor=base,auto]
  \path node (A) {$X$} ++(5,0) node (B) {$X^{*} \otimes X^{**} \otimes X$} ++(5,0) node (C) {$X^{*} \otimes X \otimes X^{**}$} ++(5,0) node (D) {$X^{**}$} (B.east) +(0,2pt) coordinate (raised) +(0,-2pt) coordinate (lowered);
  \draw[->] (A) -- node {\footnotesize unit of adjunction} (B);
  \draw[->] (C) -- node {\footnotesize counit of adjunction} (D);
  \draw[->] (B.east |- raised) -- node {\footnotesize $\id_{X^*} \otimes \beta^{-1}_{X,X^{**}}$} (C.west |- raised);
  \draw[->] (B.east |- lowered) -- node[swap] {\footnotesize $\id_{X^*} \otimes \beta_{X^{**},X}$} (C.west |- lowered);
  \draw[->] (A) .. controls +(1,1) and +(-1,1) .. node {\footnotesize $\tau_\beta^{-1}(X)$} (D);
  \draw[->] (A) .. controls +(1,-1) and +(-1,-1) .. node[swap] {\footnotesize $\tau_{\beta^{-1}}^{-1}(X)$} (D);
\end{tikzpicture}
$$

A braided monoidal category $\cC$ with duals is \define{ribbon} if it is equipped with a full twist $\theta$ satisfying the quadratic equation $\theta^2 = \tau_{\beta^{-1}}^{-1}\circ \tau_\beta$:
$$ 
\begin{tikzpicture}[thick,baseline=(midpoint)]
  \coordinate (midpoint) at (.4,1.6);
  \path[fill=black!25] (.15,.6) .. controls +(.2,.2) and +(.2,-.2) .. (.15,1.1) .. controls +(-.2,-.2) and +(-.2,.2) .. (.15,.6);
  \path[fill=black!25] (.15,2.3) .. controls +(.2,.2) and +(.2,-.2) .. (.15,2.8) .. controls +(-.2,-.2) and +(-.2,.2) .. (.15,2.3);
  \draw[shorten >=2pt] (0,0) .. controls +(0,.5) and +(-.1,-.1) .. (.15,.6) .. controls +(.2,.2) and +(.2,-.2) .. (.15,1.1);
  \draw[shorten >=2pt] (.3,0) .. controls +(0,.5) and +(.1,-.1) .. (.15,.6) ;
  \draw[shorten <=2pt,shorten >=2pt] (.15,.6) .. controls +(-.2,.2) and +(-.2,-.2) .. (.15,1.1) .. controls +(.1,.1) and +(0,-.5) .. (.3,1.7) .. controls +(0,.5) and +(.1,-.1) .. (.15,2.3) ;
  \draw[shorten <=2pt,shorten >=2pt] (.15,1.1) .. controls +(-.1,.1) and +(0,-.5) ..  (0,1.7) .. controls +(0,.5) and +(-.1,-.1) .. (.15,2.3) .. controls +(.2,.2) and +(.2,-.2) .. (.15,2.8) ;
  \draw[shorten <=2pt] (.15,2.3) .. controls +(-.2,.2) and +(-.2,-.2) .. (.15,2.8) .. controls +(.1,.1) and +(0,-.5) .. (.3,3.4);
  \draw[shorten <=2pt]  (.15,2.8) .. controls +(-.1,.1) and +(0,-.5) .. (0,3.4);
\end{tikzpicture}
\quad = \quad
\begin{tikzpicture}[thick,baseline=(midpoint)]
  \coordinate (midpoint) at (.4,1.6);
  \draw[shorten >=2pt] (.3,0) .. controls +(0,.3) and +(-.1,-.2) .. (.4,.55) -- (.55,.85) .. controls +(.1,.2) and +(0,.2) .. (.75,.85) .. controls +(0,-.2) and +(.1,-.2) .. (.55,.85);
  \draw[shorten >=2pt] (0,0) .. controls +(0,.5) and +(-.2,-.4) .. (.25,.85) -- (.4,1.15) .. controls +(.2,.4) and +(0,.4) .. (1.05,.85) .. controls +(0,-.4) and +(.2,-.4) .. (.4,.55);
  \draw[shorten <=2pt,shorten >=2pt] (.25,.85) .. controls +(-.2,.4) and +(0,-.5) .. (0,1.7) .. controls +(0,.3) and +(.1,-.2) .. (-.1,2.25);
  \draw[shorten <=2pt,shorten >=2pt] (.4,1.15) .. controls +(-.1,.2) and +(0,-.3) .. (.3,1.7) .. controls +(0,.5) and +(.2,-.4) .. (.05,2.55);
  \draw[shorten <=2pt] (-.25,2.55) .. controls +(-.1,.2) and +(0,.2) .. (-.45,2.55) .. controls +(0,-.2) and +(-.1,-.2) .. (-.25,2.55) -- (-.1,2.85) .. controls +(.1,.2) and +(0,-.3) .. (0,3.4);
  \draw[shorten <=2pt] (-.1,2.85) .. controls +(-.2,.4) and +(0,.4) .. (-.75,2.55) .. controls +(0,-.4) and +(-.2,-.4) .. (-.1,2.25) -- (.05,2.55) .. controls +(.2,.4) and +(0,-.5) .. (.3,3.4);
\end{tikzpicture}
$$

Let $(\cC,\otimes,\beta,\theta,\dots)$ be a ribbon category.  A \define{$\cC$-labeled ribbon tangle in $[0,1]^3$} is a piecewise-smooth embedding (except for as described in item (\ref{item.embedding rules})) into $[0,1]^3$ of a finite collection of rectangles subject to the following conditions:
  \begin{enumerate}
    \item The rectangles come in two kinds, ``long thin'' \define{ribbons} and ``short squat'' \define{coupons}.  The \define{bottom end} of a ribbon is the subset $[0,1] \times \{0\}$ of its boundary, and the \define{top end} if $[0,1]\times \{1\}$.  The \define{bottom side} of a coupon is the subset $[0,1] \times \{0\}$ of its boundary, and the \define{top side} is $[0,1] \times \{1\}$.
    \item \label{item.embedding rules} Each bottom end of each ribbon is a subset either of the \define{bottom} of the cube $[0,1]^2 \times \{0\}$ or of a top  side of a coupon; in the latter case the map $[0,1] \times \{0\} \hookrightarrow [0,1] \times \{1\}$ mapping the end of the ribbon into the top or bottom of the coupon is orientation-preserving.  The top end of each ribbon is a subset either of the \define{top} of the cube $[0,1]^2 \times \{1\}$ or of a bottom side of a coupon; again in the latter case orientations should be preserved.  
    These are the only intersections of ribbons and coupons with each other or with the boundary of $[0,1]^3$.
    \item Each ribbon is labeled with an object of $\cC$.  Each coupon is labeled with a morphism in $\cC$ compatibly with the ribbons that end on it: if the ribbons ending on its bottom are labeled in order $X_1,\dots,X_m$ and the ribbons ending on its top are labeled in order $Y_1,\dots,Y_n$, then the coupon itself is labeled with an element of $\hom(X_1\otimes \dots \otimes X_m,Y_1\otimes\dots\otimes Y_n)$. \diamondhere
  \end{enumerate}
\end{definitionnodiamond}

The main coherence theorem about ribbon categories is the following:

\begin{theorem}[\cite{MR1036112}]\label{thm.rt2}
  There is a well-defined \define{interpretation} map from $\cC$-labeled ribbon tangles in $[0,1]^3$ to morphisms in $\cC$.  The source and target of the morphism depend only on the intersections of the tangle with the bottom and top, respectively, of $[0,1]^3$.  The interpretation of a tangle depends only on the isotopy-rel-boundary class of the tangle.
  
  Suppose that $[0,1]^3 \hookrightarrow [0,1]^3$ is an orientation-preserving nesting of a smaller cube into the interior of a larger cube.  Suppose further that we are given two $\cC$-labeled ribbon tangles in the larger cube such that (i) the two tangles agree outside the smaller cube, (ii) the tangles intersect  the boundary of smaller cube only along its top and bottom, and (iii) the intersections of the two tangles with the smaller cube have the same interpretation.  Then the two tangles have the same interpretation in the larger cube. \qedhere
\end{theorem}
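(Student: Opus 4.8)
The plan is to construct the interpretation from a generic planar diagram, prove it is well defined by checking a finite list of local moves against the ribbon axioms, and then read off the locality statement directly from that construction. First I would put the tangle, up to rel-boundary isotopy, into \emph{generic position}: its projection to a fixed plane should have finitely many coupons, transverse double-point crossings, curls, and nondegenerate critical points of the height function, all at distinct heights, with every ribbon carrying the blackboard framing. Slicing $[0,1]^3$ by horizontal planes between consecutive features then cuts the diagram into a vertical stack of \emph{elementary slices}, each a tensor product of identity ribbons with exactly one feature: a coupon, a positive or negative crossing, a cup, a cap, or a $\pm$ full twist (equivalently, a $\pm$ curl) of a single ribbon. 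I would \emph{define} the interpretation slice by slice --- the labelling morphism for a coupon; $\beta$ or $\beta^{-1}$ on the appropriate objects, according to the sign of the crossing; the unit or counit of the adjunction exhibiting ${}^*X\otimes$ or $X^*\otimes$ as an adjoint of $X\otimes$, for cups and caps; $\theta_X$ or $\theta_X^{-1}$ for the twists --- tensoring within a slice and composing up the stack. By inspection the source and target of the resulting morphism of $\cC$ are the ordered tensor products of the labels of the bottom and top boundary points, which is the second assertion of the theorem.

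Next I would establish invariance under isotopy rel boundary. Any two generic projections of isotopic framed tangles with coupons are joined by a finite sequence of local moves: planar isotopy, together with the exchange of heights of two features supported over disjoint intervals of the base; the two snake (zig-zag) straightenings; framed Reidemeister~II and~III; the framed Reidemeister~I moves involving curls; and the moves that slide a cup, cap, crossing, or twist past a coupon or along a ribbon. For each move I would identify the equality of the corresponding elementary stacks with an axiom: the interchange law of the monoidal category $\cC$ for the height-exchange and distant-sliding moves; the triangle identities of the duality adjunctions, combined with the canonical double-dual equivalence, for the snakes; the hexagon identities for $\beta$ for Reidemeister~III and for pulling a cup or cap past a crossing; the full-twist relation $\beta_{X,Y}\circ(\theta_X\otimes\theta_Y)=\beta_{Y,X}^{-1}\circ\theta_{X\otimes Y}$ together with the naturality of $\theta$ for the interactions of twists with crossings; the ribbon quadratic equation $\theta^2=\tau_{\beta^{-1}}^{-1}\circ\tau_\beta$ for the move that drags a cap around a $2\pi$-twisted ribbon; and the naturality of every structure morphism for sliding coupons along ribbons. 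The verification is long but entirely routine given these inputs.

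For the nesting statement I would choose generic projections of the two larger tangles that restrict to one common generic projection outside the inner cube and to (possibly different) generic projections inside it; such projections exist because the two larger tangles agree outside the inner cube and meet its boundary only along its top and bottom. Slicing the larger cube so that the inner cube is cut out as a contiguous block of consecutive slices then writes the interpretation of each larger tangle as one and the same composite of the shared ``outside'' slices with the interpretation of the corresponding inner tangle inserted into the block. Since the two inner interpretations coincide by hypothesis, so do the two outer ones.

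The step I expect to be the real obstacle is the claim, used in the invariance argument, that the listed moves \emph{generate} all rel-boundary isotopies of framed tangles with coupons. This is a Reidemeister-type theorem whose proof requires a Cerf-theoretic analysis of generic one-parameter families of diagrams, and I would not reprove it here; instead I would take it --- and with it the full statement of the theorem --- from \cite{MR1036112}, devoting any remaining effort only to reconciling that source's conventions (blackboard framing, the placement of $\theta$ in the framed Reidemeister~I move, and the composition and action conventions of \cref{remark.direction conventions}) with the ones used above.
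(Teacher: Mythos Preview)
The paper does not give its own proof of this theorem: it is stated with the citation \cite{MR1036112} and closed with a \qedhere inside the theorem environment, so there is nothing to compare against. Your sketch is the standard Reshetikhin--Turaev argument --- generic projection, decomposition into elementary slices, assignment of the structure morphisms of a ribbon category to each elementary piece, and invariance checked against a finite list of framed Reidemeister and coupon moves --- which is precisely what one finds in the cited source (or in later textbook treatments). You correctly identify the one nontrivial input, namely the Cerf-theoretic completeness of the move list, and you defer it to \cite{MR1036112} just as the paper does for the entire statement.
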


\Cref{thm.rt2} allows us to use $\cC$ to study tangles in more complicated manifolds.  The following construction is particularly important:

\begin{definition} \label{defn.skein category}
  Let $\cC$ be a ribbon category and let $N$ be a possibly-open oriented surface admitting a finite cover by disks.  The \define{skein category} $\int_N \cC$ is the $\bK$-linear category described as follows:
  \begin{description} \item[objects] configurations of disjoint ``short'' oriented intervals in $N$, each labeled by an object of $\cC$. \end{description}
   There is a natural extension of the notion of ``ribbon tangle labeled by $\cC$'' to tangles in $N \times [0,1]$.  Note that for any ribbon tangle, its intersections with $N \times \{0\}$ and with $N \times \{1\}$ are naturally objects of $\int_N\cC$.
   \begin{description} \item[morphisms] the space of morphisms from $X_0$ to $X_1$ is spanned by the set of $\cC$-labeled ribbon tangles that intersect $N \times \{i\}$ at $X_i$, modulo the following relation: for every  orientation-preserving embedding $[0,1]^3 \hookrightarrow N \times[0,1]$ of a ``small'' cube into the interior of $N \times [0,1]$, we set  to zero any linear combination of tangles $\sum T_a$ if (i) all the $T_a$s are equal outside the small cube, (ii) they intersect the small cube only along its top and bottom sides, and the intersection is transverse, and (iii) $\sum \operatorname{interpret}(T_a \cap (\text{small cube})) = 0$.
   \end{description}
   It follows from \cref{thm.rt2} that isotopic-rel-boundary tangles represent the same morphism in $\int_N \cC$.  Composition is by stacking of tangles in $N \times [0,1] \cup_N N \times [0,1] \simeq N \times [0,1]$.
   The skein category $\int_N \cC$ is naturally pointed by the empty object.
   If $N$ has boundary, we set $\int_N \cC = \int_{\mathring N}\cC$.
\end{definition}

\begin{definition}\label{defn.naive balanced tensor product}
  Let $\cA$ be a small $\bK$-linear monoidal category with a right module $\cX$ and a left module $\cY$. 
  Recall from the proof of \cref{lemma.symmetric monoidality of ZF} that the \define{naive tensor product} of $\cX$ with $\cY$ is the category $\cX \otimes_\bK \cY$ with object set $\mathrm{ob}(\cX) \times \mathrm{ob}(\cY)$ and morphisms $\hom((X_1,Y_1),(X_2,Y_2)) = \hom_\cX(X_1,X_2) \otimes \hom_\cY(Y_1,Y_2)$.
   The \define{naive balanced tensor product} $\cX \otimes_\cA \cY$ is the $\bK$-linear category formed from $\cX \otimes_\bK \cY$ by adding, for each object $(X,A,Y) \in \cX \otimes_\bK \cA \otimes_\bK \cY$, a natural-in-$(X,A,Y)$ isomorphism $\alpha_{X,A,Y} : (X \triangleleft A)\otimes Y \isom X \otimes(A \triangleright Y)$, subject to a pentagon relation (and also adding some unitor relations that we will not write down).  Compare \cref{alg1cocomp}.
\end{definition}

The first part of the following result is straightforward. The excision property is tedious, and is worked out in full in \cite{Cooke19}. (That paper was based, in turn, on an earlier preprint of this paper.)
\begin{theorem}[\cite{Cooke19}]\label{Cooke theorem}
  The skein category $\int_{N_1 \sqcup N_2} \cC$ of a disjoint union is the {naive tensor product} of skein categories $(\int_{N_1}\cC) \otimes_\bK (\int_{N_2}\cC)$.
  
  The construction $\int_\Box\cC : \{\text{surfaces}\} \to \{\text{categories}\}$ is functorial on embeddings.  It follows that for any oriented one-manifold $P$, $\int_P \cC = \int_{P \times [0,1]} \cC$ is naturally a monoidal category and for any oriented zero-manifold $P$, $\int_P \cC = \int_{P \times [0,1]^2} \cC$ is naturally a braided monoidal category.  It follows also that if $N$ is a surface with boundary $\partial N$, then $\int_{ N} \cC$ is naturally a module category for $\int_{\partial N} \cC$. The interpretation map from \cref{thm.rt2} provides a braided monoidal equivalence $\int_{\pt} \cC \isom \cC$. 
  
  Moreover, the functor $\int_\Box\cC : \{\text{surfaces}\} \to \{\text{categories}\}$ satisfies the following version of \define{excision}.  Suppose that $N$ is a surface with a decomposition as $N = N_1 \sqcup_P N_2$ along a one-manifold $P$.  Then $\int_{ N_1} \cC$ is naturally a right $\int_{P}\cC$-module and $\int_{ N_2}\cC$ is naturally a left module, and $\int_N\cC$ is equivalent to the {naive balanced tensor product} 
  
  \hfill $\displaystyle \int_{ N} \cC \simeq \left(\int_{ N_1}\cC\right) \underset{\int_{P }\cC}\otimes \left(\int_{ N_2}\cC\right).$ \qedhere
\end{theorem}

Combining results from \cite{AF2012,ClaudiaThesis} gives:
\begin{corollary}\label{cor.cooke}
  The assignment $N \mapsto \int_{ N} \cC$ defines a symmetric monoidal functor $\cat{Bord}_{2}^{\mathrm{or}} \to \cat{Alg}_2^{\mathrm{lax}}(\cat{Cat}_\bK)$, where $\cat{Bord}_{2}^{\mathrm{or}}$ is the fully-extended oriented 2-bordism category and $\cat{Cat}_\bK$ is   
   the bicategory of small linear categories and linear functors. \qedhere
\end{corollary}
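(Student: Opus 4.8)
The plan is to recognize the skein assignment $N \mapsto \int_N\cC$ as \emph{oriented} $2$-dimensional factorization homology with coefficients in $\cC$, and then to apply the machinery of \cite{AF2012,ClaudiaDamien2}. The first point is that a ribbon category is exactly a ``framed $\bE_2$-algebra'' in $\cat{Cat}_\bK$, i.e.\ an $\bE_2$-algebra in $\cat{Cat}_\bK$ equipped with a coherent $\mathrm{SO}(2)$-action: a braided monoidal $\bK$-linear category is an $\bE_2$-algebra, the full twist $\theta$ supplies the action of the generator of $\pi_1\mathrm{SO}(2)$, and the ribbon equation $\theta^2 = \tau_{\beta^{-1}}^{-1}\circ\tau_\beta$ together with the existence of duals encode that this assembles into an honest $\mathrm{SO}(2)$-action (there being no higher choices since $\pi_{\geq 2}\mathrm{SO}(2) = 0$). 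Equivalently, $\cC$ is a $\mathrm{Disk}_2^{\mathrm{or}}$-algebra in the sense of Ayala--Francis.

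Given this, \cite{AF2012} produces from $\cC$ a symmetric monoidal, $\otimes$-excisive functor $\{\text{oriented surfaces with corners, embeddings}\} \to \cat{Cat}_\bK$, characterized up to equivalence by these properties among functors taking the value $\cC$ on the disk; and \cite{ClaudiaDamien2} (the oriented analogue of \cref{thm.Scheimbauer}(3), whose framed case is proved there) assembles such a factorization-homology functor into a symmetric monoidal functor $\cat{Bord}_2^{\mathrm{or}} \to \cat{Alg}_2^{\mathrm{lax}}(\cat{Cat}_\bK)$ sending the positively-oriented point to $\cC$. Here one uses that $\cat{Bord}_2^{\mathrm{or}}$, being an $(\infty,2)$-category, only sees the $(\infty,2)$-truncation $\cat{Alg}_2(\cat{Cat}_\bK)$ of $\cat{Alg}_2^{\mathrm{lax}}(\cat{Cat}_\bK)$ from \cref{higher alg n}.

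The remaining task is to identify the value of this functor on a surface $N$ with the skein category $\int_N\cC$ of \cref{defn.skein category}. This is precisely what the preceding Proposition packages: the skein assignment is functorial on embeddings, symmetric monoidal ($\int_{N_1\sqcup N_2}\cC \simeq \int_{N_1}\cC \otimes_\bK \int_{N_2}\cC$), excisive along a $1$-manifold ($\int_N\cC \simeq \int_{N_1}\cC \otimes_{\int_P\cC}\int_{N_2}\cC$, in the sense of the naive balanced tensor product of \cref{defn.naive balanced tensor product}, which after cocompletion presents the $\boxtimes$-construction of \cref{alg1cocomp}), and has value $\cC$ on the point via the interpretation map of \cref{thm.rt2}. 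By uniqueness of symmetric monoidal $\otimes$-excisive extensions of the value on a disk, the skein construction is equivalent to oriented factorization homology compatibly with all of this descent data, and therefore presents the same functor $\cat{Bord}_2^{\mathrm{or}} \to \cat{Alg}_2^{\mathrm{lax}}(\cat{Cat}_\bK)$.

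The hard part is this last step. The skein category is defined by generators and relations --- $\cC$-labeled ribbon tangles modulo local skein relations --- whereas factorization homology is a colimit (a left Kan extension from disks), so the comparison is not a formal consequence of agreement on a disk together with the gluing formula: one must upgrade ``$\cC$-labeled ribbon tangles up to isotopy-rel-boundary'' (the data certified by \cref{thm.rt2}) into the homotopy-coherent descent data of a fully extended oriented $2$-dimensional field theory, and in particular exhibit the $\mathrm{SO}(2)$-fixed-point structure that lets the framed construction descend to $\cat{Bord}_2^{\mathrm{or}}$. Carrying this out rigorously is the technical core of the construction; the full argument will appear in \cite{me-skeinQFT}.
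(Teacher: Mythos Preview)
Your approach is exactly the paper's: the Corollary is stated with no proof beyond the phrase ``Combining results from \cite{AF2012,ClaudiaDamien2} gives,'' and you have correctly unpacked what that combination means --- the preceding Proposition establishes that $N\mapsto\int_N\cC$ is symmetric monoidal, functorial on embeddings, excisive, and takes the value $\cC$ on a point, which by \cite{AF2012} characterizes it as oriented factorization homology, and \cite{ClaudiaDamien2} then packages factorization homology as a functor out of the bordism category. One small imprecision: a \emph{balanced} braided category (braiding plus full twist) is what corresponds to a $\mathrm{Disk}_2^{\mathrm{or}}$-algebra; the ribbon axiom involving duals is additional structure used for the skein definition of morphisms via tangles in $N\times[0,1]$, not for the $\mathrm{SO}(2)$-equivariance per se --- but since $\cC$ is ribbon it is in particular balanced, so nothing in your argument is affected.
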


This is not the preferred target of a Heisenberg-picture field theory --- as in \cref{defn.1-affine extended qft}, I would much rather land in $\cat{Alg}_2^{\mathrm{lax}}(\cat{Cocomp}_\bK)$.  But this is not a problem.  The following is proved in \cite{MR2177301}:

\begin{lemma}
  Let $\cX$ be a small $\bK$-linear category.  The \define{free cocompletion} $\widehat \cX$ of $\cX$ is $\cat{Fun}_\bK(\cX^{\op},\cat{Vect}_\bK)$.  It satisfies the following universal property: for any $\cE \in \cat{Cocomp}_\bK$, there is a natural equivalence $\cat{Fun}_\bK(\cX,\cE) \simeq \hom_{\cat{Cocomp}_\bK}(\widehat\cX,\cE)$.  Free cocompletion is symmetric monoidal: $\widehat{\cX \otimes_\bK \cY} \simeq \widehat \cX \boxtimes_\bK \widehat \cY$.  The essential image of $\widehat{(-)}$ consists of those cocomplete categories generated under colimits by some set of compact projective objects.  There is a natural equivalence
  
  \hfill $\displaystyle\hom_{\cat{Cocomp}_\bK}(\widehat\cX,\widehat\cY) \simeq \cat{Fun}_\bK(\cX \otimes_\bK \cY^{\op},\cat{Vect}_\bK).$ \qedhere
\end{lemma}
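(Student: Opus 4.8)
The plan is to identify $\widehat\cX = \cat{Fun}_\bK(\cX^\op,\cat{Vect}_\bK)$ with the free $\bK$-linear cocompletion of $\cX$ via the enriched Yoneda embedding, and then to read off all four assertions from the resulting universal property. First I would introduce the $\bK$-linear Yoneda embedding $\yo\colon\cX\to\widehat\cX$, $X\mapsto\hom_\cX(-,X)$. Colimits in $\widehat\cX$ are computed objectwise in $\cat{Vect}_\bK$, so $\widehat\cX$ is cocomplete and $\bK$-linear; in fact it is locally $\lambda$-presentable for any regular $\lambda$ bounding the size of $\cX$. The enriched Yoneda lemma gives $\hom_{\widehat\cX}(\yo X,P)\cong P(X)$ naturally in $P\in\widehat\cX$; since $P\mapsto P(X)$ is $\bK$-linear and preserves all colimits, each $\yo X$ is a compact projective object. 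The density (co-Yoneda) theorem states that every $P\in\widehat\cX$ is canonically the colimit of the diagram of representables mapping to it, so $\widehat\cX$ is generated under colimits by $\{\yo X : X\in\cX\}$.

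Next I would establish the universal property $\cat{Fun}_\bK(\cX,\cE)\simeq\hom_{\cat{Cocomp}_\bK}(\widehat\cX,\cE)$ for each $\cE\in\cat{Cocomp}_\bK$. Restriction along $\yo$ sends a cocontinuous $\bK$-linear functor $\widehat\cX\to\cE$ to a $\bK$-linear functor $\cX\to\cE$. Conversely, given $\bK$-linear $G\colon\cX\to\cE$, the (pointwise, $\cat{Vect}$-weighted) left Kan extension $\mathrm{Lan}_\yo G\colon\widehat\cX\to\cE$ exists because $\cE$ is cocomplete, is cocontinuous because $\yo$ is dense, and restricts back to $G$ along $\yo$ because $\yo$ is fully faithful. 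The standard ``presheaves are the free cocompletion'' argument shows these two assignments are mutually inverse up to coherent natural isomorphism, giving the claimed equivalence of categories; naturality in $\cX$ follows from functoriality of $\widehat{(-)}$ under restriction along $\yo$, and functoriality in $\cE$ from postcomposition.

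From this I would deduce the remaining claims. For the essential image: each $\yo X$ is compact projective, and conversely if $\cD\in\cat{Cocomp}_\bK$ is generated under colimits by a set $\cG$ of compact projective objects, let $\cG$ also denote the corresponding full subcategory of $\cD$; the universal property produces a cocontinuous functor $\widehat\cG\to\cD$ extending the inclusion, which is fully faithful because $\hom$ out of (colimits of) compact projectives in both categories is computed by the enriched Yoneda lemma, and essentially surjective because $\cG$ generates, so $\cD\simeq\widehat\cG$ lies in the essential image. (Since $\cX$ need not be idempotent-complete, one has $\widehat\cX\simeq\widehat{\cX^\natural}$ for the Karoubi envelope $\cX^\natural$, which is precisely why the statement reads ``some set of compact projective objects.'') For symmetric monoidality: Deligne's $\boxtimes$ on $\cat{Cocomp}_\bK$ is characterized, as in \cite[Section 6.5]{MR2177301}, by $\hom_{\cat{Cocomp}_\bK}(\widehat\cX\boxtimes\widehat\cY,\cE)$ being the category of functors $\widehat\cX\times\widehat\cY\to\cE$ cocontinuous in each variable; applying the universal property in each variable turns this into the category of $\bK$-bilinear functors $\cX\times\cY\to\cE$, which by the definition of the naive tensor product $\otimes_\bK$ is $\cat{Fun}_\bK(\cX\otimes_\bK\cY,\cE)\simeq\hom_{\cat{Cocomp}_\bK}(\widehat{\cX\otimes_\bK\cY},\cE)$. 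Yoneda in $\cat{Cocomp}_\bK$ then yields $\widehat{\cX\otimes_\bK\cY}\simeq\widehat\cX\boxtimes\widehat\cY$, and the associativity, unit ($\widehat{(-)}$ of the one-object $\bK$-linear category is $\cat{Vect}_\bK$) and symmetry constraints are matched by inspection. Finally, for the displayed formula, $\hom_{\cat{Cocomp}_\bK}(\widehat\cX,\widehat\cY)\simeq\cat{Fun}_\bK(\cX,\widehat\cY)=\cat{Fun}_\bK(\cX,\cat{Fun}_\bK(\cY^\op,\cat{Vect}_\bK))\simeq\cat{Fun}_\bK(\cX\otimes_\bK\cY^\op,\cat{Vect}_\bK)$, the first step being the universal property and the last the evident currying equivalence for $\bK$-linear bifunctors.

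The main obstacle is the set-theoretic care flagged in \cref{set theoretic difficulties}: $\widehat\cX$ is cocomplete but not small, so to even state the universal property and the $\boxtimes$-comparison inside $\cat{Cocomp}_\bK$ one must commit to a version of ``cocomplete $\bK$-linear category'' (locally presentable, or small-relative-to-a-chosen-Grothendieck-universe) for which $\cat{Cocomp}_\bK$ itself, $\widehat{(-)}$, and $\boxtimes$ all stay within scope, and then verify that $\widehat\cX$ and $\widehat\cX\boxtimes\widehat\cY$ genuinely land there — which they do, being locally presentable. The only other delicate point is phrasing the essential-image statement correctly when $\cX$ is not idempotent-complete, i.e.\ up to Karoubi completion as above; everything else is the standard enriched-presheaf yoga and can be cited from \cite{MR2177301}.
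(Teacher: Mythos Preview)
Your proposal is correct and carefully argued. However, you should know that the paper does not actually supply a proof of this lemma: the sentence immediately preceding the statement reads ``The following is proved in \cite{MR2177301},'' and the lemma ends with a \verb|\qedhere| inside the statement itself. So the paper's ``proof'' is a bare citation to Kelly's \emph{Basic concepts of enriched category theory}.

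What you have written is essentially the standard enriched-presheaf argument one would extract from Kelly (Yoneda, density/co-Yoneda, left Kan extension along $\yo$, the universal property of $\boxtimes$), so in spirit you are taking the same route the paper defers to. The difference is only that you have unpacked the citation into an explicit sketch, including the identification of the essential image via compact projective generators and the careful remark about Karoubi completion when $\cX$ is not idempotent-complete. That extra care is fine and buys you a self-contained argument; the paper buys brevity by outsourcing it. There is no genuine divergence in method.
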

Linear functors $\cX \otimes_\bK \cY^{\op} \to \cat{Vect}_\bK$ should be thought of as \define{bimodules} between the categories $\cX$ and $\cY$, where $\cX$ and $\cY$ themselves are ``many object associative algebras.''

\begin{corollary}
  Free cocompletion takes naive balanced tensor products to balanced tensor products:
  $$ \widehat{ \cX \otimes_\cA \cY} \simeq \widehat \cX \boxtimes_{\widehat \cA} \widehat \cY $$
  if $\cA$ is a small $\bK$-linear monoidal category with a left module $\cX$ and a right module $\cY$. 
\end{corollary}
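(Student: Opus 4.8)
The plan is to verify the claimed equivalence by comparing the universal properties of the two constructions against an arbitrary test category $\cE \in \cat{Cocomp}_\bK$, and then invoking the bicategorical Yoneda lemma. First I would record that the preceding lemma makes $\widehat{(-)} : \cat{Cat}_\bK \to \cat{Cocomp}_\bK$ into a symmetric monoidal $2$-functor; in particular it carries the monoid $\cA$ to a monoidal cocomplete category $\widehat\cA$ (with the Day-convolution tensor product, which is $\bK$-linear and cocontinuous in each variable), and carries the module objects $\cX$, $\cY$ to module objects over $\widehat\cA$. So the right-hand side $\widehat\cX \boxtimes_{\widehat\cA} \widehat\cY$ is defined, using the balanced tensor product of \cref{alg1cocomp}, with $\cX$ a right $\cA$-module and $\cY$ a left $\cA$-module as in \cref{defn.naive balanced tensor product}.

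Next I unwind the universal property of the naive balanced tensor product: for any $\bK$-linear category $\cE$, a $\bK$-linear functor $\cX \otimes_\cA \cY \to \cE$ is the same datum as a $\bK$-linear functor $G : \cX \otimes_\bK \cY \to \cE$ together with a natural isomorphism $\gamma_{X,A,Y} : G(X \triangleleft A, Y) \isom G(X, A \triangleright Y)$ satisfying the pentagon and unitor relations. Taking $\cE \in \cat{Cocomp}_\bK$ and combining with the universal property of free cocompletion gives
$$ \hom_{\cat{Cocomp}_\bK}\bigl(\widehat{\cX \otimes_\cA \cY},\, \cE\bigr) \simeq \cat{Fun}_\bK(\cX \otimes_\cA \cY, \cE) \simeq \bigl\{(G,\gamma) : G \in \cat{Fun}_\bK(\cX\otimes_\bK\cY,\cE),\ \gamma \text{ a balancing}\bigr\}. $$
Now I use $\widehat{\cX \otimes_\bK \cY} \simeq \widehat\cX \boxtimes_\bK \widehat\cY$ to rewrite $\cat{Fun}_\bK(\cX \otimes_\bK \cY, \cE) \simeq \hom_{\cat{Cocomp}_\bK}(\widehat\cX \boxtimes_\bK \widehat\cY, \cE)$, which identifies $G$ with a cocontinuous functor $\widehat G : \widehat\cX \boxtimes_\bK \widehat\cY \to \cE$.

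The heart of the argument is then the comparison of the balancing data. A cocontinuous functor out of $\widehat\cX \boxtimes_\bK \widehat\cA \boxtimes_\bK \widehat\cY$ is determined, together with the natural transformations between such functors, by its restriction along the dense inclusion $\cX \otimes_\bK \cA \otimes_\bK \cY \to \widehat\cX \boxtimes_\bK \widehat\cA \boxtimes_\bK \widehat\cY$, since every object of the target is a colimit of objects in the image of this inclusion and cocontinuous functors preserve those colimits. Moreover a natural transformation between cocontinuous functors is invertible iff its restriction is, and the pentagon and unitor equations — being equalities of natural transformations between cocontinuous functors — hold iff they hold after restriction. Hence a balancing $\gamma$ for $G$ in the sense of \cref{defn.naive balanced tensor product} is precisely the restriction of a unique balancing for $\widehat G$ in the sense required by the universal property of $\widehat\cX \boxtimes_{\widehat\cA} \widehat\cY$ in \cref{alg1cocomp}. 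Combining, the set on the right of the display is naturally identified with $\hom_{\cat{Cocomp}_\bK}(\widehat\cX \boxtimes_{\widehat\cA} \widehat\cY, \cE)$, and the bicategorical Yoneda lemma yields the asserted equivalence.

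The main obstacle is exactly that last step: making precise that the ``adjoin an isomorphism, impose a pentagon'' recipe commutes with free cocompletion, i.e.\ that balancing data together with its coherences transport unchanged between the small and cocomplete settings. This rests on the density of the Yoneda embedding and the good behaviour of cocontinuous functors, and — as flagged in \cref{set theoretic difficulties} — on the usual Grothendieck-universe bookkeeping needed to manipulate $\cat{Cocomp}_\bK$; none of this is deep, but it is where the content lies. One could alternatively argue that $\widehat{(-)}$, being a left $2$-adjoint, preserves the $2$-colimits presenting both balanced tensor products, but the universal-property route above seems cleaner.
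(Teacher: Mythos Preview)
Your proof is correct and is precisely the approach the paper takes: the paper's proof is the single line ``Both sides satisfy the same universal property,'' and your argument is a careful unpacking of exactly that assertion. Your alternative remark about $\widehat{(-)}$ preserving the presenting $2$-colimits is also a valid route to the same conclusion.
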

\begin{proof}
Both sides satisfy the same universal property.
\end{proof}

It follows that $N \mapsto \widehat {\int_{ N} \cC}$ defines a symmetric monoidal functor $\cat{Bord}_{2}^{\mathrm{or}} \to \cat{Alg}_2^{\mathrm{lax}}(\cat{Cocomp}_\bK)$.  This is not yet a Heisenberg-picture field theory: it assigns pointed linear categories, not pointed vector spaces, to top-dimensional manifolds.  To build a Heisenberg-picture field theory requires extending this functor to three-manifolds.

\begin{definition}\label{defn.skeinmodule}
Let $M$ be a compact three-dimensional manifolds with boundary decomposed as $\partial M = N_{\mathrm{bot}} \sqcup_{P \times \{0\}} P \times [0,1] \sqcup_{P \times \{1\}} N_{\mathrm{top}}$, where $P$ is a one-manifold and $N_{\mathrm{bot}}$ and $N_{\mathrm{top}}$ are the ``bottom'' and ``top'' parts of the boundary.  Such manifolds are the three-morphisms in $\cat{Bord}_3^{\mathrm{or}}$.  
Given a ribbon category $\cC$,
the \define{skein module} $\int_M\cC$ of $M$ is (the span of) ribbon tangles in $M$ modulo local relations for little cubes $[0,1]^3 \hookrightarrow M$ just as in the morphisms of \cref{defn.skein category}.  By stacking on cylinders over $ N_{\mathrm{bot}}$ and $ N_{\mathrm{top}}$, $\int_M\cC$ is naturally an $\int_{ N_{\mathrm{bot}}}\cC$--$\int_{ N_{\mathrm{top}}}\cC$--bimodule, i.e.\ a functor 
$$ \left( \int_{ N_{\mathrm{bot}}}\cC \right) \otimes_\bK \left(\int_{ N_{\mathrm{top}}}\cC\right)^{\op} \to \cat{Vect}_\bK,$$
and so defines a cocontinuous functor
$$ \widehat{\int_M\cC} : \widehat {\int_{ N_{\mathrm{bot}}}\cC} \to \widehat{\int_{ N_{\mathrm{top}}}\cC}.$$
The empty tangle gives $\widehat{\int_M\cC}$ the structure of a lax pointed functor.
\end{definition}

\begin{conjecture}\label{skein theory thm}
When $P$ is nonempty, for each object $X \in \int_{P} \cC$, there is a natural operation that takes a tangle in $M$ and 
outputs its union with the identity tangle over $X$.
This operation makes $\widehat{\int_M\cC}$ into a lax module functor between the $\widehat{\int_{P}\cC}$-modules 
$\widehat{\int_{ N_{\mathrm{bot}}}\cC} $ and $ \widehat{\int_{ N_{\mathrm{top}}}\cC}$.  The assignment $\widehat{\int_\Box \cC}$ defines an oriented topological fully extended 1-affine Heisenberg-picture field theory $\cat{Bord}_3^{\mathrm{or}} \to \cat{Alg}_2^{\mathrm{lax}}(\cat{Cocomp}_\bK)$. %\qedhere
\end{conjecture}

\begin{remark}
\Cref{skein theory thm} surely follows from the previous results, and is equivalent in spirit to results from \cite{WalkerTQFTs}. Nevertheless, I have labeled it a Conjecture because there are some subtle details that must be checked (and I invite the reader to convert such a check into a paper).  That $P \mapsto \int_P \cC$ defines a functor $\cat{Bord}_2^{\mathrm{or}} \to \cat{Alg}_2^{\mathrm{lax}}(\cat{Cocomp}_\bK)$ follows from cocompleting \Cref{cor.cooke}. 
This provides the values of the desired functor $\cat{Bord}_3^{\mathrm{or}} \to \cat{Alg}_2^{\mathrm{lax}}(\cat{Cocomp}_\bK)$ on $0$, $1$, and $2$-morphisms. The values on 3-morphisms are given by the skein modules of \cref{defn.skeinmodule}, and functoriality for 3-morphism composition is manifest. The subtle detail that needs checking is the compatibility between the composition of 3-morphisms and the composition of 2-morphisms.
\end{remark}

  A braided monoidal cocomplete category $\cA \in \cat{Alg}_2^{\mathrm{lax}}(\cat{Cocomp}_\bK)$ is of the form $\widehat \cC$ for $\cC$ a small ribbon category if and only if it satisfies all of the following:
  \begin{itemize}
    \item The unit object $\unit_\cA \in \cA$ is {compact projective}.
    \item Every object of $\cA$ is a colimit of dualizable objects.
    \item $\cA$ is equipped with a full twist whose restriction makes the full subcategory $\cA^{\mathrm{fd}}$ on the dualizable objects into a ribbon category.
    \item $\cA^{\mathrm{fd}}$ is equivalent to a small category.
  \end{itemize}
  For such $\cA$, $\cA = \widehat{\cA^{\mathrm{fd}}}$.  In general, $(\widehat\cC)^{\mathrm{fd}}$ is the \define{Karoubi envelop} or \define{Cauchy completion} of $\cC$: it's what you get when you take $\cC$, add all finite direct sums, and then split all idempotents.
  
  These conditions are satisfied, for example, for $\cA = \cat{Rep}^{\mathrm{integrable}}_{\bC(q)}(U_q\mathfrak g)$ the category of ind-finite-dimensional representations of a quantum group at generic quantum parameter $q$, or for $\cA = \cat{Rep}_\bC^{\mathrm{algebraic}}(G)$ the category of algebraic representations of a reductive group when $\bC$; the dualizable objects are then the finite-dimensional modules.  The conditions fail for (non-semisimplified) quantum groups at roots of unity and for  finite groups over fields of characteristic dividing the order of the group.

\begin{example}
Let's say that a \define{trivial} ribbon category is the category $\cat{Mod}^{\mathrm{fd}}_A$ of finitely generated projective modules over a commutative $\bK$-algebra $A$ with $\otimes = \otimes_A$, or some ribbon subcategory thereof.  
The monoidal category $(\cat{Mod}^{\mathrm{fd}}_A,\otimes_A)$ admits a unique braiding and unique full twist, since it is generated under direct sums and passing to direct summands by the monoidal unit.
All ribbon subcategories of $\cat{Mod}^{\mathrm{fd}}_A$ determine the same Heisenberg-picture field theory valued in $\cat{Cocomp}_\bK$, since they have canonically identified free cocompletions.  
The 1-affine Heisenberg-picture field theory $\widehat{\int_\Box\cC}$ is 0-affine if and only if $\cC$ is trivial.
\end{example}

\begin{example}
  If $G$ is a reductive group over $\bK = \bC$ and $N$ is a surface, then $\widehat{\int_N \cat{Rep}^{\mathrm{fd}}_\bC(G)} \simeq \cat{QCoh}(\mathrm{Loc}_G(N))$, where $\cat{QCoh}$ is the category of quasicoherent sheaves of $\cO$-modules and $\mathrm{Loc}_G(N)$ is the stack of $G$-local systems, presented by the groupoid $\hom(\pi_1(N),G)/G$ \cite{MR3847209,MR3874702}.  This can be proven by showing that $N \mapsto \cat{QCoh}(\hom(\pi_1(N),G)/G)$ satisfies excision.
  
  An object  $X = \sqcup \{X_i\} \in \int_N \cat{Rep}^{\mathrm{fd}}(G)$ represents the following sheaf on $\mathrm{Loc}_G(N)$: given a local system on $N$, restrict it at each of the small intervals $X_i : [0,1] \to N$ in $X$; tensor each restriction with the corresponding $G$-module to get a flat vector bundle on the $i$th interval; tensor together the spaces of flat sections of each of these bundles.
  
  Suppose that $M$ is a three-manifold with boundary $\partial M$, read as a morphism $M : \emptyset \to \partial M$.  Then $\widehat{\int_M \cC} : \cat{Vect}_\bK \to \widehat{\int_{\partial M}\cC} \simeq \cat{QCoh}(\mathrm{Loc}_G(\partial))$ is the pushforward of the structure sheaf $\cO(\mathrm{Loc}_G(M))$ along the restriction map $\mathrm{Loc}_G(M) \to \mathrm{Loc}_G(\partial M)$.
  \end{example}
  
\begin{example}
  Let $\cC = \cat{TL}$ be the well-known \define{Temperley--Lieb} category, which is a ribbon category defined over $\bK = \bZ[q,q^{-1}]$ whose $\bC$-linearization is the monoidal subcategory of $\cat{Rep}_\bC(U_q\mathfrak{sl}_2)$ generated by the defining module.  The Heisenberg-picture field theory $\int_\Box \cat{TL}$ packages together all of Kauffman-bracket skein theory.
\end{example}

\begin{remark}
  Using the same skein-theoretic technology, one can show:
  \begin{itemize}
  \item Every $\bK$-linear braided monoidal locally presentable category in which the unit object is compact projective and every object is a colimit of dualizable objects defines a framed Heisenberg-picture field theory $\cat{Bord}_3^{\mathrm{fr}} \to \cat{Alg}_2^{\mathrm{lax}}(\cat{Cocomp}_\bK)$.
  \item Every $\bK$-linear monoidal locally presentable category  in which the unit object is compact projective and every object is a colimit of dualizable objects defines a framed Heisenberg-picture field theory $\cat{Bord}_2^{\mathrm{fr}} \to \cat{Alg}_2^{\mathrm{lax}}(\cat{Cocomp}_\bK)$.
  \item Every $\bK$-linear monoidal locally presentable category  in which the unit object is compact projective and every object is a colimit of dualizable objects, which is additionally equipped with a ``pivotal'' structure, defines an oriented Heisenberg-picture field theory $\cat{Bord}_2^{\mathrm{or}} \to \cat{Alg}_2^{\mathrm{lax}}(\cat{Cocomp}_\bK)$.
  \end{itemize}
  After a preprint of this paper originally circulated, \cite{BJS18} proved versions of the above results using the Cobordism Hypothesis rather than skein theory.
\end{remark}

%\bibliography{ReferencesWithLinks}{}
%\bibliographystyle{alpha}

\end{document}